\title{Locally Correct Interleavings between Merge Trees}
\author{Thijs Beurskens}{Department of Mathematics and Computer Science, TU Eindhoven, The Netherlands}{t.p.j.beurskens@tue.nl}{https://orcid.org/0009-0004-1185-1545}{Supported by the Dutch Research Council (NWO) under project no. OCENW.M20.089.}
\author{Tim Ophelders}{Department of Information and Computing Science, Utrecht University, The Netherlands \and Department of Mathematics and Computer Science, TU Eindhoven, The Netherlands}{t.a.e.ophelders@uu.nl}{https://orcid.org/0000-0002-9570-024X}{Supported by the Dutch Research Council (NWO) under project no. VI.Veni.212.260.}
\author{Bettina Speckmann}{Department of Mathematics and Computer Science, TU Eindhoven, The Netherlands}{b.speckmann@tue.nl}{https://orcid.org/0000-0002-8514-7858}{}
\author{Kevin Verbeek}{Department of Mathematics and Computer Science, TU Eindhoven, The Netherlands}{k.a.b.verbeek@tue.nl}{https://orcid.org/0000-0003-3052-4844}{}
\authorrunning{T. Beurskens, T. Ophelders, B. Speckmann, K. Verbeek} 
\keywords{Interleaving distance, merge trees, local correctness, matchings} 
\newcommand{\etal}{\emph{et al.}}
\newcommand{\bigO}{\mathcal{O}}
\newcommand{\dom}{\mathrm{dom}}
\newcommand{\an}[2]{#1|^{#2}}
\newcommand{\shift}[1]{\sigma_{#1}}
\newcommand{\intdist}[1]{d^{{#1}}}
\newcommand{\resdist}[1]{d_{#1}}
\newcommand{\eps}{\varepsilon}
\renewcommand{\phi}{\varphi}
\newcommand{\I}{\mathcal{I}}
\renewcommand{\P}{\mathcal{P}}
\newcommand{\Q}{\mathcal{Q}}
\newcommand{\R}{\mathcal{R}}
\renewcommand{\mid}{:}
\newcommand{\enumit}[1]{\textcolor{darkgray}{\sffamily\bfseries\upshape\mathversion{bold}#1}}
\begin{document}
\maketitle

\begin{abstract}
Temporal sequences of terrains arise in various application areas. To analyze them efficiently, one generally needs a suitable abstraction of the data as well as a method to compare and match them over time. In this paper we consider merge trees as a topological descriptor for terrains and the interleaving distance as a method to match and compare them. An \emph{interleaving} between two merge trees consists of two maps, one in each direction. These maps must satisfy ancestor relations and hence introduce a ``shift'' between points and their image. An optimal interleaving minimizes the maximum shift; the interleaving distance is the value of this shift. However, to study the evolution of merge trees over time, we need not only a number but also a meaningful matching between the two trees. The two maps of an optimal interleaving induce a matching, but due to the bottleneck nature of the interleaving distance, this matching fails to capture local similarities between the trees. In this paper we hence propose a notion of local optimality for interleavings. To do so, we define the \emph{residual interleaving distance}, a generalization of the interleaving distance that allows additional constraints on the maps. This allows us to define \emph{locally correct interleavings}, which use a range of shifts across the two merge trees that reflect the local similarity well. We give a constructive proof that a locally correct interleaving always exists.
\end{abstract}

\section{Introduction}\label{sec:introduction}
Terrains that vary over time are a common data type in various application areas, such as scientific computing or geographic information science. Typically such data sets are very large; to analyze them efficiently one often needs a compact abstraction that captures salient features. Topological data analysis (TDA) provides several \emph{topological descriptors}~\cite{yan2021scalar}, such as persistence diagrams, Reeb graphs, and Morse-Smale complexes, that can serve as abstractions for 2D, 3D, or higher-dimensional terrains.
In this paper we focus in particular on \emph{merge trees}, which are graph-based descriptors that encode the evolution of connected components of sublevel or superlevel sets of a terrain (see Figure~\ref{fig:merge-tree-terrain}).

\begin{figure}
    \centering
    \includegraphics{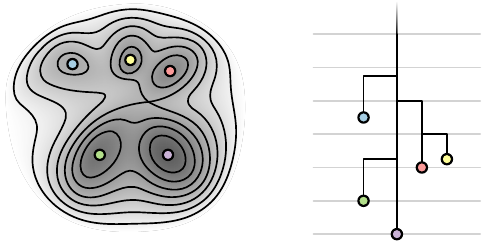}
    \caption{A terrain and its merge tree; leaves and internal vertices represent minima and saddles.}
    \label{fig:merge-tree-terrain}
\end{figure}

To analyze temporal sequences of merge trees one does not only need a compact representation, but also efficient ways to compare and match them over time. For merge trees in particular, several distance measures have been proposed, including the edit distance~\cite{sridharamurthy2020edit}, the Wasserstein distance~\cite{pont2022wasserstein}, and the interleaving distance~\cite{morozov2013interleaving}. However, a distance measure returns only a numeric value that quantifies the similarity of two merge trees~$T_1$ and~$T_2$. In many application scenarios, we would also like to track the merge trees over time and hence need to understand which parts of $T_1$ most closely correspond to parts of $T_2$. That is, we are interested in a matching between $T_1$ and~$T_2$.

The interleaving distance in fact relies on a type of matching between the two trees, the so-called \emph{interleaving}.
An interleaving is a pair of ancestor-preserving maps, one from~$T_1$ to~$T_2$ and one from~$T_2$ to~$T_1$, whose compositions send points to ancestors: if a point~$x$ of~$T_1$ is mapped to a point~$y$ of~$T_2$, then~$y$ must be mapped to an ancestor of~$x$, and vice versa.
The \emph{interleaving distance} is the maximum \emph{shift}---the distance between a point and its image---minimized over all interleavings between the trees. We give exact definitions of the interleaving distance and interleavings in Sections~\ref{sec:preliminaries} and~\ref{sec:generalizing-interleaving-distance}.

\begin{figure}[b]
    \centering
    \includegraphics{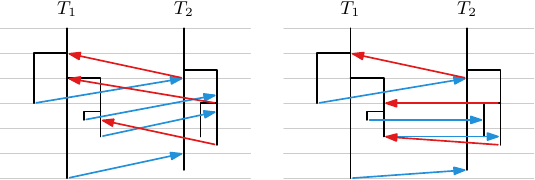}
    \caption{Two optimal interleavings between merge trees~$T_1$ and~$T_2$. Intuitively, the right interleaving induces a tighter and more meaningful matching than the left interleaving.}
    \label{fig:good-vs-bad-optimal-interleavings}
\end{figure}

Two merge trees typically admit many optimal interleavings.
The interleaving distance is a bottleneck measure and hence a given optimal interleaving might be locally quite ``loose'': the optimal shift might exceed what is needed and thereby fail to capture local similarities between the trees; see Figure~\ref{fig:good-vs-bad-optimal-interleavings}. Inspired by the concept of locally correct Fréchet matchings for curves~\cite{buchin2019locally}, in the following we are hence developing a notion of locally optimal interleavings which allows us to distinguish between ``tight'' and ``loose'' optimal interleavings. 

\subparagraph{Contributions.}
A locally correct Fréchet matching is optimal even when considering any two matched subcurves.
This is a desirable feature, however, it cannot be directly reproduced using interleavings.
First of all, there might not even be a one-to-one correspondence between the points of~$T_1$ and~$T_2$. 
Furthermore, two optimal interleavings between different pairs of subtrees do not necessarily combine into an optimal interleaving between the two complete trees. 
As a first step we hence introduce a novel generalization of the interleaving distance that can handle additional constraints on the maps. 
Specifically, we aim to use constraints that force a point~$x$ to be mapped to an ancestor of a particular point~$y$ within the other tree.
Over interleavings that satisfy such constraints, we aim to minimize the maximum shift over points~$x$ that are either unconstrained or map to a strict ancestor of their corresponding point~$y$.
We refer to the resulting distance as the \emph{residual interleaving distance}.
In Section~\ref{sec:generalizing-interleaving-distance}, we provide formal definitions and show that an optimal interleaving exists for any constraints.

In Section~\ref{sec:locally-correct-interleavings} we then propose a definition for \emph{locally correct interleavings} that builds upon the residual interleaving distance. 
Specifically, we define a restriction of an interleaving~${\I = (\alpha, \beta)}$ to two subsets~$S_1$ of~$T_1$ and~$S_2$ of~$T_2$ as the restriction of~$\alpha$ to~$S_1$ together with the restriction of~$\beta$ to~$S_2$.
We call $\I$ locally correct if it minimizes the residual interleaving distance with respect to any restriction of~$\I$.
We constructively prove that a locally correct interleaving always exists.

\subparagraph{Related work.}
The interleaving distance between merge trees was defined by Morozov \etal~\cite{morozov2013interleaving} as a tool to study the stability of merge trees.
Agarwal \etal~\cite{agarwal2018computing} establish a connection between the interleaving distance and the Gromov-Hausdorff distance, and they argue that approximating the interleaving distance within a factor of~$3$ is NP-hard.
They also describe an~$\bigO(\min(n, \sqrt{rn}))$-approximation algorithm, where~$n$ is the total size of the trees and~$r$ is the ratio of the longest to the shortest edge.
Touli and Wang~\cite{touli2022fpt} provide an equivalent definition of the interleaving distance in terms of a single map and use this characterization to design an FPT-algorithm for computing the interleaving distance exactly.
More recently, Gasparovic \etal~\cite{gasparovic2024intrinsic} show that the interleaving distance can be formulated in terms of a related metric~\cite{munch2019the} for labeled merge trees, also known as phylogenetic trees.

Several attempts have been made to develop versions of the interleaving distance that can be used in practical applications.
In particular, the work by Gasparovic \etal~\cite{gasparovic2024intrinsic} gave rise to heuristic algorithms for computing the interleaving distance~\cite{curry2022decorated} and for comparing terrains~\cite{yan2022geometry, yan2020structural}.
Pegoraro~\cite{pegoraro2025graph-matching} uses yet another reformulation of the interleaving distance, in terms of couplings, as the basis of a heuristic for computing the distance via linear integer programming.
Lastly, Beurskens \etal~\cite{beurskens2025relating} introduce a variant of the interleaving distance that imposes total orders on the merge trees, which can be computed efficiently via a connection to the Fréchet distance between one-dimensional curves.

\section{Preliminaries}\label{sec:preliminaries}
A \emph{merge tree} is a pair~$(T, f)$, where~$T$ is a finite rooted tree, and~$f \colon T \to \mathbb{R} \cup \{\infty\}$ is a continuous \emph{height function} on $T$ that is strictly increasing towards the root and such that the root has height~$\infty$.
We identify~$T$ with its \emph{topological realization}: the topological space where each edge of~$T$ is represented by a unit segment and connections are according to the adjacencies of~$T$.
We distinguish elements of the topological realization $T$, which we refer to as \emph{points} of~$T$, from vertices of the combinatorial tree, which we denote by~$V(T)$.
When clear from context, we simply write~$T$ to refer to the pair~$(T, f)$.

For two points~$x_1, x_2$ of~$T$, we say~$x_1$ is a \emph{descendant} of~$x_2$, written $x_1 \preceq x_2$, if there exists an~$f$-monotonically increasing path in~$T$ from~$x_1$ to~$x_2$.
If furthermore~$x_1 \neq x_2$ then~$x_1$ is a \emph{strict descendant} of~$x_2$.
We say~$x_2$ is a \emph{(strict) ancestor} of~$x_1$ if~$x_1$ is a (strict) descendant of~$x_2$.
Lastly, for a point~$x$ and a height value~$h \ge f(x)$, we use~$\an{x}{h}$ to denote the unique ancestor of~$x$ at height~$f(\an{x}{h}) = h$ (see Figure~\ref{fig:merge-tree-interleaving} left).

\begin{figure}
	\centering
	\includegraphics{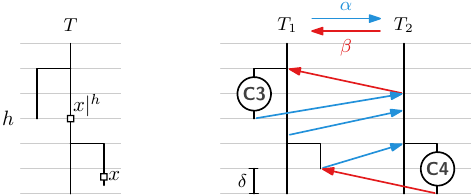}
	\caption{Left: a merge tree~$T$; the point~$\an{x}{h}$ is the ancestor of~$x$ at height~$h$. Right: parts of a pair of~$\delta$-compatible maps.}
	\label{fig:merge-tree-interleaving}
\end{figure}

\subparagraph{Compatible maps.}
The original definition of the interleaving distance between two merge trees~$(T_1, f_1)$ and~$(T_2, f_2)$ was given by Morozov \etal~\cite{morozov2013interleaving}.
\begin{definition}[Morozov \etal~\cite{morozov2013interleaving}]\label{def:interleaving-distance}
    Fix a value~$\delta \ge 0$.
	Two continuous maps~$\alpha \colon T_1 \to T_2$ and~$\beta \colon T_2 \to T_1$ are~$\delta$-\emph{compatible} between~$T_1$ and~$T_2$ if for all~$x \in T_1$ and all~$y \in T_2$:
	
	\begin{multicols}{2}
		\begin{enumerate}[({C}1)]
			\item $f_2(\alpha(x)) = f_1(x) + \delta$,
			\item $f_1(\beta(y)) = f_2(y) + \delta$,
			\item $\beta(\alpha(x)) = \an{x}{f_1(x) + 2\delta}$, and
			\item $\alpha(\beta(y)) = \an{y}{f_2(y) + 2\delta}$.
		\end{enumerate}
	\end{multicols}
	
	\noindent
	The \emph{interleaving distance}~$\intdist{}(T_1, T_2)$ between~$T_1$ and~$T_2$ is the infimum~$\delta$ for which there exist~$\delta$-compatible maps (see Figure~\ref{fig:merge-tree-interleaving} right).
\end{definition}

\subparagraph{Critical values.}
Both Agarwal \etal~\cite{agarwal2018computing} and Touli and Wang~\cite{touli2022fpt} have shown that the interleaving distance between any two merge trees attains a value in a set of \emph{critical values}.
Specifically, let~$\Delta \coloneqq \Delta(T_1, T_2)$ be the union of three sets~$\Delta_1$, $\Delta_2$, and~$\Delta_3$, with
\begin{align}\label{eq:critical-values}
	\Delta_1 &\coloneqq \{|f_1(v) - f_2(w)| \mid v \in V(T_1), w \in V(T_2)\},\nonumber\\
	\Delta_2 &\coloneqq \{\tfrac{1}{2} |f_1(v_1) - f_1(v_2)| \mid v_1, v_2 \in V(T_1)\},\\ 
	\Delta_3 &\coloneqq \{\tfrac{1}{2} |f_2(w_1) - f_2(w_2)| \mid w_1, w_2 \in V(T_2)\}.\nonumber
\end{align}

\begin{lemma}[\cite{agarwal2018computing, touli2022fpt}]\label{lem:critical-values}
	For any two merge trees~$T_1$ and~$T_2$, it holds that $\intdist{}(T_1, T_2) \in \Delta(T_1, T_2)$.
\end{lemma}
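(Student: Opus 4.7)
The plan is to establish two things: first, that the infimum $\delta^* := \intdist{}(T_1, T_2)$ is attained by some pair of $\delta^*$-compatible maps, and second, that $\delta^*$ must lie in $\Delta$. Attainment is a standard compactness argument: given a minimizing sequence $(\alpha_n, \beta_n)$ of $\delta_n$-compatible maps with $\delta_n \searrow \delta^*$, conditions~(C1) and~(C2) together with the monotone-path characterization of distances in a merge tree make the maps uniformly 1-Lipschitz with respect to the path metric on the trees, and the trees themselves are compact. Arzelà--Ascoli thus yields a subsequence converging uniformly to continuous maps $(\alpha^*, \beta^*)$, for which (C1)--(C4) hold at $\delta^*$ by continuity.

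For the critical value claim I would argue by contradiction. Suppose $\delta^* \notin \Delta$; then there is $\eta > 0$ such that the interval $I = (\delta^* - \eta, \delta^* + \eta)$ is disjoint from $\Delta$. The goal is to construct, from the $\delta^*$-compatible maps $(\alpha^*, \beta^*)$, a family of $\delta$-compatible maps $(\alpha_\delta, \beta_\delta)$ for every $\delta \in I$; taking any $\delta < \delta^*$ then contradicts the minimality of $\delta^*$. The construction proceeds vertex by vertex. For each $v \in V(T_1)$, the point $\alpha^*(v)$ lies on some edge of $T_2$ at height $f_1(v) + \delta^*$; disjointness of $I$ from $\Delta_1$ ensures that the target height $f_1(v) + \delta$ remains strictly between the heights of consecutive vertices of $T_2$ along that edge, so we can slide the image along the edge to define $\alpha_\delta(v)$, and symmetrically for $\beta_\delta$ on $V(T_2)$. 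The maps are then extended to the full trees in the canonical height-preserving way along edges, which is well-defined because each edge of $T_1$ has images of its endpoints on a common monotone path in $T_2$.

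The main obstacle---and the place where $\Delta_2$ and $\Delta_3$ become essential---is verifying that the extended maps still satisfy (C3) and (C4). Condition~(C3) demands $\beta_\delta(\alpha_\delta(v)) = \an{v}{f_1(v) + 2\delta}$ for every $v \in V(T_1)$, so the target ancestor must remain on the same edge of $T_1$ as at $\delta = \delta^*$. Disjointness of $I$ from $\Delta_2$ ensures that $f_1(v) + 2\delta$ does not coincide with the height of any other vertex of $T_1$ for $\delta \in I$, so the combinatorial type of this ancestor is preserved; $\Delta_3$ plays the symmetric role for (C4). Once the combinatorial structure of all vertex images is fixed throughout $I$, the compatibility conditions reduce to height equations that the sliding construction satisfies by design, completing the contradiction and yielding $\delta^* \in \Delta$.
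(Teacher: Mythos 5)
Your proposal is essentially correct, and its second half is the same construction the paper itself relies on --- note, however, that the paper does not actually prove Lemma~\ref{lem:critical-values} (it cites Agarwal~\etal\ and Touli--Wang); the in-paper analogue of your argument is Lemma~\ref{lem:existence-optimal-interleaving}, where each vertex image is pushed down along its edge to height $f_1(v)+\delta$, with $\Delta_1$ guaranteeing the slid point stays at or above the lower endpoint of its edge and $\Delta_2,\Delta_3$ guaranteeing the compositions still land on the correct edge. The genuine difference is your first step: you secure attainment of the infimum via Arzel\`a--Ascoli before improving, whereas the paper sidesteps compactness entirely by taking maps with shift $\delta^*+\eps$ strictly between $\delta^*=\intdist{}(T_1,T_2)$ and the next critical value and pushing them down to the largest critical value $\delta_1\le\delta^*$; since compatible maps with shift $\delta_1$ then exist, $\delta^*=\delta_1\in\Delta$, and attainment (Theorem~\ref{thm:existence-optimal-interleaving}) comes for free. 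The paper's route buys you two things: the compactness step becomes unnecessary, and you avoid its delicate points --- the root sits at height $\infty$, so the tree is not compact in the height-induced path metric, while in the unit-segment realization the maps are not obviously uniformly Lipschitz, so your Arzel\`a--Ascoli application would need a truncation argument you do not supply. Finally, the sentence ``the compatibility conditions reduce to height equations that the sliding construction satisfies by design'' glosses the one verification that actually takes work: you must check that $\beta_\delta(\alpha_\delta(v))$ is a descendant of $\beta^*(\alpha^*(v))$ (because sliding moves each image down along its own edge, so $\alpha_\delta(x)\preceq\alpha^*(x)$ and $\beta_\delta(y)\preceq\beta^*(y)$ pointwise, and $\beta^*$ preserves ancestors) and lies at height $f_1(v)+2\delta\ge f_1(v')$, where $v'$ is the lower endpoint of the edge containing $\beta^*(\alpha^*(v))$; this is precisely where $\Delta_2$ enters and is the bulk of the paper's proof of Lemma~\ref{lem:existence-optimal-interleaving}, so you should spell it out rather than appeal to ``design''.
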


\section{Generalizing the Interleaving Distance}\label{sec:generalizing-interleaving-distance}
The interleaving distance is based on a pair of maps.
As noted by Agarwal \etal~\cite{agarwal2018computing}, we can relax the requirements on these maps to allow varying height differences between points and their images.
Building on this relaxed definition, we introduce a generalized version of the interleaving distance that accommodates constraints on the maps we consider.
Specifically, each constraint specifies that a point~$x$ must be mapped to an ancestor of a particular point~$y$ within the other tree.
Our goal is then to find the optimal pair of maps that ``extends'' a given set of constraints.
We call the corresponding distance the \emph{residual~interleaving distance}.
To define it formally, we first introduce \emph{partial up-maps}, which represent the allowed maps between the two merge trees.
In Section~\ref{sec:complete-interleavings} we relax Definition~\ref{def:interleaving-distance} using partial up-maps.
In Section~\ref{sec:residual-interleaving-distance} we then give the formal definition of the residual interleaving distance.

\subparagraph{Partial up-maps.}
Let~$(T_1, f_1)$ and~$(T_2, f_2)$ be two merge trees.
An \emph{arrow} from~$T_1$ to~$T_2$ is a pair of points~$(x, y) \in T_1 \times T_2$ such that~$f_1(x) \le f_2(y)$.
The \emph{shift} of an arrow~$(x, y)$ is the height difference between~$x$ and~$y$, denoted~$\shift{}(x, y) \coloneqq f_2(y) - f_1(x)$.
Consider a subset of points~$S \subseteq T_1$.
A map~$\phi \colon S \to T_2$ is a \emph{partial up-map} from~$S$ to~$T_2$ if (i)~$f_2(\phi(x)) \ge f_1(x)$ for all points~$x$ of~$S$ and (ii) it \emph{preserves ancestors}, that is, $x_1 \preceq x_2$ implies~$\phi(x_1) \preceq \phi(x_2)$ for all points~$x_1, x_2$ of~$S$.
We use~$\dom(\phi) \coloneqq S$ to denote the \emph{domain} of~$\phi$.
For each point~$x$ of~$S$, the pair~$(x, \phi(x))$ is an arrow.
We denote the set of arrows of~$\phi$ by~$A_\phi \coloneqq \{(x, \phi(x)) \mid x \in S\}$.
We define the \emph{shift} of the partial up-map~$\phi$ as the supremum shift over all arrows of~$\phi$:
\begin{equation}\label{eq:shift}
    \shift{}(\phi) \coloneqq \sup\{\shift{}(a) \mid a \in A_\phi\}.
\end{equation}

Given an arrow~$(x, y)$ from~$S$ to~$T_2$, we say~$\phi$ \emph{extends}~$(x, y)$ if~$\phi(x) \succeq y$, and we say~$\phi$ \emph{uses}~$(x, y)$ if~$\phi(x) = y$.
Similarly, for a subset~$S' \subseteq S$ and a partial up-map~$\phi'$ from~$S'$ to~$T_2$, we say~$\phi$ \emph{extends} or \emph{uses}~$\phi'$ if, respectively,~$\phi$ extends or uses all arrows of~$\phi'$ (Figure~\ref{fig:partial-maps} left).

For any~$\delta \ge 0$, we can define a partial up-map~$\phi^\uparrow[\delta]$ from~$S$ to~$T_2$ that extends~$\phi$ and such that the shift of each arrow is at least~$\delta$.
Specifically, for each point~$x$ of~$S$, let~$h_x \coloneqq \max(f_1(x) +\delta, f_2(\phi(x)))$ and define~$\phi^\uparrow[\delta](x) \coloneqq \an{\phi(x)}{h_x}$ (see Figure~\ref{fig:partial-maps} middle).

\begin{lemma}\label{lem:lifted}
	For any~$\delta \ge 0$, the map~$\phi^\uparrow[\delta]$ is a partial up-map that extends~$\phi$.
\end{lemma}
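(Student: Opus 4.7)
The plan is to verify directly the three conditions required: that $\phi^\uparrow[\delta](x)$ is well-defined for each $x \in S$, that it satisfies properties (i) and (ii) of a partial up-map, and that $\phi^\uparrow[\delta](x) \succeq \phi(x)$ for each $x \in S$ (which establishes that $\phi^\uparrow[\delta]$ extends $\phi$). Well-definedness is immediate since $h_x = \max(f_1(x) + \delta, f_2(\phi(x))) \geq f_2(\phi(x))$, so $\phi(x)$ has a (unique) ancestor at height $h_x$, namely $\phi^\uparrow[\delta](x)$.

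Property (i) is also immediate: by construction $f_2(\phi^\uparrow[\delta](x)) = h_x \geq f_1(x) + \delta \geq f_1(x)$. The extension claim follows equally directly: since $h_x \geq f_2(\phi(x))$, the point $\phi^\uparrow[\delta](x) = \an{\phi(x)}{h_x}$ is an ancestor of $\phi(x)$, so $\phi^\uparrow[\delta]$ extends the arrow $(x, \phi(x))$ for every $x \in S$, and therefore extends $\phi$.

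The main work is the ancestor preservation in (ii). Suppose $x_1 \preceq x_2$ in $T_1$. Because $\phi$ preserves ancestors, $\phi(x_1) \preceq \phi(x_2)$, and because height increases toward the root, $f_1(x_1) \leq f_1(x_2)$ and $f_2(\phi(x_1)) \leq f_2(\phi(x_2))$. Taking the coordinatewise maximum, $h_{x_1} \leq h_{x_2}$. I then use the following tree fact: if $y_1 \preceq y_2$ in a merge tree and $f_2(y_2) \leq h_2$ with $f_2(y_1) \leq h_1 \leq h_2$, then $\an{y_1}{h_1} \preceq \an{y_2}{h_2}$. The case analysis splits on whether $h_{x_1} \leq f_2(\phi(x_2))$ or not: in the first case $\an{\phi(x_1)}{h_{x_1}}$ lies on the monotone path from $\phi(x_1)$ to $\phi(x_2)$, hence is a descendant of $\phi(x_2)$ and therefore of $\an{\phi(x_2)}{h_{x_2}}$; in the second case $\an{\phi(x_1)}{h_{x_1}} = \an{\phi(x_2)}{h_{x_1}}$, which is a descendant of $\an{\phi(x_2)}{h_{x_2}}$ since $h_{x_1} \leq h_{x_2}$. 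Either way, $\phi^\uparrow[\delta](x_1) \preceq \phi^\uparrow[\delta](x_2)$, completing (ii).

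The only real subtlety, and hence the step I would present most carefully, is this ancestor-preservation argument with its case split; all other parts are unpackings of definitions.
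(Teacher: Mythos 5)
Your proof is correct and follows essentially the same route as the paper: a direct verification of property (i) and of the extension claim from $h_x \ge \max(f_1(x)+\delta, f_2(\phi(x)))$, followed by a short case analysis for ancestor preservation using height monotonicity and the comparability of ancestors of a common point. The only cosmetic difference is the case split (you split on whether $h_{x_1} \le f_2(\phi(x_2))$, the paper on whether $\phi^\uparrow[\delta](x_1) = \phi(x_1)$), which does not change the substance of the argument.
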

\begin{proof}
    Let~$\phi' \coloneqq \phi^\uparrow[\delta]$.
    By construction, we have~$f_2(\phi'(x)) \ge f_1(x)$.
	To see that~$\phi'$ preserves ancestors, take two points~$x_1, x_2$ of~$S$ such that~$x_2$ is an ancestor of~$x_1$.
	Since~$\phi$ preserves ancestors,~$\phi(x_2)$ is an ancestor of~$\phi(x_1)$.
	If~$\phi'(x_1) = \phi(x_1)$, we get~${\phi'(x_1) \preceq \phi(x_2) \preceq \phi'(x_2)}$.
	Otherwise~$\phi'(x_1)$ is a strict ancestor of~$\phi(x_1)$.
	By construction,~$f_2(\phi'(x_1)) = f_1(x_1) + \delta \le f_1(x_2) + \delta \le f_2(\phi'(x_2))$.
	Both~$\phi'(x_1)$ and~$\phi'(x_2)$ are ancestors of~$\phi(x_1)$, so~$\phi'(x_2)$ is an ancestor of~$\phi'(x_1)$.
	So~$\phi'$ is a partial up-map, and by construction also extends~$\phi$.
\end{proof}

\noindent
We define arrows and up-maps from a subset~$S$ of~$T_2$ to~$T_1$ symmetrically.

\subparagraph{Partial interleavings.}
Fix two subsets~$S_1 \subseteq T_1$ and~$S_2 \subseteq T_2$.
A \emph{partial interleaving}~$\P$ between~$S_1$ and~$S_2$ is a pair of partial up-maps~$\phi \colon S_1 \to T_2$ and~$\psi \colon S_2 \to T_1$ such that for all points~$x$ of~$S_1$ and all points~$y$ of~$S_2$ it holds that if~$\phi(x) \preceq y$ then~$\psi(y) \succeq x$, and if~$\psi(y) \preceq x$ then~$\phi(x) \succeq y$ (Figure~\ref{fig:partial-maps} right).
The \emph{shift} of~$\P$, denoted~$\shift{}(\P)$, is the maximum of the shift of~$\phi$ and the shift of~$\psi$.
We define the arrows of~$\P$ as the union of the arrows of~$\phi$ and the arrows of~$\psi$, denoted~$A_\P \coloneqq A_\phi \cup A_\psi$.
Given another partial interleaving~$\P' = (\phi', \psi')$, we say~$\P$ \emph{extends}~$\P'$ if~$\phi$ extends~$\phi'$ and~$\psi$ extends~$\psi'$.
Similarly, we say $\P$ \emph{uses}~$\P'$ if $\phi$ uses $\phi'$ and $\psi$ uses $\psi'$.
For any~$\delta \ge 0$, the pair~$(\phi^\uparrow[\delta], \psi^\uparrow[\delta])$ is a partial interleaving.

\begin{figure}
    \centering
    \includegraphics{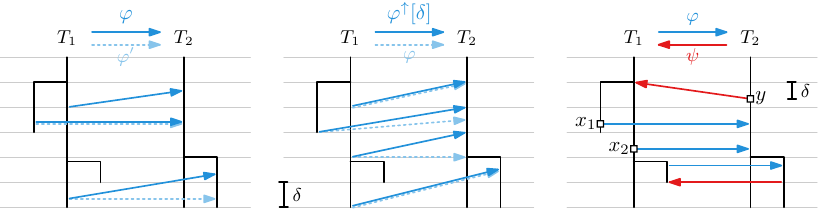}
    \caption{Left: a partial up-map $\phi$ that extends another partial up-map $\phi'$. Middle: a partial up-map~$\phi$ and the corresponding map~$\phi^\uparrow[\delta]$; each non-dashed arrow has shift at least~$\delta$. Right: a partial interleaving with shift~$\delta$; note that~$\psi(y)$ is an ancestor of both~$x_1$ and~$x_2$.}
    \label{fig:partial-maps}
\end{figure}

\begin{lemma}
	For any~$\delta \ge 0$, the pair~$(\phi^\uparrow[\delta], \psi^\uparrow[\delta])$ is a partial interleaving that extends~$(\phi, \psi)$.
\end{lemma}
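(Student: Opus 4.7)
The plan is to reduce the claim to the preceding Lemma~\ref{lem:lifted} for the two up-map conditions, and then to verify the two cross-conditions that characterize a partial interleaving purely by tracing ancestor relations. Let me write $\phi' \coloneqq \phi^\uparrow[\delta]$ and $\psi' \coloneqq \psi^\uparrow[\delta]$.

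First, Lemma~\ref{lem:lifted} (applied once to $\phi$ and once to $\psi$, swapping the roles of $T_1$ and $T_2$ for the latter) immediately tells us that $\phi'$ is a partial up-map from $S_1$ to $T_2$ extending $\phi$, and that $\psi'$ is a partial up-map from $S_2$ to $T_1$ extending $\psi$. This already handles the extension claim, and it also provides half of what is needed to call $(\phi', \psi')$ a partial interleaving, namely that both components are themselves partial up-maps.

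The remaining work is to check the two cross-conditions of a partial interleaving for $(\phi', \psi')$. Fix $x \in S_1$ and $y \in S_2$ and suppose $\phi'(x) \preceq y$. The key observation is that, since $\phi'$ extends $\phi$, we have $\phi(x) \preceq \phi'(x)$, and by transitivity $\phi(x) \preceq y$. Applying the interleaving property of the given pair $(\phi, \psi)$ then yields $\psi(y) \succeq x$, and since $\psi'$ extends $\psi$ we obtain $\psi'(y) \succeq \psi(y) \succeq x$, as required. The other cross-condition, starting from $\psi'(y) \preceq x$, is handled by an entirely symmetric argument.

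There is no genuine obstacle here: the construction of $\phi^\uparrow[\delta]$ can only raise images along the tree, so any ancestor relationship in the lifted pair is already witnessed in the original pair, at which point the interleaving hypothesis on $(\phi, \psi)$ fires. The only mild subtlety is the direction in which the extension is used (passing from the lifted map \emph{down} to $\phi$ for the hypothesis, but \emph{up} from $\psi$ to the lifted $\psi'$ for the conclusion), and this is exactly why the implication goes through without needing to invoke the explicit definition $h_x = \max(f_1(x) + \delta, f_2(\phi(x)))$ again.
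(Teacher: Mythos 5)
Your proof is correct and follows essentially the same route as the paper: invoke Lemma~\ref{lem:lifted} for the up-map and extension properties, then verify the two cross-conditions by noting that $\phi(x) \preceq \phi^\uparrow[\delta](x) \preceq y$, applying the interleaving hypothesis on $(\phi,\psi)$, and lifting back up via $\psi^\uparrow[\delta](y) \succeq \psi(y)$. No gaps.
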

\begin{proof}
    Denote~$\phi' \coloneqq \phi^\uparrow[\delta]$ and~$\psi' \coloneqq \psi^\uparrow[\delta]$.
    By Lemma~\ref{lem:lifted}, the maps~$\phi'$ and~$\psi'$ are partial up-maps that extend~$\phi$ and~$\psi$, respectively.
    Therefore, it suffices to show that the pair~$(\phi',\psi')$ is a partial interleaving.
	Let~$x$ be a point of~$S_1$ and~$y$ be a point of~$S_2$ such that~$\phi'(x) \preceq y$.
	By construction, the point~$\phi'(x)$ is an ancestor of~$\phi(x)$, so $y$ is also an ancestor of~$\phi(x)$.
	Since~$(\phi, \psi)$ is a partial interleaving, the point~$\psi(y)$ must be ancestor of~$x$.
	As~$\psi'(y)$ is an ancestor of~$\psi(y)$, we get~$\psi'(y) \succeq x$.
	Symmetrically, if~$\psi'(y) \preceq x$ then~$\phi'(x) \succeq y$.
\end{proof}

\noindent
For a partial interleaving~$\P = (\phi,\psi)$, we use~$\P.\phi$ and~$\P.\psi$ to denote~$\phi$ and~$\psi$, respectively.

\subsection{Complete Interleavings}\label{sec:complete-interleavings}
If $S_1=T_1$, we call a partial up-map $\alpha\colon S_1\to T_2$ a \emph{complete up-map}, or simply an \emph{up-map}, from $T_1$ to $T_2$.
Symmetrically, if $S_2=T_2$, we call a partial up-map $\beta\colon S_2\to T_1$ a (complete) up-map from $T_2$ to $T_1$.
Correspondingly, if $S_1=T_1$ and $S_2=T_2$, we call the partial interleaving $(\alpha,\beta)$ a \emph{(complete) interleaving}.
The defining property for a complete interleaving then becomes that for all~$x \in T_1$ it holds that~$\beta(\alpha(x)) \succeq x$ and for all~$y \in T_2$ it holds that~$\alpha(\beta(y)) \succeq x$.
In Lemma~\ref{lem:interleaving-equals-compatible}, we show that the interleaving distance is equal to the infimum~$\delta$ for which~$T_1$ and~$T_2$ admit an interleaving with shift~$\delta$.
First, it is not hard to see that any pair of~$\delta$-compatible maps form an interleaving with shift~$\delta$.
\begin{observation}\label{obs:compatible-interleaving}
    Any pair of~$\delta$-compatible maps is an interleaving with shift~$\delta$.
\end{observation}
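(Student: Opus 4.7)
The plan is to unpack the four compatibility conditions and verify every requirement in the definition of a complete interleaving with shift~$\delta$. Writing $\P = (\alpha,\beta)$, I need to check (a) $\alpha$ and $\beta$ are complete up-maps, (b) the partial interleaving property holds, and (c) $\shift{}(\P)=\delta$.

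For (a), completeness of domain is given. The height condition of an up-map, $f_2(\alpha(x)) \ge f_1(x)$, follows immediately from (C1) together with $\delta\ge 0$, and symmetrically from (C2) for $\beta$. The less automatic part is ancestor-preservation: I plan to deduce it from the continuity of $\alpha$ combined with (C1). Given $x_1 \preceq x_2$ in $T_1$, take the unique monotone path from $x_1$ to $x_2$; its image under $\alpha$ is a connected set in $T_2$ along which $f_2$ is strictly increasing (by (C1) and strict monotonicity of $f_1$ along the path), and any such connected set in a merge tree is an $f_2$-monotone path from $\alpha(x_1)$ to $\alpha(x_2)$, so $\alpha(x_1) \preceq \alpha(x_2)$. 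The same argument applies to $\beta$. This step is the one to treat most carefully, since it is where the topological structure of merge trees enters; everything else is algebraic.

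For (b), I will use (C3) and ancestor preservation. Suppose $\alpha(x)\preceq y$. Since $\beta$ preserves ancestors, $\beta(\alpha(x))\preceq\beta(y)$. By (C3), $\beta(\alpha(x)) = \an{x}{f_1(x)+2\delta}$, which is an ancestor of~$x$, hence $\beta(y)\succeq x$. The symmetric implication, using (C4) and ancestor-preservation of~$\alpha$, shows that $\psi(y)\preceq x$ implies $\phi(x)\succeq y$. Thus $(\alpha,\beta)$ satisfies the defining condition of a partial interleaving, and since $\dom(\alpha)=T_1$ and $\dom(\beta)=T_2$, it is a complete interleaving.

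For (c), (C1) and (C2) state that every arrow of $\alpha$ and of $\beta$ has shift exactly $\delta$. Hence $\shift{}(\alpha)=\shift{}(\beta)=\delta$, so by definition $\shift{}(\P)=\delta$. The whole argument is essentially bookkeeping after the continuity-based ancestor-preservation step, which I expect to be the only place that requires a non-formulaic remark.
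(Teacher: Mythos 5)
Your proposal is correct. The paper states this as an observation without proof (``it is not hard to see''), so there is no written argument to compare against; your verification is exactly the routine check that is intended. The one point that genuinely needs an argument---that the continuous maps $\alpha$ and $\beta$ preserve ancestors---you handle correctly: composing $\alpha$ with the $f_1$-monotone path from $x_1$ to $x_2$ yields, by (C1), a continuous path from $\alpha(x_1)$ to $\alpha(x_2)$ along which $f_2$ increases, which is precisely the paper's definition of $\alpha(x_1) \preceq \alpha(x_2)$; the remaining steps (height condition from (C1)--(C2) with $\delta \ge 0$, the interleaving property from (C3)--(C4) together with ancestor preservation, and the shift being exactly $\delta$) are the expected bookkeeping.
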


\noindent
Conversely, for a complete interleaving~$(\alpha, \beta)$ with shift~$\delta$, we show that there always exists a pair of~$\delta$-compatible maps.
In particular, the pair~$(\alpha^\uparrow[\delta],\beta^\uparrow[\delta])$ is~$\delta$-compatible.

\begin{lemma}\label{lem:interleaving-compatible}
For any interleaving~$(\alpha, \beta)$ with shift~$\delta$,~$\alpha^\uparrow[\delta]$ and~$\beta^\uparrow[\delta]$ are~$\delta$-compatible maps.
\end{lemma}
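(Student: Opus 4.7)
The plan is to verify each of the four $\delta$-compatibility conditions in turn, starting from the explicit formula $\phi^\uparrow[\delta](x) = \an{\phi(x)}{h_x}$ with $h_x = \max(f_1(x)+\delta,\, f_2(\phi(x)))$. The key preliminary observation is that because $(\alpha,\beta)$ has shift $\delta$, every arrow $(x,\alpha(x))$ satisfies $f_2(\alpha(x)) - f_1(x) \le \delta$, so $f_2(\alpha(x)) \le f_1(x)+\delta$, and therefore $h_x = f_1(x)+\delta$. The same holds symmetrically for $\beta$. So $f_2(\alpha^\uparrow[\delta](x)) = f_1(x)+\delta$ and $f_1(\beta^\uparrow[\delta](y)) = f_2(y)+\delta$, which is exactly (C1) and (C2).

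For (C3), write $\alpha' \coloneqq \alpha^\uparrow[\delta]$ and $\beta' \coloneqq \beta^\uparrow[\delta]$. I first compute the height of $\beta'(\alpha'(x))$ by chaining (C1) and (C2): it equals $f_2(\alpha'(x)) + \delta = f_1(x) + 2\delta$. Next I show $\beta'(\alpha'(x)) \succeq x$. Since $\alpha'$ extends $\alpha$, we have $\alpha'(x) \succeq \alpha(x)$; because $\beta'$ is a partial up-map (hence preserves ancestors), this yields $\beta'(\alpha'(x)) \succeq \beta'(\alpha(x))$; because $\beta'$ extends $\beta$, this is $\succeq \beta(\alpha(x))$; and finally the interleaving property for $(\alpha,\beta)$ gives $\beta(\alpha(x)) \succeq x$. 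So $\beta'(\alpha'(x))$ is an ancestor of $x$ at height $f_1(x)+2\delta$, and by uniqueness of ancestors at a given height it must equal $\an{x}{f_1(x)+2\delta}$, proving (C3). Condition (C4) follows by the same argument with the roles of $T_1$ and $T_2$ swapped.

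No step should be genuinely hard: (C1) and (C2) reduce to the one-line observation that the shift bound collapses the $\max$ in the definition of $h_x$, and (C3)/(C4) are then just a height computation combined with a three-step ancestor chase through the extension and ancestor-preservation properties established in Lemma~\ref{lem:lifted} and the definition of an interleaving. The only potential pitfall is mixing up which map extends which, so I would be careful to state $\alpha' \succeq \alpha$ and $\beta' \succeq \beta$ (pointwise ancestry) once up front and then apply them uniformly.
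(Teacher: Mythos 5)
Your verification of (C1)--(C4) is fine and essentially matches the paper's: the shift bound collapses the $\max$ in $h_x$, so every arrow of $\alpha^\uparrow[\delta]$ and $\beta^\uparrow[\delta]$ has shift exactly $\delta$, and the ancestor chase $\beta'(\alpha'(x)) \succeq \beta'(\alpha(x)) \succeq \beta(\alpha(x)) \succeq x$ together with the height computation and uniqueness of the ancestor at a given height gives (C3), with (C4) symmetric.

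However, there is a genuine gap: Definition~\ref{def:interleaving-distance} requires $\delta$-compatible maps to be \emph{continuous}, and you never verify continuity of $\alpha^\uparrow[\delta]$ and $\beta^\uparrow[\delta]$. This is not a formality. Partial up-maps are only required to be height-nondecreasing and ancestor-preserving, and such maps can be discontinuous (e.g.\ an up-map whose shift jumps from $0$ to $\delta$ across a point of an edge still preserves ancestors and has shift at most $\delta$), so the hypothesis ``$(\alpha,\beta)$ is an interleaving with shift $\delta$'' does not by itself give continuous maps. The whole point of this lemma in the paper is to bridge the relaxed up-map formulation back to Morozov et al.'s definition, and accordingly the bulk of the paper's proof is devoted to showing that $\alpha^\uparrow[\delta]$ is continuous: since the lifted map sends every point to a point exactly $\delta$ higher, nearby points $x, x'$ connected by an $f_1$-monotone path have images connected by an $f_2$-monotone path (by ancestor preservation) with $|f_2(\alpha'(x')) - f_2(\alpha'(x))| = |f_1(x') - f_1(x)|$, which bounds their distance. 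Your argument needs this step added; once it is, the rest goes through as you wrote it.
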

\begin{proof}
    Denote~$\alpha' \coloneqq \alpha^\uparrow[\delta]$ and~$\beta' \coloneqq \beta^\uparrow[\delta]$.
	We first show that~$\alpha'$ is continuous; proving that~$\beta'$ is continuous is analogous.
	Let~$x$ be a point of~$T_1$ and consider a small neighborhood~$X$ of~$x$ in which each point~$x'$ is reachable from~$x$ by an~$f_1$-monotone path.
	To prove continuity of~$\alpha'$ at~$x$, we show that there exists an~$f_2$-monotone path between~$\alpha'(x)$ and~$\alpha'(x')$ and that the height difference between~$\alpha'(x)$ and~$\alpha'(x')$ is bounded.
    By construction and choice of~$\delta$, it follows that~$\alpha'$ maps each point of~$T_1$ to a point of~$T_2$ exactly~$\delta$ higher.
    In particular, we obtain~$|f_2(\alpha'(x')) - f_2(\alpha'(x))| = |f_1(x') - f_1(x)|$.
	As~$\alpha'$ is an up-map, we know that~$x \preceq x'$ implies~$\alpha'(x) \preceq \alpha'(x')$, and that~$x' \preceq x$ implies~$\alpha'(x') \preceq \alpha'(x)$.
    In other words, there exists an~$f_2$-monotone path between~$\alpha'(x)$ and~$\alpha'(x')$.
	So~$\alpha'$ is continuous.
	
	Lastly, as~$(\alpha', \beta')$ is an interleaving, and the shift of each arrow is exactly~$\delta$, it follows that~$\beta'(\alpha'(x)) = \an{x}{f_1(x) + 2\delta}$ for all~$x \in T_1$, and~$\alpha'(\beta'(y)) = \an{y}{f_2(y) + 2\delta}$ for all~$y \in T_2$.
\end{proof}

\noindent
Combining Observation~\ref{obs:compatible-interleaving} and Lemma~\ref{lem:interleaving-compatible} we obtain Lemma~\ref{lem:interleaving-equals-compatible}.

\begin{restatable}{lemma}{interleavingdistanceinterleavings}
\label{lem:interleaving-equals-compatible}
	$\intdist{}(T_1, T_2) = \inf\{\delta \mid T_1 \text{ and } T_2 \text{ admit a complete interleaving with shift } \delta \}$.
\end{restatable}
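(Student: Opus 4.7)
The plan is to show the two inequalities separately, each of which follows by directly invoking one of the two preceding results. Let $A \coloneqq \{\delta \ge 0 \mid \text{there exist } \delta\text{-compatible maps between } T_1 \text{ and } T_2\}$ and $B \coloneqq \{\delta \ge 0 \mid T_1 \text{ and } T_2 \text{ admit a complete interleaving with shift } \delta\}$. By Definition~\ref{def:interleaving-distance}, $\intdist{}(T_1,T_2) = \inf A$, and the right-hand side of the statement is $\inf B$. It therefore suffices to establish $A \subseteq B$ and $B \subseteq A$, from which equality of the infima is immediate.

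For the inclusion $A \subseteq B$, I would take any $\delta \in A$ and apply Observation~\ref{obs:compatible-interleaving}: a $\delta$-compatible pair of maps is in particular a complete interleaving with shift~$\delta$, witnessing $\delta \in B$. This immediately gives $\inf B \le \inf A$, i.e.\ the inequality $\inf B \le \intdist{}(T_1,T_2)$.

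For the inclusion $B \subseteq A$, I would take any $\delta \in B$ together with a witnessing complete interleaving $(\alpha,\beta)$ of shift~$\delta$, and apply Lemma~\ref{lem:interleaving-compatible} to the pair $(\alpha,\beta)$. This produces the pair $(\alpha^\uparrow[\delta], \beta^\uparrow[\delta])$, which is $\delta$-compatible, so $\delta \in A$. Hence $\inf A \le \inf B$, i.e.\ $\intdist{}(T_1,T_2) \le \inf B$. Combining the two inclusions yields $A = B$ and therefore $\inf A = \inf B$, which is the claimed equality.

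Since both preceding results carry essentially all of the technical content (continuity of the lifted maps and the ancestor-preservation reasoning are handled inside Lemma~\ref{lem:interleaving-compatible}), the argument here is a two-line set-theoretic wrap-up and I do not expect any genuine obstacle. The only thing to be slightly careful about is that the right-hand side is an infimum over a possibly non-attained set, so I am making sure to argue at the level of the sets $A$ and $B$ rather than at the level of attained minima.
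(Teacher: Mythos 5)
Your proposal is correct and matches the paper's argument: the paper likewise obtains this lemma by combining Observation~\ref{obs:compatible-interleaving} (compatible maps give an interleaving of the same shift) with Lemma~\ref{lem:interleaving-compatible} (an interleaving of shift~$\delta$ yields $\delta$-compatible maps via the lift). Your explicit set-level comparison of the two infima is just a slightly more spelled-out version of the same two-step combination.
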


\subparagraph{Optimal interleavings.}
The interleaving distance is defined as an infimum, but both Gasparovic \etal~\cite{gasparovic2019intrinsic} and Pegoraro~\cite{pegoraro2025graph-matching} proved that the infimum can be replaced by a minimum: there always exists an interleaving with shift~$\intdist{}(T_1, T_2)$ (Theorem~\ref{thm:existence-optimal-interleaving}).
We call an interleaving that realizes the interleaving distance an \emph{optimal interleaving}.
Both proofs rely on alternative definitions of the interleaving distance, namely in terms of labellings~\cite{gasparovic2019intrinsic} and in terms of couplings~\cite{pegoraro2025graph-matching}.
We now give a more direct proof of Theorem~\ref{thm:existence-optimal-interleaving}.
To do so, we first show that for an interleaving whose shift is not equal to a critical value, there exists an interleaving whose shift is strictly smaller.

\newpage
\begin{lemma}\label{lem:existence-optimal-interleaving}
	For any complete interleaving~$\I = (\alpha, \beta)$ with shift~$\delta \notin \Delta$, there exists another interleaving~$\I'$ whose shift is equal to a critical value that is strictly less than~$\delta$.
\end{lemma}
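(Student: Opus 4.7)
The plan is to produce an interleaving whose shift equals the next critical value below~$\delta$, by continuously ``pushing down'' the given maps by a small amount that lies in no critical gap. Set $\delta^* \coloneqq \max(\Delta \cap [0,\delta))$ and $\eps \coloneqq \delta - \delta^* > 0$; this maximum exists because $0 \in \Delta_2$ (take $v_1 = v_2$) and $\Delta$ is finite, and by choice the interval $(\delta^*, \delta]$ is disjoint from $\Delta$. First I would apply Lemma~\ref{lem:interleaving-compatible} to replace $\I$ by $(\alpha^\uparrow[\delta], \beta^\uparrow[\delta])$, so that without loss of generality $(\alpha, \beta)$ is $\delta$-compatible and every arrow of $\I$ has shift exactly~$\delta$.

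Next I would construct $\alpha' \colon T_1 \to T_2$ edge by edge in~$T_1$. For an edge $e$ of $T_1$ with child $u$ and parent $v$, the restriction $\alpha|_e$ parameterizes a monotone path $P_e$ in~$T_2$ from $\alpha(u)$ to $\alpha(v)$. Because $u \in V(T_1)$ and $\delta \notin \Delta_1$, the point $\alpha(u)$ lies in the interior of some edge $\hat e$ of~$T_2$; and because $(\delta^*, \delta] \cap \Delta_1 = \emptyset$, no vertex of~$T_2$ has height in $(f_1(u) + \delta^*, f_1(u) + \delta)$, so $\hat e$ continues uninterrupted down through height $f_1(u) + \delta^*$. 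Let $\tilde P_e$ be the unique monotone extension of $P_e$ by height~$\eps$ at its lower end along~$\hat e$, and define $\alpha'(x) \coloneqq \tilde P_e(f_1(x) + \delta^*)$ for $x \in e$. Edges of $T_1$ meeting at a common vertex $v$ produce the same value of $\alpha'(v)$ because each descends from $\alpha(v)$ along the unique edge of~$T_2$ containing $\alpha(v)$. Thus $\alpha'$ is a well-defined continuous up-map with $f_2(\alpha'(x)) = f_1(x) + \delta^*$ everywhere; I would define $\beta'$ symmetrically.

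It then remains to show that $(\alpha', \beta')$ is $\delta^*$-compatible, since by Observation~\ref{obs:compatible-interleaving} this yields an interleaving of shift $\delta^*$. Conditions (C1) and (C2) hold by construction. For (C3) at a vertex $x$ of $T_1$, the original $\delta$-compatibility gives $\beta(\alpha(x)) = \an{x}{f_1(x) + 2\delta}$. Moving $y$ continuously from $\alpha(x)$ down to $\alpha'(x)$ within $\hat e$ forces $\beta(y)$ to trace a monotone descent of height~$\eps$ in $T_1$ starting at $\beta(\alpha(x))$; the hypothesis $(\delta^*, \delta] \cap \Delta_2 = \emptyset$ forbids any vertex of $T_1$ at a height that would cause this descent to branch off the ancestor path out of $x$, so $\beta(\alpha'(x)) = \an{x}{f_1(x) + \delta + \delta^*}$. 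A second, analogous application of the same critical-value argument to the final $\eps$-descent built into~$\beta'$ gives $\beta'(\alpha'(x)) = \an{x}{f_1(x) + 2\delta^*}$. For non-vertex $x$ I would reduce to the vertex case either by writing $\alpha'(x) = \alpha(\tilde x)$ for a descendant $\tilde x$ of $x$ at height $f_1(x) - \eps$ (when $\alpha'(x)$ lies in the image of $\alpha$) or by continuity along $e$; condition (C4) follows symmetrically, using $\Delta_3$ in place of~$\Delta_1$.

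The hard part will be this final verification: one must carry the composition $\beta' \circ \alpha'$ through two successive $\eps$-descents, one in $T_2$ and one in $T_1$, and argue that no saddle of either tree interferes. The critical-value hypothesis $(\delta^*, \delta] \cap \Delta = \emptyset$ is precisely what supplies the needed blockages, with $\Delta_1$ and~$\Delta_3$ preventing branching during the descents in~$T_2$ and $\Delta_2$ preventing branching during the descents in~$T_1$. Once that is established, the rest is purely continuous bookkeeping along the paths $P_e$ and their extensions.
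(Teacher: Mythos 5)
Your proposal is correct and follows essentially the same route as the paper: both push the maps down to the largest critical value $\delta^*$ below $\delta$ (your edge-by-edge construction yields exactly the paper's map $\alpha'(x) = \an{w_x}{f_1(x)+\delta^*}$), use the $\Delta_1$-gap to guarantee the descent stays on a single edge of the other tree, and use the $\Delta_2$/$\Delta_3$-gap to verify the composition conditions at vertices. For the remaining verification at non-vertex points, the paper simply invokes ancestor preservation of $\alpha'$ and $\beta'$ together with the vertex case (rather than your continuity reduction), which closes that step cleanly.
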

\begin{proof}
    Let~$\delta'$ be the greatest value in~$\Delta(T_1, T_2)$ that satisfies~$\delta' \le \delta$.
    As there are only finitely many critical values, there is such a value~$\delta'$.
    W.l.o.g., we assume that all arrows of~$\I$ have shift at least~$\delta'$.
    Otherwise, we can replace~$\alpha$ and~$\beta$ with~$\alpha^\uparrow[\delta']$ and~$\beta^\uparrow[\delta']$, respectively.
    For each point~$x$ of~$T_1$, let~$v_x \in V(T_1)$ be the lower endpoint of the edge that contains~$x$; if~$x$ is a vertex then~$v_x = x$.
    Moreover, let~$w_x$ be the lower endpoint of the edge that contains~$\alpha(v_x)$.
    The value~$f_2(w_x) - f_1(v_x)$ is a critical value, so by choice of~$\delta'$ we know that~$f_2(w_x) - f_1(v_x) \le \delta'$.
    In other words, $f_2(w_x) \le f_1(x) + \delta'$.
    Therefore, we can safely define~$\alpha'(x) = \an{w_x}{f_1(x) + \delta'}$
    We define the map~$\beta'$ from~$T_2$ to~$T_1$ symmetrically.
    See Figure~\ref{fig:construction-optimal}.

\begin{figure}
    \centering
    \includegraphics{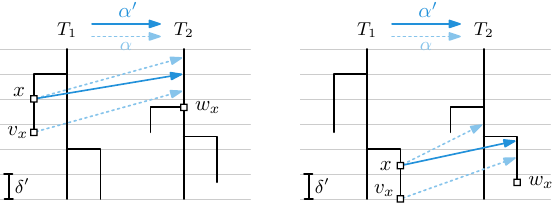}
    \caption{Illustration of the map~$\alpha'$. The image of each point~$x$ is ``pushed down'' to the correct height. On the left there is only one option, but on the right we need to choose the correct subtree.}
    \label{fig:construction-optimal}
\end{figure}

    We first argue that~$\alpha'$ is an up-map; the proof for~$\beta'$ is analogous.
    Let~$x_1$ and~$x_2$ be points of~$T_1$ such that~$x_1 \preceq x_2$.
	If~$v_{x_1} = v_{x_2}$ then we directly obtain~$\alpha'(x_1) \preceq \alpha'(x_2)$.
	Otherwise, we must have~${v_{x_1} \preceq x_1 \preceq v_{x_2}}$.
	Since~$\alpha$ preserves ancestors, it follows that~$\alpha(v_{x_1}) \preceq \alpha(v_{x_2})$ and thus~${w_{x_1} \preceq w_{x_2}}$.
	By construction, we obtain~$\alpha'(x_1) \preceq \alpha'(x_2)$.
	So indeed,~$\alpha'$ preserves ancestors.
	We additionally claim that~$\alpha'(x) \preceq \alpha(x)$ for all~$x$.
	We know that~$\alpha(x) \succeq \alpha(v_x) \succeq w_x$, so the point~$w_x$ is a descendant of both~$\alpha(x)$ and~$\alpha'(x)$.
    By construction, it holds that~$f_2(\alpha'(x)) \le \delta' \le f_2(\alpha(x))$.
    Our claim follows.
	
    It remains to show that~$(\alpha', \beta')$ is an interleaving.
	We first show that~$\beta'(\alpha'(v)) \succeq v$ for all vertices~$v$ of~$T_1$.
	Let~$v'$ be the lower endpoint of the edge that contains~$\beta(\alpha(v))$.
	Then we know that the value~$\delta^* \coloneqq \tfrac{1}{2}(f_1(v') - f_1(v))$ is a critical value that satisfies~$\delta^* \le \delta$.
	By choice of~$\delta'$ it then follows that~$\delta^* \le \delta'$.
    Let~$x \in T_1$
    For all points~$y$ of~$T_2$, we have~$\beta'(y) \preceq \beta(y)$.
    Taking~$y = \alpha'(x)$, we get~$\beta'(\alpha'(x)) \preceq \beta(\alpha'(x))$.
    Similarly, we have~$\alpha'(x) \preceq \alpha(x)$.
    Since~$\beta$ preserves ancestors, it follows that~$\beta(\alpha'(x)) \preceq \beta(\alpha(x))$.
    Combining, we get~$\beta'(\alpha'(x)) \preceq \beta(\alpha(x))$ for all~$x \in T_1$.
	In particular, for~$x = v$, we obtain~$\beta'(\alpha'(v)) \preceq \beta(\alpha(v))$.
	Moreover, by construction we know that the height of~$\beta'(\alpha'(v))$ is exactly~$f_1(v) + 2\delta' \ge f_1(v) + 2\delta^*$.
	It follows that~$\beta'(\alpha'(v))$ is an ancestor of~$v'$ and thus also of~$v$.
	Now, for an arbitrary point~$x$ of~$T_1$, consider the lower endpoint~$v_x$ of the edge that contains~$x$.
	Both~$\alpha'$ and~$\beta'$ preserve ancestors, so we get~$\beta'(\alpha'(x)) \succeq \beta'(\alpha'(v_x)) \succeq v_x$. 
	It follows that~$\beta'(\alpha'(x)) \succeq x$.
	An analogous argument shows that $\alpha'(\beta'(y)) \succeq y$ for all~$y \in T_2$.
\end{proof}

\noindent
We now argue that an optimal interleaving always exists.
Let~$\delta = \intdist{}(T_1, T_2)$, and let~$\delta_1, \delta_2 \in \Delta(T_1, T_2)$ be the greatest value that satisfies~$\delta_1 \le \delta$ and the smallest value that satisfies~$\delta_2 > \delta$, respectively.
By Lemma~\ref{lem:interleaving-equals-compatible} we know that for all~$\eps > 0$ there is an interleaving with shift~$\delta + \eps$.
In particular, for~$\eps$ small enough, there is an interleaving~$\I$ with shift~$\delta + \eps \in [\delta_1, \delta_2]$.
If~$\I$ is optimal, we are done.
Otherwise, by Lemma~\ref{lem:existence-optimal-interleaving} it follows that there exists an interleaving with shift at most~$\delta_1$, which means that~$\delta_1 = \delta$.
Theorem~\ref{thm:existence-optimal-interleaving} directly follows.

\begin{restatable}{theorem}{existenceoptimalinterleaving}
\label{thm:existence-optimal-interleaving}
	Any two merge trees~$T_1$ and~$T_2$ admit an optimal interleaving.
\end{restatable}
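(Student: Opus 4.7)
The plan is a clean corollary of the two preceding lemmas. Write $\delta \coloneqq \intdist{}(T_1,T_2)$. By Lemma~\ref{lem:interleaving-equals-compatible}, $\delta$ equals the infimum over complete interleavings of their shift, and by Lemma~\ref{lem:existence-optimal-interleaving} any interleaving whose shift lies outside the finite set $\Delta(T_1,T_2)$ can be replaced by one whose shift is a strictly smaller critical value. I would combine these to show the infimum is attained.

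First I would locate $\delta$ in the finite sorted list of critical values: let $\delta_1 \in \Delta$ be the largest critical value with $\delta_1 \le \delta$ (which exists, since $0 \in \Delta_2 \subseteq \Delta$ and $\delta \ge 0$), and let $\delta_2 \in \Delta$ be the smallest critical value with $\delta_2 > \delta$, taking $\delta_2 = \infty$ if no such value exists. The infimum characterization provides, for every $\eps > 0$, a complete interleaving $\I$ with shift $s \in [\delta, \delta + \eps]$; choosing $\eps$ small enough that $\delta + \eps < \delta_2$ guarantees $s \in [\delta, \delta_2)$.

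If $s = \delta_1$, then since also $s \ge \delta \ge \delta_1$ we conclude $\delta_1 = \delta$ and $\I$ is already optimal. Otherwise $s \in (\delta_1, \delta_2)$ is non-critical, so Lemma~\ref{lem:existence-optimal-interleaving} supplies a further interleaving whose shift is a critical value strictly less than $s$, hence at most $\delta_1$. Since every interleaving's shift is bounded below by $\delta$, this forces $\delta_1 \ge \delta$, so again $\delta_1 = \delta$ and we obtain an interleaving realizing shift exactly $\delta$.

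I do not anticipate a genuine obstacle here: the heavy lifting is already encapsulated in Lemma~\ref{lem:existence-optimal-interleaving}, which ``snaps'' non-critical shifts down to the critical grid. The only care needed is to choose $\eps$ small enough to trap the shift of $\I$ strictly below $\delta_2$, and to handle the trivial edge case where no critical value exceeds $\delta$.
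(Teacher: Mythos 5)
Your argument is correct and is essentially the paper's own proof: both locate $\delta$ between the critical values $\delta_1$ and $\delta_2$, take a near-optimal interleaving via Lemma~\ref{lem:interleaving-equals-compatible}, and invoke Lemma~\ref{lem:existence-optimal-interleaving} to snap its shift down to a critical value, forcing $\delta_1 = \delta$. Your handling of the endpoint (keeping the shift strictly below $\delta_2$) and of the case where no critical value exceeds $\delta$ is slightly more careful than the paper's phrasing, but the route is the same.
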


\subsection{Residual Interleaving Distance}\label{sec:residual-interleaving-distance}
Fix a partial interleaving~$\P = (\phi, \psi)$.
We refer to a (partial) interleaving that extends~$\P$ as a (\emph{partial})~$\P$-\emph{extension}.
We first observe that there always exists a complete~$\P$-extension.
Indeed, the maps that take every point of~$T_1$ to the root at infinity of~$T_2$, and every point of~$T_2$ to the root at~$\infty$ of~$T_1$ form a complete interleaving that extends~$\P$.

\begin{observation}\label{obs:existence-residual-matching}
	Any partial interleaving $\P$ admits a complete~$\P$-extension.
\end{observation}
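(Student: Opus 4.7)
The plan is to make the observation's hint concrete by exhibiting a specific complete $\P$-extension, namely the pair of constant maps to the roots. Writing $r_1$ and $r_2$ for the roots of $T_1$ and $T_2$, both of height $\infty$ by definition of a merge tree, I would define $\alpha \colon T_1 \to T_2$ by $\alpha(x) = r_2$ for every $x \in T_1$ and $\beta \colon T_2 \to T_1$ by $\beta(y) = r_1$ for every $y \in T_2$.

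First I would check that $\alpha$ is a (complete) up-map from $T_1$ to $T_2$, and symmetrically $\beta$ from $T_2$ to $T_1$. The height condition $f_2(\alpha(x)) = \infty \geq f_1(x)$ is immediate. Ancestor preservation is trivial because $\alpha$ is constant: for any $x_1 \preceq x_2$ in $T_1$, $\alpha(x_1) = \alpha(x_2) = r_2$, and every point is its own ancestor.

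Next I would verify that $(\alpha,\beta)$ is a (complete) interleaving. Since $\dom(\alpha) = T_1$ and $\dom(\beta) = T_2$, the partial interleaving condition specialises, as noted in Section~\ref{sec:complete-interleavings}, to the two inclusions $\beta(\alpha(x)) \succeq x$ for all $x \in T_1$ and $\alpha(\beta(y)) \succeq y$ for all $y \in T_2$. Both reduce to the fact that $r_1$ is an ancestor of every point of $T_1$ and $r_2$ is an ancestor of every point of $T_2$, which holds by definition of the root of a merge tree.

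Finally I would verify that $(\alpha,\beta)$ extends $\P = (\phi,\psi)$. For every $x \in \dom(\phi)$, the point $\alpha(x) = r_2$ is an ancestor of $\phi(x)$, so $\alpha$ extends $\phi$ in the sense defined in Section~\ref{sec:generalizing-interleaving-distance}; the argument for $\beta$ extending $\psi$ is symmetric. There is no genuine obstacle here: the observation is essentially a non-emptiness certificate guaranteeing that the set of complete $\P$-extensions is never vacuous, which is what will allow the residual interleaving distance to be defined as a minimum (or at least an infimum) over this set for every choice of constraints.
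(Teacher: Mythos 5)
Your proof is correct and follows exactly the paper's argument: the paper also certifies the observation by the constant maps sending every point to the root at infinity of the other tree, which trivially form a complete interleaving extending~$\P$. Your write-up merely spells out the routine verifications (up-map, interleaving, extension) that the paper leaves implicit.
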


\noindent
Our goal is to find a ``tightest possible'' complete~$\P$-extension, that is, an interleaving that extends~$\P$ and minimizes the shift of the remaining arrows.
For some intuition, consider the example in~Figure~\ref{fig:residual-shift}.
The map~$\phi$ consists of a single arrow~$(x, y)$.
Intuitively, the map~$\alpha_2$ describes a ``tighter'' extension of~$\phi$ than the map~$\alpha_1$.
However, if we only disregard the shift of the arrow~$(x, y)$, the shift of the remaining arrows of~$\alpha_1$ is equal to the (supremum) shift of the remaining arrows of~$\alpha_2$.
The arrow~$(x, y)$ implicitly induces a ``fan'' of arrows that all valid up-maps need to extend.
Specifically, any ancestor of~$x$ needs to be mapped to an ancestor of~$y$.
To distinguish between the maps~$\alpha_1$ and~$\alpha_2$, we therefore not only disregard the shift of~$(x, y)$, but of any arrow that is ``contained'' within such a fan.

\begin{figure}
    \centering
    \includegraphics{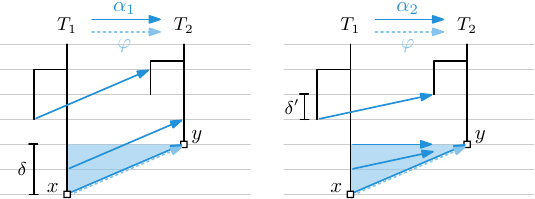}
    \caption{Illustration of the residual shift. The shifts of arrows within the fan~$F[(x, y)]$ (represented by the shaded area) are disregarded, so the~$\phi$-shift of~$\alpha_2$ is less than the~$\phi$-shift of~$\alpha_1$.}
    \label{fig:residual-shift}
\end{figure}

Formally, we define the \emph{fan} of an arrow~$(x, y)$ as the set of arrows~$(x', y)$ for all ancestors~$x'$ of~$x$ with height at most that of~$y$, written,
\[
	F[(x, y)] \coloneqq \{(x',y) \mid x' \in T_1 \text{ such that } x' \succeq x \text{ and } f_1(x') \le f_2(y)\}.
\]
The \emph{fan} of a collection of arrows~$A$, denoted~$F[A]$, is the union of all fans~$F[a]$ for~$a \in A$.

We use fans to define the~$\phi$-\emph{residual shift}, or~$\phi$-\emph{shift} for short.
Recall that we use~$A_\phi$ to denote the set of all arrows~$(x, \phi(x))$.
For an arrow~$a$ from~$T_1$ to~$T_2$, the~$\phi$-residual shift of~$a$, denoted $\shift{\phi}(a)$, is zero if~$a$ lies within~$F[A_\phi]$, and is~$\shift{}(a)$ otherwise.
Similar to the definition of shift in Eq.~\ref{eq:shift}, we define the~$\phi$-residual shift of an up-map~$\phi'$, denoted $\shift{\phi}(\phi')$, as the supremum~$\phi$-residual shift over all arrows of~$\phi'$.
That is,
\[
\shift{\phi}(x, y) \coloneqq \begin{cases}
    0 & \text{if } (x, y) \in F[A_\phi],\\
    \shift{}(x, y) & \text{else,}
\end{cases} \qquad
\shift{\phi}(\phi') \coloneqq \sup\{\shift{\phi}(a) \mid a \in A_{\phi'}\}.
\]

Consider again the example in Figure~\ref{fig:residual-shift}.
We disregard the shifts of all arrows within the fan~$F[(x, y)]$.
As a result, the~$\phi$-shift of~$\alpha_2$ is~$\delta'$, whereas the~$\phi$-shift of~$\alpha_1$ is~$\delta$.

We define the~$\psi$-\emph{(residual) shift} of a partial up-map~$\psi'$ symmetrically.
Finally, for a partial interleaving~$\P' = (\phi', \psi')$, we define the~$\P$-\emph{residual} shift, or~$\P$-\emph{shift}, of~$\P'$ as the maximum of the~$\phi$-shift of~$\phi'$ and the~$\psi$-shift of~$\psi'$, denoted~$\shift{\P}(\P')$.

\begin{definition}\label{def:constrained-interleaving-distance}
	The~$\P$-residual interleaving distance~$\resdist{\P}(T_1, T_2)$ between~$T_1$ and~$T_2$ is the infimum~$\delta$ for which there exists a complete~$\P$-extension whose~$\P$-residual shift is~$\delta$.
\end{definition}

\noindent
When they are clear from context, we omit the arguments~$T_1$ and~$T_2$ and simply write~$\resdist{\P}$ to denote the~$\P$-residual interleaving distance between~$T_1$ and~$T_2$.
We say an extension is \emph{optimal} if its residual shift is equal to the residual interleaving distance.

\subparagraph{Critical values and pairs.}
\begin{figure}
    \centering
    \includegraphics{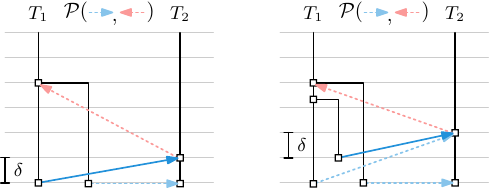}
    \caption{In both examples, the~$\P$-residual interleaving distance is determined by the non-dashed, dark blue arrow. The shift of those arrows is equal to~$\delta$, which in both cases is not a critical value for~$T_1$ and~$T_2$.
    In both cases, however,~$\delta$ is a~$\P$-critical value.
    All critical points are marked.}
    \label{fig:critical-value-set}
\end{figure}
For the remainder of this section, we assume that~$\P$ is \emph{finite}: both~$\dom(\phi)$ and~$\dom(\psi)$ consist of finitely many points.
Recall from Section~\ref{sec:preliminaries} that the non-residual interleaving distance always attains a value in the set of critical values~$\Delta \coloneqq \Delta(T_1, T_2)$.
For the residual interleaving distance, this set~$\Delta$ is not always sufficient (see Figure~\ref{fig:critical-value-set}).
Therefore, we extend~$\Delta$ to a set of~$\P$-critical values.
Specifically, we ``mark'' all vertices of~$T_1$ and~$T_2$, and all points~$x$ of~$T_1$ and~$y$ of~$T_2$ that appear in an arrow~$(x, y)$ or~$(y, x)$ of~$\P$.
We call these the~$\P$-\emph{critical points} of~$T_1$ and~$T_2$, denoted~$C_1[\P]$ and~$C_2[\P]$ respectively (see Figure~\ref{fig:critical-value-set}).
Since~$\P$ is finite, there are only finitely many critical points.
We define a set~$\Delta_1[\P]$ that contains all height differences between critical points:
\begin{align*}
	\Delta_{1}[\P] &\coloneqq \{|f_1(v) - f_2(w)| \mid v \in C_1[\P], w \in C_2[\P]\}.
\end{align*}
We define the set of~$\P$-\emph{critical values} as the set~$\Delta[\P](T_1, T_2) \coloneqq \Delta_1[\P] \cup \Delta_2 \cup \Delta_3$.
When clear from context, we simply write~$\Delta[\P]$.
Each critical value corresponds to a pair of critical points; we refer to such a pair~$b$ as a~$\P$-\emph{critical pair}.
We distinguish two types; see Figure~\ref{fig:critical-pairs}.

\begin{figure}[b]
    \centering
    \includegraphics{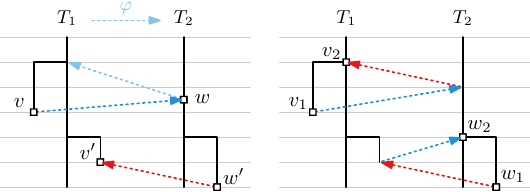}
    \caption{There are two types of critical pairs: arrow critical pairs~$(v, w)$ and~$(w', v')$ (left) and zigzag critical pairs~$(v_1, v_2)$ and~$(w_1, w_2)$ (right).}
    \label{fig:critical-pairs}
\end{figure}

\begin{itemize}
	\item 
		If~$b = (v, w)$ for a point~$v \in C_1[\P]$ and a point~$w \in C_2[\P]$, then~$b$ is an \emph{arrow critical pair} that corresponds to the critical value~$\shift{}(v, w)$.
		We say a partial interleaving~$(\phi, \psi)$ \emph{uses}~$b$ if~$\phi(v) = w$.
		The case that~$b = (w, v)$ is symmetric.
	\item
		If~$b = (v_1, v_2)$ for two vertices~$v_1, v_2$ in~$V(T_1)$, then~$b$ is a \emph{zigzag critical pair} that corresponds to the critical value~$\tfrac{1}{2}(f_1(v_2) - f_1(v_1))$.
		We say a partial interleaving~$(\phi, \psi)$ \emph{uses}~$b$ if there is a point~$y \in T_2$ with~$f_2(y) = f_1(v_1) + \tfrac{1}{2}(f_1(v_2) - f_1(v_1))$ such that~$\phi(v_1) = y$ and~$\psi(y) = v_2$.
		The case that~$b = (w_1, w_2)$ for two vertices~$w_1, w_2$ in~$V(T_2)$ is symmetric.
\end{itemize}

\noindent
Similar to Lemma~\ref{lem:critical-values}, we show that the~$\P$-residual interleaving distance always attains a value in the set of~$\P$-critical values.
We additionally show that there always exists an optimal~$\P$-extension and that any such extension uses a \emph{realizing} critical pair: a critical pair that corresponds to the interleaving distance.
We first prove a generalization of Lemma~\ref{lem:existence-optimal-interleaving}:

\begin{restatable}{lemma}{extensionconstruction}
\label{lem:extension-critical-values}
Let~$\P$ be a finite interleaving.
For any complete~$\P$-extension with~$\P$-shift~$\delta$ that does not use a critical pair that corresponds to~$\delta$, there exists another~$\P$-extension whose~$\P$-shift is equal to a~$\P$-critical value that is strictly less than~$\delta$.
\end{restatable}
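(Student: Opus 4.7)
The plan is to adapt the construction in the proof of Lemma~\ref{lem:existence-optimal-interleaving} so that it respects both the extension requirement imposed by~$\P$ and the fan structure of~$A_\phi$ and~$A_\psi$. Let~$\delta' \in \Delta[\P]$ be the greatest~$\P$-critical value strictly less than~$\delta$ (take~$\delta' = 0$ if no such value exists). First, I would replace~$\I = (\alpha,\beta)$ by~$(\alpha^\uparrow[\delta'], \beta^\uparrow[\delta'])$; this remains a~$\P$-extension with the same~$\P$-shift, and now every arrow of~$\alpha$ and~$\beta$ has shift at least~$\delta'$.

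For each vertex~$v \in V(T_1)$, let~$w_v$ denote the lower endpoint of the edge of~$T_2$ that contains~$\alpha(v)$. The key height bound is~$f_2(w_v) - f_1(v) \le \delta'$ when~$(v, \alpha(v)) \notin F[A_\phi]$: this difference lies in~$\Delta_1[\P]$ since~$v \in C_1[\P]$ and~$w_v \in C_2[\P]$, and equality~$f_2(w_v) - f_1(v) = \delta$ can only occur when~$\alpha(v) = w_v$, which would make~$\I$ use the arrow critical pair~$(v, w_v)$ of value~$\delta$---contradicting the hypothesis. When~$(v, \alpha(v)) \in F[(x^*, y^*)]$ for some~$(x^*, y^*) \in A_\phi$, I would instead use~$y^* = \alpha(v) \in C_2[\P]$ as the target, so that the extension requirement~$\alpha'(v) \succeq y^*$ is respected. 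I then set~$\alpha'(x)$ to be the ancestor of the selected target in the same subtree as~$\alpha(x)$ at height~$\max(f_1(x) + \delta',\, h_x^*)$, where~$h_x^* \coloneqq \max\{f_2(\phi(x')) \mid x' \in \dom(\phi),\, x' \preceq x\}$ (or~$-\infty$ if empty) enforces the extension property; define~$\beta'$ symmetrically.

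Continuity,~$\alpha'(x) \preceq \alpha(x)$, and the up-map property all transfer directly from the original proof. The interleaving property~$\beta'(\alpha'(v)) \succeq v$ splits into cases depending on which of the two arrows of the zigzag~$(v,\alpha(v),\beta(\alpha(v)))$ lie inside fans. If both lie outside their respective fans, the original zigzag-critical-pair argument applies: letting~$v'$ denote the lower endpoint of the edge containing~$\beta(\alpha(v))$, the value~$\tfrac{1}{2}(f_1(v') - f_1(v))$ is a~$\P$-critical value at most~$\delta$, and the hypothesis on zigzag critical pairs rules out equality, yielding~$\le \delta'$. If exactly one arrow lies in a fan, the remaining shift bound reduces to an arrow-critical-pair argument between a vertex and the fan endpoint~$y^*$, again ruled out at~$\delta$ by the hypothesis. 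If both arrows lie in fans, no push-down occurs along that zigzag and~$\beta'(\alpha'(v)) = \beta(\alpha(v)) \succeq v$ is inherited from~$\I$. Every arrow of~$\alpha'$ outside~$F[A_\phi]$ has shift exactly~$\delta'$ by construction, so~$\shift{\phi}(\alpha') \le \delta'$, and symmetrically~$\shift{\psi}(\beta') \le \delta'$, giving a~$\P$-shift equal to~$\delta'$. The main obstacle is the fan-aware case analysis required to verify the interleaving property, in particular ensuring that any arrow whose shift exceeds~$\delta'$ is absorbed into a fan and therefore contributes nothing to the~$\P$-residual shift.
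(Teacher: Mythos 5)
Your high-level strategy matches the paper's: take the largest $\P$-critical value $\delta'<\delta$, lift by $\delta'$, push images down towards critical descendants, and use the no-realizing-critical-pair hypothesis to rule out equality at $\delta$ for both arrow and zigzag pairs. The gap is in how you guarantee that the modified map still \emph{extends} $\phi$. You push $\alpha(x)$ down towards $w_{v_x}$, the lower \emph{vertex} endpoint of the edge containing $\alpha(v_x)$, and try to enforce the extension constraint purely through the height threshold $h_x^*=\max\{f_2(\phi(x'))\mid x'\in\dom(\phi),\,x'\preceq x\}$. But being at height at least $f_2(\phi(x))$ does not make a point an ancestor of $\phi(x)$: for a non-vertex constraint point $x\in\dom(\phi)$ lying in the interior of an edge with lower endpoint $v_x$, the point $\phi(x)$ is only guaranteed to be a descendant of $\alpha(x)$, not of $\alpha(v_x)$, so $\phi(x)$ and $w_{v_x}$ can sit in different subtrees below $\alpha(x)$. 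For instance, if $T_2$ has a branch vertex $u$ below $\alpha(x)$ with $\phi(x)$ in one child subtree and $\alpha(v_x)$ (hence $w_{v_x}$) in the other, then the ancestor of $w_{v_x}$ at height $\max(f_1(x)+\delta',\,h_x^*)$ lies in the wrong branch whenever that height is below $f_2(u)$, so $\alpha'(x)\not\succeq\phi(x)$ and $\alpha'$ is not a $\P$-extension. Your fan case does not catch this, because $(v_x,\alpha(v_x))$ and $(x,\alpha(x))$ need not lie in $F[A_\phi]$ (fans only contain arrows whose target equals some $y^*$ in the image of $\phi$). The same comparability problem resurfaces in the step you defer as ``the main obstacle'': arrows whose height is forced up by $h_x^*$ (or whose shift otherwise exceeds $\delta'$) must be absorbed into $F[A_\phi]$ for the residual shift to drop to $\delta'$, which requires their target to be \emph{exactly} some $\phi(x')$ with $x'\preceq x$ --- again an ancestry statement that a height bound alone does not deliver.

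This is exactly why the paper's construction chooses its push-down targets among the $\P$-critical points rather than among vertices: it takes $v_x$ to be the highest descendant critical point of $x$ (so $v_x=x$ whenever $x\in\dom(\phi)$) and $w_x$ the highest descendant critical point of $\alpha(v_x)$. Since $\phi(v_x)$ is itself a critical descendant of $\alpha(v_x)$ and every branch point is a vertex and hence critical, $w_x$ is automatically an ancestor of $\phi(v_x)$; the extension property and the fan-absorption argument then come for free, and the quantity $\shift{}(v_x,w_x)$ used in the shift bound still lies in $\Delta_1[\P]$ because both endpoints are critical points (possibly non-vertices). To repair your argument you would need to make this same move --- replace ``lower endpoint of the edge'' by ``highest descendant $\P$-critical point'' on both sides --- at which point your sketch essentially becomes the paper's proof; as written, the construction can fail to be a $\P$-extension, and the deferred case analysis for the interleaving property and the residual-shift bound (the bulk of the paper's proof) conceals precisely the step that breaks.
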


\begin{figure}
    \centering
    \includegraphics{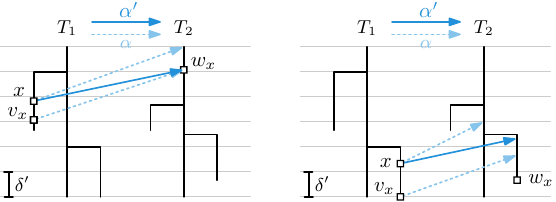}
    \caption{The construction of~$\alpha'$; we need to be careful around the critical points~$v_x$ and~$w_x$.}
    \label{fig:construction-push-down}
\end{figure}

\begin{proof}
Consider a~$\P$-extension~$\I = (\alpha, \beta)$ with~$\P$-shift~$\delta$, and assume~$\P$ does not use a critical pair that corresponds to~$\delta$.
We modify~$\I$ into an extension~$\I' = (\alpha', \beta')$ whose~$\P$-shift is equal to a~$\P$-critical value that is strictly less than~$\delta$.
Let~$\delta' \in \Delta[\P]$ be the greatest value that satisfies~$\delta' < \delta$.
We describe the construction of~$\alpha'$; the construction of~$\beta'$ is symmetric.

We assume, without loss of generality, that each arrow of~$\alpha$ has shift at least~$\delta'$.
Otherwise, we replace~$\alpha$ by~$\alpha^\uparrow[\delta']$.
Since the residual shift of an arrow is bounded by its shift, and since~$\delta' < \delta$, the~$\phi$-residual shift of~$\alpha^\uparrow[\delta']$ is bounded by~$\delta$.
For~$x \in T_1$, let~$v_x \in C_1[\P]$ be the highest descendant critical point of~$x$; if~$x \in C_1[\P]$, then~$v_x = x$.
Moreover, let~$w_x \in C_2[\P]$ be the highest descendant critical point of~$\alpha(v_x)$.
Let~$h_x \coloneqq \max(f_1(x) + \delta', f_2(w_x))$.
We define~$\alpha'(x)$ as the ancestor of~$w_x$ at height~$h_x$, that is,~$\alpha'(x) \coloneqq \an{w_x}{h_x}$.

\subparagraph*{Properties of~$\alpha'$.}
For each point~$x \in T_1$, we have the following three properties:
\begin{enumerate}[(i)]
    \item the height of~$\alpha'(x)$ is at least~$f_1(x) + \delta'$,
    \item $\alpha'(x)$ is an ancestor of~$w_x$,
    \item $\alpha'(x)$ is a descendant of~$\alpha(x)$.
\end{enumerate}
To see that property \enumit{(iii)} holds, observe that as~$x$ is an ancestor of~$v_x$, the point~$\alpha(x)$ must be ancestor of~$\alpha(v_x)$, and thus also of~$w_x$.
If~$h_x = f_2(w_x)$ then~$\alpha'(x) = w_x$ and the property holds.
Otherwise, if~$h_x = f_1(x) + \delta'$, then because the height of~$\alpha(x)$ is at least~$f_1(x) + \delta'$ the property also holds.
We show that~$\alpha'$ is an up-map that extends~$\phi$.

\subparagraph*{$\alpha'$ is an up-map.}
By property~\enumit{(i)} we immediately know that the height of the image of each point~$x$ is at least the height of~$x$.
To show that~$\alpha'$ preserves ancestors, let~$x_1, x_2 \in T_1$ such that~$x_1 \preceq x_2$.
If~$v_{x_1} = v_{x_2}$, then it follows that~$w_{x_1} = w_{x_2}$.
Property~\enumit{(ii)} tells us that both~$\alpha'(x_1)$ and~$\alpha'(x_2)$ are ancestors of~$w_{x_2}$, so as~$f_1(x_1) \le f_1(x_2)$ we infer that~$\alpha'(x_1) \preceq \alpha'(x_2)$.
Otherwise, if~$v_{x_1} \neq v_{x_2}$, then we must have~$v_{x_1} \preceq x_1 \preceq v_{x_2}$.
As~$\alpha$ preserves ancestors, we know that~$\alpha(v_{x_1})$ is a descendant of~$\alpha(v_{x_2})$, and thus also that~$w_{x_1}$ is a descendant of~$w_{x_2}$.
Again, both~$\alpha'(x_1)$ and~$\alpha'(x_2)$ are ancestors of~$w_{x_1}$, so~$\alpha'(x_1) \preceq \alpha'(x_2)$.

Next, to see that~$\alpha'$ extends~$\phi$, let~$x \in \dom(\phi)$.
The point~$x$ is a critical point of~$T_1$, so~$v_x = x$.
Similarly, the point~$\phi(x)$ is a critical point of~$T_2$.
Since~$\alpha$ extends~$\phi$, the point~$\alpha(x)$ is an ancestor of~$\phi(x)$, so~$w_x$ is either equal to or an ancestor of~$\phi(x)$.
Either way, by property~\enumit{(ii)} it follows that~$\alpha'(x)$ is an ancestor of~$\phi(x)$.

Finally, we show that the~$\phi$-shift of~$\alpha'$ is~$\delta'$.
Consider an arrow~$(x, \alpha'(x))$ whose shift is strictly greater than~$\delta'$.
We argue that~$(x, \alpha'(x))$ is in the fan of~$\phi$; this directly implies that the~$\phi$-shift of~$\alpha'$ is~$\delta'$.
By construction, it must be the case that~$h_x = f_2(w_x) > f_1(x) + \delta'$.
In other words, it holds that~$\shift{}(x, w_x) > \delta'$.
The point~$v_x$ is a descendant of~$x$ and thus~$f_1(v_x) \le f_1(x)$.
Therefore, we obtain~$\shift{}(v_x, w_x) \ge \shift{}(x, w_x) > \delta'$.
As both~$v_x$ and~$w_x$ are critical points, the value $\delta^* \coloneqq \shift{}(v_x, w_x)$ is a critical value.
By choice of~$\delta'$ we thus know that~$\delta^* \ge \delta$.
We claim that~$\shift{}(v_x, \alpha(v_x)) > \delta$.
If~$\delta^* = \delta$, then because we assumed that~$\I$ does not use a critical pair that corresponds to~$\delta$, we must have~$\alpha(v_x) \succ w_x$.
The claim follows.
Otherwise, if~$\delta^* > \delta$, then since~$\alpha(v_x) \succeq w_x$ we immediately get~$\shift{}(v_x, \alpha(v_x)) > \delta$.
So our claim holds.

Since the~$\phi$-shift of~$\alpha$ is at most~$\delta$, we know that~$\shift{\P}(v_x, \alpha(v_x)) = 0$.
This means that the arrow~$(v_x, \alpha(v_x))$ is in the fan of~$\phi$.
In other words, there is a descendant~$v'$ of~$v_x$ such that~$v' \in \dom(\phi)$ and~$\phi(v') = w_x$.
As a result, the point~$\alpha(v_x)$ is a critical point of~$T_2$, so~$\alpha(v_x) = w_x$.
Moreover, since~$h_x = f_2(w_x)$, we know that~$\alpha'(x) = w_x$.
Hence, the pair~$(x, \alpha'(x))$ is also in the fan of~$\phi$.
So, $\alpha'$ has~$\phi$-shift~$\delta$.
    
\subparagraph*{$\I'$ is an interleaving.}
Using analogous arguments, the map~$\beta'$ is an up-map that extends~$\psi$ and has~$\psi$-shift~$\delta$.
    It remains to show that~$(\alpha', \beta')$ is an interleaving, that is,~$\beta'(\alpha'(x)) \succeq x$ for all~$x \in T_1$, and~$\alpha'(\beta'(y)) \succeq y$ for all~$y \in T_2$.
    We prove that this statement holds for all~$x \in T_1$; an analogous argument shows the symmetric case.
    We first show that this holds for all vertices of~$T_1$.
    Let~$v \in V(T_1)$.
	From property~\enumit{(i)} we know that the height of~$\beta'(\alpha'(v))$ is at least~$f_1(v) + 2\delta'$.
	Moreover, because~$\I$ is an interleaving, we know that~$\beta(\alpha(v)) \succeq v$ holds.
	Therefore, it suffices to show that~$\beta'(\alpha'(v))$ and~$\beta(\alpha(v))$ lie on the same edge of~$T_1$.
	In the remainder, we use~$y \coloneqq \alpha(v)$ and~$y' \coloneqq \alpha'(v)$.

    \proofsubparagraph{Case $\shift{}(v, y) > \delta$.}
	The~$\phi$-shift of each arrow of~$\alpha$ is at most~$\delta$.
    This means that if~$\shift{}(v, y) > \delta$, then the arrow~$(v, y)$ must lie in the fan of~$\phi$.
	In particular, this means that there is a point~$u \in \dom(\phi)$ such that~$\phi(u) = y$.
    So, the point~$y$ is a critical point of~$T_2$.
	The vertex~$v$ itself is also a critical point, so it follows that~$w_v = y$ and~$h_v = f_2(y)$.
	By construction of~$\alpha'$, we thus have~$y' = y$.
	Similarly, the highest descendant critical point of~$y$ is the point~$y$ itself.
	Let~$v_y \in C_1[\P]$ be the highest descendant critical point of~$\beta(y)$.
    Since all vertices of~$T_1$ are critical points, the point~$\beta(y)$ must lie on the same edge as~$v_y$.
	By construction of~$\beta'$, we know that~$\beta'(y)$ is an ancestor of~$v_y$.
	Combining with property~\enumit{(iii)}, it follows that the points~$\beta'(y') = \beta'(y)$ and~$\beta(y)$ lie on the same edge.

    \proofsubparagraph{Case $\shift{}(v, y) \le \delta$.}
	Recall that~$w_v$ is the highest descendant critical point of~$y$.
	Let~$w_{y'}$ denote the highest descendant critical point of~$y'$.
	We claim that~$w_{y'} = w_v$.
    From property~\enumit{(ii)}, it follows that~$w_{y'}$ is an ancestor of~$w_v$.
	On the other hand, by property~\enumit{(iii)} we know that~$y'$ is a descendant of~$y$.
    As a result, the point~$y'$ either is a descendant of~$w_v$ or it lies on an edge between~$w_v$ and~$y$.
    In both cases, we must have~$w_{y'} \preceq w_v$.
    Combining, we thus obtain~$w_v \preceq w_{y'} \preceq w_v$, which means our claim holds.
    We distinguish two more cases.
    
	\begin{itemize}
		\item 
			First consider the case that~$\shift{}(y, \beta(y)) > \delta$.
            The~$\psi$-shift of~$\beta$ is at most~$\delta$, so the arrow~$(y, \beta(y))$ must lie in the fan of~$\psi$.
            In particular, this means that there is a point~$y^* \in \dom(\psi)$ that is a descendant of~$y$ and satisfies~$\psi(y^*) = \beta(y)$.
            Moreover, we know that the point~$\beta(y)$ is a critical point.
            The point~$y^*$ is a critical point of~$T_2$, so~$y^*$ must be a descendant of~$w_v$.
            Since~$w_{y'} = w_v$, it follows that~$y^*$ is also a descendant of~$w_{y'}$.
            From property~\enumit{(ii)} it then follows that~$\beta'(y')$ is an ancestor of~$\beta(y)$.
            Putting everything together, we obtain~$\beta'(y') \succeq \beta(y) \succeq v$.
		\item
			Lastly, we consider the case that~$\shift{}(y, \beta(y)) \le \delta$.
            Let~$v^* \in V(T_1)$ be the lower endpoint of the edge that contains~$\beta(y)$.
			Both~$v$ and~$v^*$ are vertices, so the value~$\delta^* \coloneqq \tfrac{1}{2}(f_1(v^*) - f_1(v))$ is a critical value.
            Since~$\beta(y) \succeq v^*$, we obtain~$\delta^* \le \tfrac{1}{2}(\shift{}(v, y) + \shift{}(y, \beta(y)))$.
			In our current case, we assumed both summands to be at most~$\delta$, so it follows that~$\delta^* \le \delta$.
            We claim that~$\delta^* < \delta$.
            Otherwise, if~$\delta^* = \delta$, then both~$\shift{}(v, y)$ and~$\shift{}(y, \beta(y))$ must be equal to~$\delta$, and we get~$\beta(y) = v^*$.
            However, we assumed that~$\I$ does not use any critical pairs that correspond to~$\delta$.
            Hence, it holds that~$\delta^* < \delta$.
			By choice of~$\delta'$, we thus have~$\delta^* \le \delta'$.
			By property \enumit{(iii)} for the map~$\beta'$, we get~$\beta'(y') \preceq \beta(y')$.
            Similarly, by property~\enumit{(iii)} for the map~$\alpha'$, we get~$y' \preceq y$.
            Since~$\beta$ preserves ancestors, it follows that~$\beta(y') \preceq \beta(y)$.
            Combining, we obtain~$\beta'(y') \preceq \beta(y') \preceq \beta(y)$.
			Lastly, by property~\enumit{(i)}, we know that the height of~$\beta'(y')$ is at least~$f_1(v) + 2\delta'$.
			So, the point~$\beta'(y')$ lies between the points~$v^*$ and~$\beta(y)$.
            In particular, this means that~$\beta(y)$ and~$\beta'(y')$ lie on the same edge.
            We conclude that~$\beta'(y')$ must be an ancestor of~$v$.
	\end{itemize}

    \noindent
    This shows that~$\beta'(\alpha'(v))$ is an ancestor of~$v$.
    To conclude the proof, consider a point~$x$ of~$T_1$ and let~$v_x \in V(T_1)$ be the lower endpoint of the edge that contains~$x$.
	Since~$\alpha'$ preserves ancestors, we obtain~$\alpha'(x) \succeq \alpha'(v_x)$.
	Similarly, since~$\beta'$ preserves ancestors, we get~$\beta'(\alpha'(x)) \succeq \beta'(\alpha'(v_x))$.
	As~$v_x$ is a vertex, we know that~$\beta'(\alpha'(v_x)) \succeq v_x$.
	Combining, we get~$\beta'(\alpha'(x)) \succeq v_x$.
	Lastly, we know that~$x \succeq v_x$, so it follows that~$\beta'(\alpha'(x)) \succeq x$.
\end{proof}

\noindent
It directly follows that the~$\P$-residual interleaving distance uses a realizing critical pair and that the~$\P$-residual interleaving distance attains a~$\P$-critical value.

\begin{corollary}\label{cor:using-critical-pair}
    If~$\P$ is finite, then any optimal~$\P$-extension uses a realizing critical pair.
\end{corollary}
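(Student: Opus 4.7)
The corollary is essentially the contrapositive of Lemma~\ref{lem:extension-critical-values}, so the plan is to argue by contradiction and let that lemma do all the work. Specifically, suppose $\I$ is an optimal $\P$-extension, so $\shift{\P}(\I) = \resdist{\P}(T_1, T_2) \eqqcolon \delta^*$. Assume for contradiction that $\I$ does not use any critical pair that corresponds to $\delta^*$, i.e., no realizing critical pair.

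Then Lemma~\ref{lem:extension-critical-values} applies to $\I$ with $\delta = \delta^*$ and produces a complete $\P$-extension $\I'$ whose $\P$-shift is equal to some $\P$-critical value $\delta' < \delta^*$. By the definition of residual interleaving distance (Definition~\ref{def:constrained-interleaving-distance}), we have $\resdist{\P}(T_1, T_2) \le \shift{\P}(\I') = \delta' < \delta^* = \resdist{\P}(T_1, T_2)$, which is the desired contradiction.

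There is no real obstacle here because all the substantive work is encapsulated in Lemma~\ref{lem:extension-critical-values}; the only mild care required is verifying the quantifier structure matches. In particular, Lemma~\ref{lem:extension-critical-values} assumes the starting extension has $\P$-shift equal to $\delta$ and produces one with strictly smaller $\P$-shift, so we may apply it directly with $\delta = \delta^*$ whenever $\I$ avoids every realizing critical pair. The output $\delta' \in \Delta[\P]$ being a $\P$-critical value is not used for the contradiction itself; what matters is strict inequality $\delta' < \delta^*$, which already breaks optimality of $\I$.
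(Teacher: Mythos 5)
Your proposal is correct and matches the paper's own (implicit) argument: the paper derives this corollary exactly as the contrapositive of Lemma~\ref{lem:extension-critical-values}, noting only that it ``directly follows,'' and your spelled-out contradiction—an optimal extension avoiding every realizing critical pair would yield an extension of strictly smaller $\P$-shift, contradicting $\shift{\P}(\I)=\resdist{\P}$—is precisely that derivation.
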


\begin{corollary}\label{cor:critical-values}
	If~$\P$ is finite, then $\resdist{\P}(T_1, T_2) \in \Delta[\P](T_1, T_2)$.
\end{corollary}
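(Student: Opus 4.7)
The plan is to mirror the argument immediately preceding Theorem~\ref{thm:existence-optimal-interleaving}, using Lemma~\ref{lem:extension-critical-values} in place of Lemma~\ref{lem:existence-optimal-interleaving} and the set~$\Delta[\P]$ in place of~$\Delta$. The crucial ingredient is that, since~$\P$ is finite, the sets~$C_1[\P]$ and~$C_2[\P]$ are finite, hence so is~$\Delta[\P]$. This lets us isolate~$\delta^* \coloneqq \resdist{\P}(T_1, T_2)$ from the~$\P$-critical values that are strictly larger than it.

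First I would pick~$\eps > 0$ small enough that the open interval~$(\delta^*, \delta^* + \eps)$ contains no element of~$\Delta[\P]$; such an~$\eps$ exists by finiteness of~$\Delta[\P]$. Observation~\ref{obs:existence-residual-matching} guarantees that complete~$\P$-extensions exist, and by the definition of~$\delta^*$ as an infimum there is a complete~$\P$-extension~$\I$ whose~$\P$-shift~$\delta$ lies in~$[\delta^*, \delta^* + \eps)$.

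Next I would split into two cases depending on whether~$\delta$ is a~$\P$-critical value. If~$\delta \in \Delta[\P]$, then since~$(\delta^*, \delta^* + \eps)$ is disjoint from~$\Delta[\P]$ we must have~$\delta = \delta^*$, so~$\delta^* \in \Delta[\P]$ and we are done. Otherwise~$\delta \notin \Delta[\P]$, so no~$\P$-critical pair corresponds to~$\delta$ and the hypothesis of Lemma~\ref{lem:extension-critical-values} is satisfied vacuously. Applying the lemma yields a~$\P$-extension~$\I'$ whose~$\P$-shift~$\delta'$ is a~$\P$-critical value with~$\delta' < \delta < \delta^* + \eps$. Since~$\Delta[\P]$ avoids~$(\delta^*, \delta^* + \eps)$, this forces~$\delta' \le \delta^*$; but~$\delta' \ge \delta^*$ by definition of the infimum, so~$\delta' = \delta^* \in \Delta[\P]$.

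The main obstacle is essentially absent, as all the geometric work has already been packaged into Lemma~\ref{lem:extension-critical-values}. The only subtlety is checking that the lemma's hypothesis---that~$\I$ does not use a critical pair corresponding to its own~$\P$-shift---is met, and this is automatic whenever~$\delta \notin \Delta[\P]$ since in that case no critical pair corresponds to~$\delta$ at all.
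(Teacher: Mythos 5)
Your proposal is correct and follows essentially the same route as the paper: approximate $\resdist{\P}$ by a $\P$-extension whose $\P$-shift lies in a gap of $\Delta[\P]$ just above the infimum, note that then no critical pair corresponds to that shift, and invoke Lemma~\ref{lem:extension-critical-values} to push the shift down to a $\P$-critical value, which by the infimum property must equal $\resdist{\P}$. This matches the paper's argument (stated just before Theorem~\ref{thm:existence-optimal-extension}, with $\delta_1,\delta_2$ playing the role of your $\eps$-gap), so there is nothing further to add.
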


\noindent
Let~$\delta_1 \in \Delta[\P]$ be the greatest value that satisfies~$\delta_1 \le \resdist{\P}$ and let~$ \delta_2 \in \Delta[\P]$ be the smallest value that satisfies~$\delta_2 > \resdist{\P}$.
By definition of~$\resdist{\P}$, for all~$\eps > 0$ there is a~$\P$-extension with~$\P$-shift~$\resdist{\P} + \eps$.
In particular, for~$\eps$ small enough, there is a~$\P$-extension~$\I$ with~$\P$-shift~~$\delta_1 < \shift{\P}(\I) < \delta_2$.
In other words, there are no critical pairs that correspond to~$\shift{\P}(\I)$. 
This means that~$\I$ cannot use a critical pair that corresponds to the~$\P$-shift of~$\I$.
By Lemma~\ref{lem:extension-critical-values}, it then follows that there exists a~$\P$-extension with~$\P$-shift~$\delta_1 = \resdist{\P}$.

\begin{theorem}\label{thm:existence-optimal-extension}
	Any finite partial interleaving~$\P$ between any two merge trees~$T_1$ and~$T_2$ admits an optimal~$\P$-extension.
\end{theorem}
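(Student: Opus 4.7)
The plan is to exploit the discreteness of the $\P$-critical value set together with the reduction step supplied by Lemma~\ref{lem:extension-critical-values}. Since $\P$ is finite and both trees have finitely many vertices, $\Delta[\P]$ is a finite subset of $\mathbb{R}_{\ge 0}$, so consecutive critical values bound a ``forbidden gap'': any $\P$-extension whose shift lands strictly inside such a gap can be strictly improved by the push-down construction of Lemma~\ref{lem:extension-critical-values}.

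The concrete steps I would carry out are as follows. First, set $\delta_1 \coloneqq \max\{\delta \in \Delta[\P] \mid \delta \le \resdist{\P}\}$ and $\delta_2 \coloneqq \min\{\delta \in \Delta[\P] \mid \delta > \resdist{\P}\}$, both of which exist by finiteness (with the convention $\delta_2 = +\infty$ if no strictly greater critical value exists). By Corollary~\ref{cor:critical-values}, $\resdist{\P} \in \Delta[\P]$, so $\delta_1 = \resdist{\P}$. Next, by the infimum in Definition~\ref{def:constrained-interleaving-distance}, for every $\eps > 0$ there exists a $\P$-extension $\I_\eps$ with $\shift{\P}(\I_\eps) < \resdist{\P} + \eps$; pick $\eps$ small enough that $\shift{\P}(\I_\eps) < \delta_2$. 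If $\shift{\P}(\I_\eps) = \delta_1$, we are done. Otherwise $\shift{\P}(\I_\eps)$ lies strictly between $\delta_1$ and $\delta_2$ and therefore is not a $\P$-critical value, so in particular $\I_\eps$ uses no critical pair corresponding to its own shift. Lemma~\ref{lem:extension-critical-values} then yields a $\P$-extension whose $\P$-shift equals a critical value strictly less than $\shift{\P}(\I_\eps)$; by the choice of $\delta_1$ this value is at most $\delta_1 = \resdist{\P}$, and by definition of $\resdist{\P}$ it cannot be smaller. Hence the produced extension is optimal.

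The hard part is in fact already absorbed into Lemma~\ref{lem:extension-critical-values}: all the geometric work of locally redefining $\alpha$ and $\beta$ while preserving the ancestor-preserving and interleaving conditions sits there. What remains to verify carefully in the present proof is merely bookkeeping, namely (i) that finiteness of $\P$ indeed forces $\Delta[\P]$ to be finite so that the bracketing values $\delta_1, \delta_2$ exist, and (ii) that one may always choose the near-optimal $\I_\eps$ with shift strictly below $\delta_2$, which follows by taking $\eps < \delta_2 - \resdist{\P}$. Thus the proof is essentially a short packaging argument on top of Lemma~\ref{lem:extension-critical-values} and Corollary~\ref{cor:critical-values}.
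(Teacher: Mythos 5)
Your proposal is correct and follows essentially the same route as the paper: the paper likewise brackets $\resdist{\P}$ between consecutive $\P$-critical values $\delta_1$ and $\delta_2$, takes a near-optimal extension with $\P$-shift below $\delta_2$, observes that its shift corresponds to no critical pair, and applies Lemma~\ref{lem:extension-critical-values} to push the shift down to $\delta_1 = \resdist{\P}$. Your explicit appeal to Corollary~\ref{cor:critical-values} to identify $\delta_1$ with $\resdist{\P}$ is a harmless repackaging of the paper's concluding step.
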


\subparagraph{Isolated extensions.}
We conclude this section by arguing that there always exist ``isolated'' optimal extensions, that is, extensions that satisfy two finiteness properties.
Specifically, this is useful when reasoning about ``greatest'' shifts and ``highest'' ancestor.
Formally, assume the~$\P$-residual interleaving distance is strictly positive, that is,~$\resdist{\P} > 0$.
We say an optimal~$\P$-extension~$\I$ is \emph{isolated} if (1) the~$\P$-shift of only finitely many arrows of~$\I$ is equal to~$\resdist{\P}$, and (2) there exists some $\eps>0$ such that for each arrow $(x,y)$ of $\I$ with $\P$-shift at least $\resdist{\P}-\eps$, there exists a descendant $x'$ of $x$, such that $(x',y)$ is an arrow in $\I$ with the same target and $\P$-shift $\resdist{\P}$.
We show that there always exists an isolated~$\P$-extension.

\begin{restatable}{lemma}{existenceisolated}
\label{lem:existence-isolated-extension}
  If~$\P$ is finite and $\resdist{\P}>0$, there exists an isolated~$\P$-extension.
\end{restatable}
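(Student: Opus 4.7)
Let $\delta^* := \resdist{\P}$. By Theorem~\ref{thm:existence-optimal-extension} there exists an optimal $\P$-extension $\I_0 = (\alpha_0, \beta_0)$. The plan is to modify $\I_0$ in two stages: first canonicalize it on critical points (a direct adaptation of the construction in the proof of Lemma~\ref{lem:extension-critical-values}), and then flatten it on the interior of each sub-edge between consecutive $\P$-critical points so that the tight arrows collapse to finitely many anchors.

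In Stage~1, for each $x \in T_1$ let $v_x \in C_1[\P]$ be the highest descendant critical point of $x$ and $w_x \in C_2[\P]$ the highest descendant critical point of $\alpha_0(v_x)$, and set $\alpha_1(x) := \an{w_x}{\max(f_1(x) + \delta^*, f_2(w_x))}$; define $\beta_1$ symmetrically. The argument in the proof of Lemma~\ref{lem:extension-critical-values}, applied with $\delta' := \delta^*$ instead of $\delta' < \delta$, shows that $(\alpha_1, \beta_1)$ is a valid $\P$-extension: any arrow of $\alpha_1$ with shift strictly exceeding $\delta^*$ must satisfy $\shift{}(v_x, w_x) > \delta^*$, which forces $(v_x, \alpha_0(v_x))$ into the fan of $\phi$ and thus $w_x = \alpha_0(v_x)$ into the image of $\phi$, so $(x, w_x) = (x, \alpha_1(x))$ is itself in the fan with residual shift zero. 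The interleaving property $\beta_1(\alpha_1(x)) \succeq x$ is verified by essentially the same case analysis as in that proof; in the edge case in which a zigzag critical pair corresponds exactly to $\delta^*$ (instead of being strictly avoided), we still get $\beta(y) = v^*$ with $v^* \succeq v$, preserving the ancestor relation.

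In Stage~2, subdivide each edge of $T_1$ by the points of $C_1[\P]$ into sub-edges, and on each sub-edge $e = (v, v')$ set $\alpha(x) := \an{w_v}{\max(f_1(x), f_2(w_v), f_1(v) + \delta^*)}$ for $x$ in the open interior of $e$, while retaining $\alpha_1(v)$ and $\alpha_1(v')$ at the endpoints; apply a symmetric modification to $\beta_1$. On the open interior of $e$, the shift of $(x, \alpha(x))$ then equals $\max(0, f_2(w_v) - f_1(x), \delta^* - (f_1(x) - f_1(v)))$, which is strictly less than $\delta^*$ in Case~A ($f_2(w_v) \le f_1(v) + \delta^*$), while in Case~B ($f_2(w_v) > f_1(v) + \delta^*$) the arrow remains in the fan. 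Hence the arrows of $\I := (\alpha, \beta)$ with residual shift exactly $\delta^*$ occur only at points of $C_1[\P] \cup C_2[\P]$, a finite set, yielding property~(1). For property~(2), choose $\eps > 0$ smaller than both the gap between $\delta^*$ and the next smaller element of $\Delta[\P]$ and the shortest sub-edge length. Any arrow $(x, y)$ of $\I$ with residual shift in $[\delta^* - \eps, \delta^*)$ must lie in the flattened Case~A interior of some sub-edge with bottom vertex $v$, and by construction $\alpha(v) = y$, so $(v, y)$ is an anchor arrow with residual shift $\delta^*$ and $v$ is the required descendant of $x$.

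The main obstacle is verifying that Stage~2 preserves the interleaving property $\beta(\alpha(x)) \succeq x$, because flattening $\beta$ on the interior of an edge of $T_2$ can lower $\beta(\alpha(x))$ relative to $\beta_1(\alpha_1(x))$. I would handle this by tracing $\alpha(x)$ through the flattened $\beta$ in each case (whether $\alpha(x)$ is a critical point of $T_2$ or lies strictly inside an edge, and similarly for the composition target), using that the flattened image still lies on the same ancestor chain and at height at least $f_1(v) + 2\delta^*$ above the anchor vertex $v$; this suffices for ancestor preservation by a case analysis analogous to the one at the end of the proof of Lemma~\ref{lem:extension-critical-values}.
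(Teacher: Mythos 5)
Your Stage~1 is essentially sound: rerunning the construction from Lemma~\ref{lem:extension-critical-values} with $\delta'=\delta=\resdist{\P}$ only has to keep the residual shift at $\resdist{\P}$ rather than strictly decrease it, so the hypothesis you lose (that no realizing critical pair is used, which an optimal extension necessarily violates by Corollary~\ref{cor:using-critical-pair}) does no harm. The genuine gap is Stage~2, precisely the step you defer: flattening both maps to (near-)zero shift on sub-edge interiors really can destroy the interleaving property, and the invariant your sketch relies on is false --- after flattening, $\beta(\alpha(x))$ can sit at height $f_1(x)$, far below $f_1(v)+2\delta^*$, and need not lie on the ancestor chain of $v$ at all. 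The $2\delta^*$ of climb is exactly what the composition condition uses to clear branch points of $T_1$ within $2\delta^*$ above $v$, and your anchors ($w_v$, respectively the highest critical descendant of $\beta_0(u)$) retain too little information about where the chosen optimal extension $\I_0$ sends the points in between.

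Concretely, take $\P$ empty, so $\resdist{\P}=\intdist{}(T_1,T_2)$. Let $T_1$ have leaves $v$ (height $0$) and $t$ (height $0.2$) merging at a vertex $A$ (height $1.5$), an edge from $A$ to a vertex $C$ (height $4$) with a leaf $g_1$ (height $3$) attached at $C$, an edge from $C$ to a vertex $R$ (height $30$) with a leaf $p$ (height $28$) attached at $R$, and the root above $R$; let $T_2$ have a leaf $w$ (height $0$) whose edge runs up to a vertex $W'$ (height $4$), a leaf $g_2$ (height $3$) attached at $W'$, and the root above $W'$. The unmatched pair $(p,R)$ forces $\intdist{}\ge 1$ and shift-$1$ maps exist, so $\resdist{\P}=1$. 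Now pick the perverse optimal extension $\I_0=(\alpha_0,\beta_0)$ in which every arrow has shift exactly $1$: $\alpha_0$ sends the subtree below $A$ and the edge $A$--$C$ up the chain above $w$, the edge of $g_1$ up the chain above $g_2$, and everything above $C$ up the trunk, while $\beta_0$ sends the entire edge of $w$ up the chain above $t$ (so $\beta_0(w)=\an{t}{1}$), the edge of $g_2$ up the chain above $g_1$, and the trunk up the chain above $C$; a direct check shows this is an interleaving with shift $1$, because every composition climbs by $2$ and thus clears $A$, $W'$, $C$ and $R$. Your anchors for the edge $(v,A)$ are then $w_v=w$, and for the sub-edge of $T_2$ above $w$ the anchor is $t$ (the highest critical descendant of $\beta_0(w)$). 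For $x$ at height $1.2$ on the edge $(v,A)$, your flattened maps give $\alpha(x)=\an{w}{1.2}$ and then $\beta(\alpha(x))=\an{t}{1.2}$, which is not an ancestor of $x$ since the branches of $v$ and $t$ only merge at height $1.5$; so the constructed pair is not an interleaving, independently of how $\eps$ is chosen. This is exactly the point where the paper's construction differs from yours: it anchors on the enlarged finite sets $X=C_1[\P]\cup\beta(C_2[\P])$ and $Y=C_2[\P]\cup\alpha(C_1[\P])$ and bases $\alpha'(x)$ at $\alpha(v_x)$ itself (with $v_x$ the highest descendant of $x$ in $X$, and no $+\delta^*$ lift), so that the new maps agree with $\alpha,\beta$ on all anchors and the composition at a vertex reduces to $\beta(\alpha(v))\succeq v$. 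In the example, $Y$ contains $\alpha_0(t)=\an{w}{1.2}$, so the paper's $\beta'$ maps $\an{w}{1.2}$ to $\beta_0(\an{w}{1.2})=\an{t}{2.2}\succeq A\succeq x$, which is precisely the information your anchor set discards.
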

\begin{proof}
By Theorem~\ref{thm:existence-optimal-extension}, there exists an optimal $\P$-extension $\I=(\alpha,\beta)$.
We use $\I$ to construct an isolated $\P$-extension $\I'=(\alpha',\beta')$.
We describe the construction of the map~$\alpha'$ from~$T_1$ to~$T_2$; the construction of the map~$\beta'$ from~$T_2$ to~$T_1$ is symmetric.

Let $X \coloneqq C_1[\P]\cup\beta(C_2[\P])$ and $Y \coloneqq C_2[\P] \cup \alpha(C_1[\P])$.
There are only finitely many critical points of both~$T_1$ and~$T_2$, so the sets~$X$ and~$Y$ are finite.
Since every leaf is critical, every point $x$ of $T_1$ has a descendant in $X$, and by finiteness a highest descendant in $X$, which we denote $v_x$.
Let~$h_x \coloneqq \max(f_1(x), f_2(\alpha(v_x)))$.
We define~$\alpha'(x) = \an{\alpha(v_x)}{h_x}$.

\proofsubparagraph{$\alpha'$ is an up-map with~$\phi$-shift at most~$\resdist{\P}$.}
Let~$x \in T_1$.
We have~$h_x \ge f_1(x)$, so by construction~$f_2(\alpha'(x)) \ge f_1(x)$.
To see that~$\alpha'$ preserves ancestors, let~$x_1$ and~$x_2$ be points of~$T_1$ and suppose~$x_2$ is an ancestor of~$x_1$.
Then~$v_{x_2}$ is an ancestor of~$v_{x_1}$.
By construction, it then follows that~$\alpha'(x_2)$ is an ancestor of~$\alpha(v_{x_2})$, and hence also of~$\alpha(v_{x_1})$.
Moreover, since~$f_1(x_2) \ge f_1(x_1)$, it follows that~$h_{x_2} \ge h_{x_1}$.
We conclude that~$\alpha'(x_2)$ must be an ancestor of~$\alpha'(x_1)$.
So~$\alpha'$ is an up-map.

Next, we argue that~$\alpha'$ extends~$\phi$.
Let~$x \in \dom(\phi)$.
Then~$x$ is a critical point of~$T_1$, so~$x \in X$ and~$v_x = x$.
By construction,~$\alpha'(v_x) = \alpha(v_x)$, and since~$\alpha$ extends~$\phi$ it follows that~$\alpha'(x)$ is ancestor of~$\phi(x)$.
So indeed,~$\alpha'$ extends~$\phi$.
Lastly, to prove that the~$\phi$-shift of~$\alpha'$ is at most~$\resdist{\P}$, it suffices to show that for all points~$x$ it holds that~$\alpha'(x)$ is a descendant of~$\alpha(x)$.
If~$h_x = f_1(x)$ this immediately follows.
Otherwise,~$h_x = f_2(\alpha(v_x))$, and we know that~$\alpha'(x) = \alpha(v_x)$.
The map~$\alpha$ preserves ancestors; since~$v_x$ is a descendant of~$x$, it follows that the point~$\alpha'(x)$ is a descendant of~$\alpha(x)$.
Since $\I$ is an optimal interleaving, we conclude that the~$\P$-shift of~$\alpha'$ is bounded by~$\resdist{\P}$. 

\proofsubparagraph{$\I'$ is an interleaving.}
For any vertex $v$, its highest descendant in the set~$X$, is~$v$ itself.
So, by construction, we have~$\alpha'(v) = \alpha(v)$.
Moreover, as the vertex~$v$ is a critical point,~$\alpha(v)$ is in the set~$Y$.
So, by construction of~$\beta'$, we have~$\beta'(\alpha'(v)) = \beta(\alpha(v))$.
Therefore, because~$\I$ is an interleaving,~$\beta'(\alpha'(v))$ is an ancestor of~$v$.
Now, for an arbitrary point~$x \in T_1$, let~$v \in V(T_1)$ be the lower endpoint of the edge that contains~$x$.
As described above, the point~$\beta'(\alpha'(v))$ is an ancestor of~$v$.
Moreover, since~$\alpha'$ preserves ancestors,~$\alpha'(x)$ is an ancestor of~$\alpha'(v)$.
Similarly, since~$\beta'$ preserves ancestors,~$\beta'(\alpha'(x))$ is an ancestor of~$\beta'(\alpha'(v))$.
Combining, we get that~$\beta'(\alpha'(x))$ is an ancestor of~$v$.
As~$x$ is an ancestor of~$v$, and~$f_1(\beta'(\alpha'(x)) \ge f_1(x)$, we obtain that~$\beta'(\alpha'(x))$ is an ancestor of~$x$.
Therefore~$\I'$ is an interleaving.

\proofsubparagraph{$\I'$ is isolated.}
To see that property (1) holds, we want to bound the number of arrows with~$\P$-shift~$\resdist{\P}$.
Let~$x \in T_1$.
By construction, the point~$\alpha'(x)$ satisfies either (a) $f_2(\alpha'(x)) = f_1(x)$ or (b) $\alpha'(x) = \alpha(v_x)$.
In case~(a), the~$\P$-shift of the arrow~$(x, \alpha'(x))$ is~$0$, so it suffices to bound case (b).
By definition,~$v_x$ is a critical point.
Since~$\P$ is finite, there are only finitely many critical points.
Hence, there are only finitely many points~$y$ of~$T_2$ that can satisfy~$y = \alpha'(x)$.
For any such point~$y$, each edge of~$T_1$ contains at most one point for which the corresponding arrow has shift~$\resdist{\P}$.
Hence, there are only finitely many arrows of~$\alpha'$ with~$\P$-shift~$\resdist{\P}$.
Analogously, there are also only finitely many arrows of~$\beta'$ with~$\P$-shift~$\resdist{\P}$.

Lastly, we show that property (2) holds.
Let~$\eps > 0$ be such that there is no point~$v \in X$ for which~$\resdist{\P} - \eps < f_2(\alpha(v)) - f_1(v) < \resdist{\P}$, and such that~$\resdist{\P} -\eps > 0$.
Since there are only finitely many points in~$X$, such an~$\eps$ exists.
Consider an arrow~$a$ of~$\I'$ with~$\P$-shift at least~$\resdist{\P} - \eps$.
Without loss of generality assume~$a = (x, \alpha'(x))$ for some~$x \in T_1$.
Since the~$\P$-shift of~$a$ is positive, we have~$h_x = f_2(\alpha(v_x))$.
In other words, we have~$\alpha'(x) = \alpha(v_x)$.
As~$v_x$ is a descendant of~$x$, it holds that~$f_2(\alpha'(x)) - f_1(v_x) > f_2(\alpha'(x)) - f_1(x)$.
Moreover, by construction, we have~$\alpha'(v_x) = \alpha(v_x)$, and thus~$\alpha'(v_x) = \alpha'(x)$.
As~$v_x \in X$, by our choice of~$\eps$, the shift of~$(v_x, \alpha'(v_x))$ is equal to~$\resdist{\P}$.
We conclude that property (2) holds.
\end{proof}

\section{Locally Correct Interleavings}\label{sec:locally-correct-interleavings}
The interleaving distance is a bottleneck distance: it yields a single value that quantifies the worst discrepancy between the two merge trees.
While this global value is useful for comparison, we are interested in the actual interleavings that realize the distance.
Importantly, we seek interleavings that are not only globally optimal, but also locally meaningful.
Specifically, if a part of the interleaving is fixed, we want the remainder to be optimal relative to this fixed part.
To formalize this notion of local optimality, we build upon the residual interleaving distance introduced in the previous section.

Let~$(T_1, f_1)$ and~$(T_2, f_2)$ be two merge trees, and fix a partial interleaving~$\P = (\phi, \psi)$.
For a subset~$S_1 \subseteq \dom(\phi)$, we define the \emph{restriction} of~$\phi$ to~$S_1$ as the map~$\phi\vert_{S_1} \colon S_1 \to T_2$ given by~$\phi\vert_{S_1}(x) \coloneqq \phi(x)$ for all points~$x$ of~$S_1$.
Any restriction of~$\phi$ is a partial up-map, and~$\phi$ extends all of its restrictions.
The \emph{restriction} of~$\psi$ to a subset~$S_2 \subseteq T_2$ is defined symmetrically.
Lastly, the pair~$\R = (\phi\vert_{S_1}, \psi\vert_{S_2})$ is a partial interleaving such that~$\P$ extends~$\R$.
We call~$\R$ a \emph{restriction} of~$\P$.
A complete interleaving~$\I$ is \emph{locally correct} if for all restrictions~$\R$ of~$\I$, the~$\R$-residual shift of~$\I$ is equal to the~$\R$-residual interleaving distance.

\begin{definition}\label{def:locally-correct}
	A complete interleaving~$\mathcal{I} = (\alpha, \beta)$ is \emph{locally correct} if and only if for all~$S_1 \subseteq T_1$ and~$S_2 \subseteq T_2$, the restriction~$\mathcal{R} = (\alpha\vert_{S_1}, \beta\vert_{S_2})$ satisfies
	\begin{equation}\label{eq:locally-correct}
	\resdist{\R}(T_1, T_2) = \shift{\mathcal{R}}(\mathcal{I}).
	\end{equation}
\end{definition}

\begin{figure}[b]
    \centering
    \includegraphics{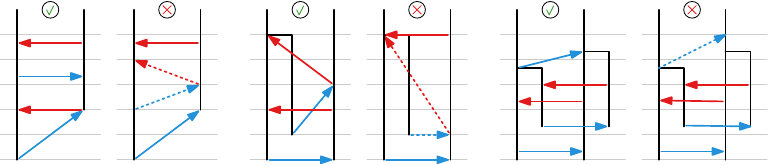}
    \caption{Three instances of~$T_1$ and~$T_2$ that together showcase all the types of critical pairs. For each instance, we show an interleaving that is locally correct (\raisebox{-2pt}{\includegraphics{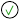}}), and one that is not~(\raisebox{-2pt}{\includegraphics{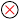}}). The non-dashed arrows in the~\raisebox{-2pt}{\includegraphics{no-icon}} cases form restrictions for which Equation~\eqref{eq:locally-correct} does not hold.
    }
    \label{fig:locally-correct-examples}
\end{figure}

\noindent
See Figure~\ref{fig:locally-correct-examples} for some examples of interleavings that are (not) locally correct.

Let~$\I$ be an arbitrary, not necessarily locally correct, interleaving.
For any restriction~$\R$ of~$\I$, we know that~$\I$ extends~$\R$.
Hence, the~$\R$-residual interleaving distance is at most the~$\R$-shift of~$\I$.
To see that an interleaving~$\I$ is locally correct, it hence suffices to show that for all restrictions~$\R$ of~$\I$ it holds that~$\resdist{\R}(T_1, T_2) \ge \shift{\R}(\I)$.

\begin{observation}\label{obs:locally-correct}
	For any restriction~$\R$ of~$\I$, it holds that~$\resdist{\R}(T_1, T_2) \le \shift{\R}(\I)$.
\end{observation}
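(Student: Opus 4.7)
The plan is to observe that the inequality is essentially immediate from the definitions, and to spell out precisely why $\I$ qualifies as a valid complete $\R$-extension. Since $\R = (\alpha|_{S_1}, \beta|_{S_2})$ is by construction a restriction of $\I = (\alpha,\beta)$, I would first note that $\alpha$ extends $\alpha|_{S_1}$ and $\beta$ extends $\beta|_{S_2}$: for every $x \in S_1$ we have $\alpha(x) = \alpha|_{S_1}(x)$, so in particular $\alpha(x) \succeq \alpha|_{S_1}(x)$, and analogously for $\beta$. By the definition of extension for partial interleavings, this means $\I$ extends $\R$.

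Next, I would observe that $\I$ is a complete interleaving between $T_1$ and $T_2$. Combined with the previous step, this makes $\I$ a complete $\R$-extension in the sense of Section~\ref{sec:residual-interleaving-distance}. In particular, the set of complete $\R$-extensions is nonempty and contains $\I$ itself, so the infimum in Definition~\ref{def:constrained-interleaving-distance} is taken over a set that includes the $\R$-residual shift $\shift{\R}(\I)$.

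The conclusion then follows directly: since $\resdist{\R}(T_1, T_2)$ is the infimum of the $\R$-residual shift over all complete $\R$-extensions, and $\I$ is one such extension, we have $\resdist{\R}(T_1, T_2) \le \shift{\R}(\I)$. There is no real obstacle here; the only subtlety worth calling out is the tautological step that $\alpha$ extends $\alpha|_{S_1}$, which is a consequence of how ``extends'' was defined (an up-map $\phi$ extends another up-map $\phi'$ when $\phi(x) \succeq \phi'(x)$ on $\dom(\phi')$), and here equality holds on $S_1$.
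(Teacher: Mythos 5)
Your proposal is correct and matches the paper's own (very brief) argument: since $\I$ extends each of its restrictions $\R$, it is itself a complete $\R$-extension, so the infimum defining $\resdist{\R}(T_1,T_2)$ is at most $\shift{\R}(\I)$. Spelling out that the restriction relation gives extension with equality on the restricted domains is a fine, if tautological, addition.
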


\noindent
Suppose~$\I$ is locally correct.
Taking~$\R$ as the empty restriction, we see that~$\I$ satisfies~$\intdist{}(T_1, T_2) = \shift{}(\I)$.
In other words, any locally correct interleaving is an optimal interleaving.
The reverse is not necessarily true; not every optimal interleaving is locally correct (see Figure~\ref{fig:good-vs-bad-optimal-interleavings}).
That raises the question: does a locally correct interleaving always exist?
In the remainder of this section, we answer this question affirmatively.

\begin{theorem}\label{thm:existence-locally-correct}
	Any two merge trees~$T_1$ and~$T_2$ admit a locally correct interleaving.
\end{theorem}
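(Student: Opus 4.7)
The plan is to build a locally correct interleaving by an iterative pinning procedure grounded in the residual interleaving distance and the isolated extensions developed in Section~\ref{sec:generalizing-interleaving-distance}. I will maintain a finite partial interleaving $\P_i$ together with an isolated optimal $\P_i$-extension $\I_i$, and in each step pin one more arrow realizing the current residual distance.

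Starting from $\P_0 = \emptyset$, Lemma~\ref{lem:existence-isolated-extension} (combined with Theorem~\ref{thm:existence-optimal-extension}) yields an isolated optimal $\P_0$-extension $\I_0$. At step $i$, given a finite $\P_i$ and an isolated optimal $\P_i$-extension $\I_i$ with $\resdist{\P_i} > 0$, pick an arrow $a_i = (x_i, y_i)$ of $\I_i$ with $\P_i$-shift equal to $\resdist{\P_i}$; using isolation property~(2) I choose $x_i$ to be a lowest source mapping to $y_i$, so that the entire fan of near-maximum-shift arrows with target $y_i$ is absorbed into $F[A_{\P_{i+1}}]$ when I set $\P_{i+1} \coloneqq \P_i \cup \{a_i\}$. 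I then take an isolated optimal $\P_{i+1}$-extension $\I_{i+1}$.

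For termination, the sequence $(\resdist{\P_i})_i$ is non-increasing because $\I_i$ is itself a $\P_{i+1}$-extension whose $\P_{i+1}$-shift is at most $\resdist{\P_i}$. Whenever the residual distance stays equal across a step, the two isolation properties together with the fan absorption above force the finite count of arrows achieving the residual distance to strictly decrease; once this count reaches zero the residual must drop, and by Corollary~\ref{cor:critical-values} it drops by at least the gap to the next smaller value in $\Delta[\P_i]$. A careful combinatorial accounting then yields termination at some step $n$ with $\resdist{\P_n} = 0$, meaning that every arrow of $\I_n$ lies in a fan of $\P_n$.

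To establish local correctness of $\I_n$, by Observation~\ref{obs:locally-correct} it suffices to show $\resdist{\R} \geq \shift{\R}(\I_n)$ for every restriction $\R$ of $\I_n$. I argue by contradiction: given an $\R$-extension $\I^*$ with strictly smaller $\R$-shift than $\shift{\R}(\I_n)$, I identify the first iteration index $i$ at which the pinned arrow $a_i$ fails to be represented in $\R$. Since pinning proceeds in non-increasing order of residual shift, all arrows of $\P_i$ are represented in $\R$ in the sense that $\R$ extends $\P_i$ up to the fan, and one can splice $\I^*$ on $\R$ with the pinning construction underlying $\I_i$ outside $\R$ to produce a $\P_i$-extension whose $\P_i$-shift is strictly less than $\resdist{\P_i}$, contradicting the optimality of $\I_i$. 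The main obstacle lies precisely in this splicing: one has to align the arrows of $\R$ with those of $\P_i$ through the fan structure and verify that the combined map remains a valid $\P_i$-extension without inflating the shift. The isolation properties, which supply finite sets of maximum-shift arrows together with lowest-source witnesses, are the key technical tools that make this alignment possible, and they form the technical heart of the proof.
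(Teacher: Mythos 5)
Your high-level plan (grow a finite partial interleaving by pinning bottleneck arrows, then extend when the residual distance hits zero) matches the paper's strategy, but two essential steps are missing or would fail as stated. First, termination. You pin a single maximum-residual-shift arrow per step and then recompute a fresh isolated optimal $\P_{i+1}$-extension; nothing guarantees that the number of arrows realizing the residual distance decreases, because the new extension's maximum-shift arrows need not be related to those of $\I_i$ at all. Moreover, the fallback ``it drops to the next smaller value in $\Delta[\P_i]$'' does not yield termination, since each pinned arrow adds new critical points, so $\Delta[\P_{i+1}]$ strictly grows and the residual distance can keep landing on newly created critical values; there is no fixed finite ladder to descend. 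The paper resolves exactly this by pinning an entire \emph{finite minimal augmentation} at once (so $\resdist{\P}$ strictly decreases every iteration, Lemma~\ref{lem:existence-minimal-augmentation}), showing via Lemma~\ref{lem:only-critical-pairs} that all newly pinned arrows use realizing critical pairs, and then arguing (Lemma~\ref{lem:specifies-subdivided}) that each iteration specifies at least one previously unspecified vertex, which bounds the number of iterations by $|V(T_1)|+|V(T_2)|$. Your proposal has no analogue of this counting argument, and minimality of the augmentation is what makes it work.

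Second, the final local-correctness claim is only a sketch, and the part you defer (``splice $\I^*$ on $\R$ with the pinning construction outside $\R$'') is precisely the hard content. Local correctness quantifies over \emph{all} restrictions $\R=(\alpha\vert_{S_1},\beta\vert_{S_2})$ with arbitrary, possibly infinite $S_1,S_2$, which need not contain or respect your pinned arrows, and gluing two interleavings along complementary domains does not in general produce an up-map, let alone an interleaving (ancestor preservation and the compatibility condition couple points across the two domains). The paper avoids any such splicing by maintaining, as an inductive invariant, that the \emph{partial} interleaving $\P$ is dominant and locally correct after each augmentation (Lemma~\ref{lem:dominant-locally-correct}, which crucially uses minimality and Corollary~\ref{cor:helper-lemma}), and then transferring local correctness to any optimal $\P$-extension once $\resdist{\P}=0$ (Lemma~\ref{lem:optimal-extension-locally-correct}). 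Without these invariants, your contradiction argument at the last step has no leverage, so as written the proposal does not establish the theorem.
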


\subparagraph{Overview of the proof.}
To prove Theorem~\ref{thm:existence-locally-correct}, we give an explicit construction of a locally correct interleaving.
Specifically, we incrementally build a partial interleaving~$\P$ by augmenting it with ``bottleneck'' arrows.
We show that with every augmentation, the~$\P$-residual interleaving distance strictly decreases, until, after a finite number of iterations, it becomes zero.
To conclude the proof, we then show that any complete interleaving that extends the resulting partial interleaving~$\P$ is locally correct.

\subsection{Bottlenecks}\label{sec:bottlenecks}
The interleaving distance is determined by the greatest shift among all arrows of an optimal interleaving.
Locally, we might improve such an interleaving by ``pushing down'' some of the arrows, thereby decreasing their shifts. 
However, we cannot do this for all arrows; otherwise, we would obtain an interleaving whose shift is strictly smaller than the interleaving distance.
We are interested in a \emph{bottleneck}: a set of arrows that we cannot ``push down''.
In this section, we give a characterization of such a bottleneck, and we argue that, under some mild assumptions, all arrows in a bottleneck correspond to realizing critical pairs.

We first define the \emph{relative difference} of two, not necessarily related, partial up-maps~$\phi$ and~$\phi'$.
Specifically, consider the set of points for which either~$\phi$ is defined and~$\phi'$ is not, or for which both are defined but where they do not agree:
\[
S \coloneqq \dom(\phi) \setminus \{x \in \dom(\phi) \cap \dom(\phi') \mid \phi(x) = \phi'(x)\}. 
\]
We define the relative difference of~$\phi$ and~$\phi'$ as the restriction of~$\phi$ to~$S$, denoted~${\phi \setminus \phi' \coloneqq \phi \vert_{S}}$.
Similarly, for two partial interleavings~$\P = (\phi, \psi)$ and~$\P' = (\phi', \psi')$, we define the \emph{relative difference}~$\P \setminus \P'$ as the pair of relative differences~$(\phi \setminus \phi', \psi \setminus \psi')$.
The resulting pair is a restriction of~$\P$, so it directly follows that any relative difference is a partial interleaving.

\subparagraph{Augmentations.}
Fix a partial interleaving~$\P = (\phi, \psi)$.
In the remainder of this section, we assume that~$\P$ is finite, and that the~$\P$-residual interleaving distance is strictly positive, that is,~$\resdist{\P}> 0$.
A~$\P$-extension~$\Q$ is a~$\P$-\emph{augmentation} if (i) the~$\P$-residual shift of~$\Q$ is at most~$\resdist{\P}$, that is,~$\shift{\P}(\Q) \le \resdist{\P} $, and (ii) the~$\Q$-residual interleaving distance is strictly less than~$\resdist{\P}$, that is,~$\resdist{\Q} < \resdist{\P}$.
We say that~$\Q$ is \emph{minimal} if it does not extend any other~$\P$-augmentation.

\begin{restatable}{lemma}{existenceminimalaugmentation}
\label{lem:existence-minimal-augmentation}
    Any finite interleaving~$\P$ with~$\resdist{\P} > 0$ admits a finite minimal~$\P$-augmentation.
\end{restatable}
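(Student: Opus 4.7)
The plan is to prove this in two stages: first construct some finite $\P$-augmentation, then extract a minimal one from it.

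For the construction, I will apply Lemma~\ref{lem:existence-isolated-extension} to obtain an isolated optimal $\P$-extension $\I = (\alpha, \beta)$, and let $A \subseteq A_\I$ be the set of its arrows whose $\P$-residual shift equals $\resdist{\P}$. By isolation property~(1), $A$ is finite. Define $\Q$ by adjoining the arrows of $A$ to $\P$; this is a finite partial interleaving that extends $\P$, and its $\P$-residual shift is at most $\resdist{\P}$, giving condition~(i) of the definition of augmentation. It remains to show $\resdist{\Q} < \resdist{\P}$, for which I use $\I$ itself as a $\Q$-extension and argue $\shift{\Q}(\I) < \resdist{\P}$. The arrows of $A$ lie in $F[A_\Q]$ and hence contribute $\Q$-shift $0$. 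For every other arrow $(x,y)$ of $\I$ with $\P$-shift at least $\resdist{\P} - \eps$, isolation property~(2) provides a descendant $x' \preceq x$ with $(x',y) \in A \subseteq A_\Q$, so $(x,y) \in F[(x',y)]$ and its $\Q$-shift is also $0$. All remaining arrows of $\I$ have $\P$-shift strictly less than $\resdist{\P} - \eps$, so their $\Q$-shift is likewise strictly less than $\resdist{\P} - \eps$. Hence $\resdist{\Q} \le \shift{\Q}(\I) \le \resdist{\P} - \eps < \resdist{\P}$.

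For the minimality step, I plan to iteratively replace the current finite $\P$-augmentation with a strictly smaller one (i.e., one that the current augmentation strictly extends) until no such replacement exists. The main obstacle is termination, because a strict extension of finite partial interleavings may, besides removing arrows, consist of pushing an arrow target down continuously in the topological realization of $T_2$. To rule out infinite descending chains I intend to argue that any non-minimal finite $\P$-augmentation admits a strictly smaller augmentation whose new arrows land on $\P$-critical points of the intermediate stage, by applying Corollary~\ref{cor:using-critical-pair} to an optimal extension of the smaller candidate together with the critical-value construction from Lemma~\ref{lem:extension-critical-values}. Since any finite partial interleaving has only finitely many critical points, this restricts the chain to a combinatorially finite family of configurations, forcing termination and yielding a finite minimal $\P$-augmentation.

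The hardest part I expect is making this last reduction rigorous: pushing an arrow target down shrinks its fan and may raise the residual shift of other arrows, so one must verify that the discretized replacement still satisfies the strict inequality $\resdist{\Q'} < \resdist{\P}$. This will likely require a simultaneous adjustment of all pushed arrows, in the spirit of the ``push-down'' construction in the proof of Lemma~\ref{lem:extension-critical-values}, rather than a one-arrow-at-a-time argument.
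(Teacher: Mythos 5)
Your first stage is essentially the paper's proof verbatim: take an isolated optimal $\P$-extension $\I$ from Lemma~\ref{lem:existence-isolated-extension}, adjoin to $\P$ the finitely many arrows of $\I$ whose $\P$-shift equals $\resdist{\P}$ (the paper additionally fixes a tie-break, keeping the higher target when a point of $\dom(\P.\phi)$ or $\dom(\P.\psi)$ already carries an arrow, so that $\Q$ is a well-defined pair of partial up-maps extending $\P$), and then use isolation property~(2) to conclude $\resdist{\Q}\le\shift{\Q}(\I)\le\resdist{\P}-\eps<\resdist{\P}$. Where you diverge is the second stage: the paper ends with the single sentence that an augmentation with finitely many arrows ``implies that there exists a finite minimal augmentation,'' offering no descending-chain or discretization argument, whereas you devote most of your effort to exactly that step and, by your own account, leave it unfinished. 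Your worry is not unfounded---the ``extends'' order permits continuous push-downs of arrow targets, and the residual distance is not monotone along such chains, so minimality is not a purely combinatorial matter---and your proposed remedy (snapping candidate augmentations to critical pairs via Corollary~\ref{cor:using-critical-pair} and the push-down construction of Lemma~\ref{lem:extension-critical-values}, in the spirit of Lemma~\ref{lem:only-critical-pairs}) is a plausible route, but it is only sketched. So: on everything the paper actually argues, your proposal coincides with it; the one incomplete piece of your write-up is precisely the step the paper dispatches by assertion.
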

\begin{proof}
    By Lemma~\ref{lem:existence-isolated-extension}, there exists an isolated~$\P$-extension.
    Let~$\I$ be such an extension.
    Then in~$\I$, only finitely many arrows of~$\P$ have~$\P$-shift~$\resdist{\P}$.
    Let $\Q$ be the partial interleaving that consists of the arrows of~$\P$ and the arrows of~$\I$ with~$\P$-shift~$\resdist{\P}$.
    If a point~$x$ of~$T_1$ occurs in two arrows~$(x, y_1)$ and~$(x, y_2)$, with~$f_2(y_1) \le f_2(y_2)$, we take the arrow~$(x, y_2)$.
    We choose arrows symmetrically if a point of~$T_2$ occurs in two arrows.
    Since~$\I$ extends~$\P$, it directly follows that~$\Q$ extends~$\P$ as well.
    
    We now argue that~$\Q$ is an augmentation.
    By construction, each arrow of~$\Q$ has~$\P$-shift at most~$\resdist{\P}$, so property (i) holds.
    To see that property (ii) holds; observe that~$\I$ is a~$\Q$-extension.
    Since~$\I$ is isolated, there is an~$\eps > 0$ such that for each arrow~$(x, y)$ of~$\I$ with~$\P$-shift at least~$\resdist{\P} - \eps$, there exists a descendant~$x'$ of~$x$ such that~$(x', y)$ is an arrow of~$\I$.
    By construction, the arrow~$(x',y)$ is an arrow of~$\Q$.
    In other words, the~$\Q$-shift of the arrow~$(x, y)$ is~$0$.
    Additionally, the~$\Q$-shift of each arrow of~$\P$ is at most~$\resdist{\P}$.
    Together, it follows that all arrows of~$\I$ with shift at least~$\resdist{\P} - \eps$ have~$\Q$-shift~$0$.
    Hence, the~$\Q$-residual interleaving distance is at most~$\resdist{\P}-\eps < \resdist{\P}$.
    We conclude that~$\Q$ is an augmentation with a finite number of arrows, which implies that there exists a finite minimal augmentation.
\end{proof}

\begin{figure}
    \centering
    \includegraphics{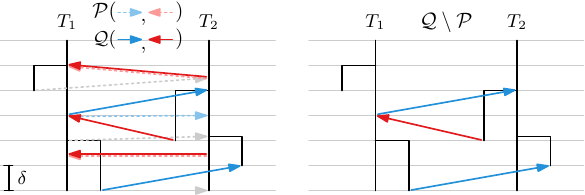}
    \caption{A partial interleaving~$\P$ with~$\resdist{\P} = \delta$, and a minimal augmentation~$\Q$ with~$\resdist{\Q} < \delta$; the dashed gray arrows showcase an optimal~$\Q$-extension (left). The bottleneck~$\Q \setminus \P$ (right).}
    \label{fig:bottleneck}
\end{figure}

\noindent
Let~$\Q$ be a minimal~$\P$-augmentation.
We refer to the relative difference~$\Q \setminus \P$ as a~$\P$-\emph{bottleneck}.
Intuitively, a bottleneck captures a minimal set of arrows that we need to ``add'' to decrease the residual interleaving distance.
See Figure~\ref{fig:bottleneck} for an illustration.

Recall (the proof of) Lemma~\ref{lem:extension-critical-values}.
If we apply the lemma to a given~$\P$-extension~$\I$ that does not use a realizing critical pair, we can construct another~$\P$-extension~$\I'$ that has strictly smaller shift.
The exact same proof still works for a slightly stronger statement: it suffices to assume that the relative difference~$\I \setminus \P$ does not use a realizing critical pair.
\begin{observation}\label{obs:realizing-critical-pair}
Let~$\P$ be a finite interleaving and let~$\I$ be a~$\P$-extension.
If~$\I \setminus \P$ does not use a realizing critical pair, there exists a~$\P$-extension~$\I'$ with~$\shift{\P}(\I') < \shift{\P}(\I)$.
\end{observation}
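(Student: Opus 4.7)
The plan is to recycle the proof of Lemma~\ref{lem:extension-critical-values} nearly verbatim. Given a $\P$-extension $\I=(\alpha,\beta)$ with $\shift{\P}(\I)=\delta$, let $\delta'\in\Delta[\P]$ be the greatest $\P$-critical value with $\delta'<\delta$. I would construct $\I'=(\alpha',\beta')$ exactly as in Lemma~\ref{lem:extension-critical-values}: set $\alpha'(x)\coloneqq\an{w_x}{h_x}$ with $v_x$ the highest descendant critical point of $x$, $w_x$ the highest descendant critical point of $\alpha(v_x)$, and $h_x\coloneqq\max(f_1(x)+\delta',f_2(w_x))$; $\beta'$ is defined symmetrically. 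The goal is then to show that $\I'$ is a $\P$-extension with $\shift{\P}(\I')\le\delta'$.

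The proof of Lemma~\ref{lem:extension-critical-values} invokes its hypothesis that ``$\I$ does not use a critical pair corresponding to $\delta$'' in exactly two places: (a) the fan-containment step bounding the $\phi$-shift of $\alpha'$ by $\delta'$, where it needs to rule out $\alpha(v_x)=w_x$ when $\shift{}(v_x,w_x)=\delta$; and (b) the zigzag subcase of the interleaving check, where it rules out $\delta^*=\delta$ for $\delta^*=\tfrac{1}{2}(f_1(v^*)-f_1(v))$. Every other step transfers without modification, since it does not appeal to this hypothesis. The key observation that makes the weaker hypothesis sufficient is the following: if $\I$ uses a realizing critical pair that $\I\setminus\P$ does not, then at least one of the arrows witnessing that pair must already lie in $\P$.

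For (a), if $\alpha(v_x)=w_x$ and $\shift{}(v_x,w_x)=\delta$, then the arrow $(v_x,w_x)$ lies in $A_\phi$; since this step only arises when $h_x=f_2(w_x)$, we have $\alpha'(x)=w_x$, so $(x,w_x)\in F[(v_x,w_x)]\subseteq F[A_\phi]$ and its $\phi$-shift is $0$. For (b), the equality $\delta^*=\delta$ forces $\beta(y)=v^*$ and $y=\alpha(v)$ at midpoint height, i.e., $\I$ uses the zigzag pair $(v,v^*)$; since $\I\setminus\P$ does not, either $\phi(v)=y$ or $\psi(y)=v^*$ must lie in $\P$. In either case $y\in C_2[\P]$ and $v^*\in C_1[\P]$, so the construction yields $\alpha'(v)=y$ and $\beta'(y)=v^*$, hence $\beta'(\alpha'(v))=v^*\succeq v$ directly, bypassing the need to derive $\delta^*<\delta$. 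The main obstacle is verifying that no other step of the lemma's proof silently relies on the stronger hypothesis; once the two local modifications above are inserted, the rest of the case analysis carries through unchanged and yields $\shift{\P}(\I')\le\delta'<\delta$.
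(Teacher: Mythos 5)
Your proposal is correct and takes essentially the same route as the paper: the paper's own proof of this observation is precisely the assertion that the argument of Lemma~\ref{lem:extension-critical-values} survives the weaker hypothesis, and you supply the missing verification, correctly locating the only two uses of that hypothesis (the fan-containment step bounding the $\phi$-shift of $\alpha'$, and the zigzag subcase of the interleaving check) and showing that when the relevant pair is used only through arrows of $\P$, the fan of $A_\phi$, respectively the identities $\alpha'(v)=y$ and $\beta'(y)=v^*$, close the gap. The same checks apply verbatim to the symmetric map $\beta'$ and are unaffected by the w.l.o.g.\ replacement of $\alpha$ by $\alpha^\uparrow[\delta']$ (arrows with shift exactly $\delta$ are not moved by the lift), so nothing further is needed.
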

We use this observation to prove the following lemma.

\begin{lemma}\label{lem:bottleneck-critical-pair}
    If~$\P$ is finite and~$\resdist{\P} > 0$, any~$\P$-bottleneck~$\Q \setminus \P$ uses a realizing critical pair.
\end{lemma}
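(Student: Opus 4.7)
The plan is to prove the lemma by contradiction. Assume $\Q \setminus \P$ does not use any realizing critical pair. The strategy is to exhibit a complete $\P$-extension whose $\P$-shift is strictly less than $\resdist{\P}$, which contradicts the definition of $\resdist{\P}$ as an infimum over complete $\P$-extensions.

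I would start by invoking Theorem~\ref{thm:existence-optimal-extension} to fix an optimal $\Q$-extension $\I = (\alpha, \beta)$, which satisfies $\shift{\Q}(\I) = \resdist{\Q} < \resdist{\P}$. Since $\Q$ extends $\P$, this $\I$ is also a complete $\P$-extension, and the first claim to establish is $\shift{\P}(\I) \leq \resdist{\P}$. I would classify each arrow $a = (x, y)$ of $\I$: if $a \in F[A_\P]$ then $\shift{\P}(a) = 0$; if $a \notin F[A_\Q]$ then $\shift(a) = \shift{\Q}(a) \leq \resdist{\Q} < \resdist{\P}$; and if $a \in F[A_\Q] \setminus F[A_\P]$, the $\Q$-arrow $(x_0, y)$ covering $a$ must itself lie outside $F[A_\P]$, hence $(x_0, y) \in A_{\Q \setminus \P}$, and the augmentation bound $\shift{\P}(\Q) \leq \resdist{\P}$ combined with $x \succeq x_0$ yields $\shift(a) \leq \shift(x_0, y) \leq \resdist{\P}$.

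The central step is to argue that $\I \setminus \P$ uses no realizing critical pair. Suppose for contradiction that $\I \setminus \P$ uses an arrow critical pair $(v, w)$, so $\alpha(v) = w$ and $(v, w) \notin A_\P$. Since $\I$ extends $\Q$, we have $\alpha(v) \succeq \Q.\phi(v)$ whenever defined. The subcase $\Q.\phi(v) = w$ immediately gives $(v, w) \in A_{\Q \setminus \P}$, contradicting our assumption. In the remaining subcases ($v \notin \dom(\Q.\phi)$ or $\Q.\phi(v) \prec w$), I would locally modify $\alpha$ (and $\beta$) via a pushing-down construction in the spirit of Lemma~\ref{lem:extension-critical-values}, exploiting the slack provided by $\Q.\phi(v) \prec w$ or the absence of a $\Q$-constraint at $v$, so that $v$ no longer maps to $w$. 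Iterating over the finitely many realizing critical pairs used by $\I \setminus \P$, and handling the symmetric arrow case and zigzag critical pairs analogously, yields an interleaving $\I^*$ with $\shift{\P}(\I^*) \leq \resdist{\P}$ and $\I^* \setminus \P$ using no realizing critical pair.

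Finally, Observation~\ref{obs:realizing-critical-pair} applied to $\I^*$ produces a $\P$-extension $\I'$ with $\shift{\P}(\I') < \shift{\P}(\I^*) \leq \resdist{\P}$, contradicting the definition of $\resdist{\P}$ and completing the proof. The main obstacle is the local modification step: it requires constructing a new interleaving in which the arrow $(v, w)$ is effectively deleted while preserving the up-map property, continuity, the interleaving condition with $\beta$, and, crucially, the extension of $\Q$ (not merely $\P$), all without increasing the $\P$-shift. I expect this to follow the flavor of Lemma~\ref{lem:extension-critical-values}, localized around the offending critical pair and adapted to the stronger $\Q$-extension constraint.
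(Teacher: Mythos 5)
Your opening matches the paper: you take an optimal $\Q$-extension $\I$ and correctly verify $\shift{\P}(\I) \le \resdist{\P}$, so $\I$ is also an optimal $\P$-extension. The gap is your ``central step''. The paper does not try to show that (a modification of) $\I \setminus \P$ avoids realizing critical pairs; on the contrary, it applies Observation~\ref{obs:realizing-critical-pair} to conclude that $\I \setminus \P$ \emph{must} use a realizing critical pair $b$ (otherwise there would be a $\P$-extension with $\P$-shift strictly below $\resdist{\P}$), and then exploits that the arrows of $\I$ realizing $b$ have shift $\resdist{\P} > \resdist{\Q} = \shift{\Q}(\I)$, so their $\Q$-residual shift is $0$; hence they are inherited from $\Q$, and since they are not arrows of $\P$, the pair $b$ is used by $\Q \setminus \P$. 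No new interleaving is constructed at all.

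Your plan instead requires building an optimal $\P$-extension $\I^*$ with $\shift{\P}(\I^*) \le \resdist{\P}$ whose relative difference $\I^* \setminus \P$ uses no realizing critical pair --- an object that Observation~\ref{obs:realizing-critical-pair}, applied to an optimal extension, shows cannot exist. That is logically admissible in a contradiction argument, but it means the entire content of the lemma sits in the ``local modification'' step, which you only gesture at. Nothing in the proposal shows how the slack $\Q.\phi(v) \prec w$ (or $v \notin \dom(\Q.\phi)$) lets you reroute $v$ away from $w$ while preserving the up-map property, the interleaving condition (which forces a compensating change to $\beta$), the extension of $\Q$, and the bound $\shift{\P}(\I^*) \le \resdist{\P}$. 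Surgery of exactly this kind is what the paper performs in Lemma~\ref{lem:only-critical-pairs}, and there it needs the extra hypothesis that $\P$ is dominant plus a lengthy case analysis (an auxiliary claim about descendants of $x$, cases on whether the target is a critical point and whether $x$ is a leaf). You also do not address that each modification may create new uses of realizing critical pairs elsewhere, so the proposed iteration over the finitely many pairs has no termination argument, and zigzag pairs are dismissed as analogous without justification. As it stands the proof is incomplete at its decisive step; the short route is to keep the realizing pair that $\I \setminus \P$ is forced to use and transfer it to $\Q \setminus \P$ via the $\Q$-shift-zero argument, rather than to eliminate it.
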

\begin{proof}
    Let~$\I$ be an optimal~$\Q$-extension.
    Since~$\Q$ extends~$\P$, so does~$\I$.
    Moreover, as~$\I$ is optimal, the~$\Q$-shift of~$\I$ is equal to the~$\Q$-residual interleaving distance, that is,~$\shift{\Q}(\I) = \resdist{\Q}$.
    Since~$\Q$ is an augmentation, the shift of all arrows of~$\Q \setminus \P$ is at most~$\resdist{\P}$, and~$\resdist{\Q} < \resdist{\P}$.
    Combining, we obtain~$\shift{\P}(\I) \le \resdist{\P}$.
    In other words,~$\I$ is an optimal~$\P$-extension.
    By Observation~\ref{obs:realizing-critical-pair}, it follows that~$\I \setminus \P$ uses a realizing critical pair~$b$.
    As~$\resdist{\Q} < \resdist{\P}$, we get that the~$\Q$-shift of the arrows that correspond to~$b$ must be~$0$.
    In other words,~$\Q \setminus \P$ uses~$b$.
\end{proof}

\noindent
A direct consequence of Lemma~\ref{lem:bottleneck-critical-pair} is that if, for a given~$\P$-extension~$\Q$, the relative difference~$\Q \setminus \P$ does not use a realizing critical pair, then~$\Q$ cannot be a~$\P$-augmentation.
In other words, if the~$\P$-shift of each arrow of~$\Q$ is strictly less than the~$\P$-residual interleaving distance, then the~$\Q$-residual interleaving distance cannot be less than~$\resdist{\P}$.

\begin{corollary}\label{cor:helper-lemma}
    Let~$\P$ be a finite interleaving with~$\resdist{\P} > 0$, and let~$\Q$ be a~$\P$-extension.
    If~$\shift{\P}(a) < \resdist{\P}(T_1, T_2)$ for all arrows~$a \in A_\Q$, then $\resdist{\Q}(T_1, T_2) \ge \resdist{\P}(T_1, T_2)$.
\end{corollary}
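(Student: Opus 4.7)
The plan is to argue by contradiction. Suppose $\resdist{\Q}(T_1, T_2) < \resdist{\P}(T_1, T_2)$. The hypothesis $\shift{\P}(a) < \resdist{\P}$ for every $a \in A_\Q$ gives $\shift{\P}(\Q) \le \resdist{\P}$, so both defining conditions of a $\P$-augmentation are satisfied and $\Q$ is a $\P$-augmentation. The goal is then to derive a contradiction from this fact.

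I would then pass from $\Q$ to a minimal $\P$-augmentation $\Q^*$ that is a restriction of $\Q$, obtained by iteratively removing arrows from $\Q$ while preserving the augmentation property (this terminates in the settings in which the corollary is applied, namely when $\Q$ is finite). By Lemma~\ref{lem:bottleneck-critical-pair}, the bottleneck $\Q^* \setminus \P$ uses a realizing critical pair~$b$.

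The crux is a dichotomy on how the arrow(s) witnessing~$b$ sit relative to the $\P$-fan; I focus on an arrow critical pair $b = (v, w)$, the zigzag case being analogous. The arrow $(v, w)$ lies in $A_{\Q^*\setminus\P} \subseteq A_\Q$ and has actual shift exactly $\resdist{\P}$. If $(v, w) \notin F[A_{\P.\phi}]$, then $\shift{\P}(v, w) = \resdist{\P}$ contradicts the hypothesis directly. Otherwise some $x \in \dom(\P.\phi)$ with $x \preceq v$ satisfies $\P.\phi(x) = w$. Here $x = v$ is excluded because $(v, w) \in A_{\Q^* \setminus \P}$, and in fact ancestor-preservation of $\P.\phi$ rules out $v \in \dom(\P.\phi)$ altogether (otherwise $\Q^*.\phi(v) = w \succeq \P.\phi(v)$ would force $\P.\phi(v) \prec w$, conflicting with $\P.\phi(x) \preceq \P.\phi(v)$). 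Next, ancestor-preservation of $\Q^*.\phi$ combined with $\Q^*.\phi(v) = w$ and $x \preceq v$ forces $\Q^*.\phi(x) = w$, so $(x, w) \in A_{\Q^*.\phi}$. This yields the fan-subsumption $F[(v, w)] \subseteq F[(x, w)] \subseteq F[A_{\Q^*.\phi} \setminus \{(v, w)\}]$, so removing $(v, w)$ from $\Q^*$ leaves the fan $F[A_{\Q^*.\phi}]$ and hence the $\Q^*$-residual shift of every $\Q^*$-extension unchanged. Together with $v \notin \dom(\P.\phi)$, this makes $\Q^* \setminus \{(v, w)\}$ a strictly smaller $\P$-augmentation, contradicting the minimality of~$\Q^*$.

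The main obstacle I anticipate is precisely this second case of the dichotomy: the arrow realizing the bottleneck critical pair may a priori have $\P$-shift zero (lying inside the $\P$-fan), so the hypothesis is not violated by a direct shift computation. Resolving it requires the fan-subsumption argument above, which critically leverages the minimality of~$\Q^*$ to expose the offending arrow as redundant.
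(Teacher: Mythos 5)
Your overall route is the paper's route: there the corollary is obtained, in two sentences of prose, as a contrapositive consequence of Lemma~\ref{lem:bottleneck-critical-pair} (a $\Q$ violating the conclusion would be a $\P$-augmentation whose relative difference uses a realizing critical pair, which is then played off against the hypothesis that every arrow of $\Q$ has $\P$-shift below $\resdist{\P}$). Your extra work on the case where the realizing arrow lies inside $F[A_\P]$ (so that its $\P$-shift is $0$ and no direct contradiction with the hypothesis arises) addresses a case the paper's terse derivation glosses over, and your fan-subsumption argument there is sound: the witness $x\prec v$ with $\P.\phi(x)=w$ forces $\Q^*.\phi(x)=w$ and $v\notin\dom(\P.\phi)$, so $F[(v,w)]\subseteq F[(x,w)]$ and deleting $(v,w)$ leaves the fan, the extension property, and the augmentation property intact, contradicting minimality; the zigzag case is indeed analogous.

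There are, however, two gaps. The substantive one is finiteness: your reduction to a minimal augmentation by iteratively deleting arrows (and the existence of an optimal $\Q^*$-extension used inside Lemma~\ref{lem:bottleneck-critical-pair}) requires $\Q$ to be finite, but the corollary is stated for an arbitrary $\P$-extension $\Q$, and the paper applies it in that generality: in the proof of Lemma~\ref{lem:optimal-extension-locally-correct} it is invoked with $\Q=\R$ a restriction of a complete interleaving to arbitrary subsets $S_1,S_2$, which need not be finite. So, as written, your argument does not establish the statement as given, only the finite special case. The second, smaller issue is that the $\Q^*$ you build is minimal only with respect to deleting arrows, whereas ``minimal'' in the paper (and hence in the definition of a bottleneck, to which Lemma~\ref{lem:bottleneck-critical-pair} refers) means that $\Q^*$ extends no other $\P$-augmentation, a strictly stronger condition, since one can also lower targets; thus the lemma does not literally apply to your $\Q^*$. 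This is repairable---the lemma's proof never uses minimality, only that $\Q^*$ is a finite augmentation admitting an optimal extension (Theorem~\ref{thm:existence-optimal-extension})---but you should argue via that strengthened form rather than cite the lemma as stated.
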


\subparagraph{Dominant matchings.}
We say~$\P$ is \emph{dominant} if the shift of each arrow of~$\P$ is strictly greater than~$\resdist{\P}$.
If~$\P$ is dominant, we can characterize any minimal augmentation~$\Q$.
Recall that we say that~$\Q$ uses~$\P$ if they completely agree on the domains of~$\P.\phi$ and~$\P.\psi$.

\begin{lemma}\label{lem:optimal-uses}
Let~$\P$ be a dominant, finite interleaving with~$\resdist{\P} > 0$.
For any minimal~$\P$-augmentation~$\Q$, (i) $\Q$ uses~$\P$, and (ii) for each arrow~$a$ of~$\Q \setminus \P$, it holds that~$\shift{}(a) = \resdist{\P}$.
\end{lemma}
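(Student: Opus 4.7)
My plan is to establish (i) via a direct argument from dominance and the augmentation property, and then to prove (ii) via a removal argument exploiting minimality.

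For (i), without loss of generality, suppose for contradiction some $x \in \dom(\phi)$ satisfies $\Q.\phi(x) \succ \phi(x)$ (the $\psi$-case is symmetric). Let $a \coloneqq (x, \Q.\phi(x))$. Since $\P$ is dominant, $\shift{}(x, \phi(x)) > \resdist{\P}$, and since $\Q.\phi(x)$ is a strict ancestor of $\phi(x)$, also $\shift{}(a) > \resdist{\P}$. As $\Q$ is a $\P$-augmentation, the $\P$-residual shift of $a$ is at most $\resdist{\P} < \shift{}(a)$, so $a$ must lie in the fan of $\P$. Hence there exists $x' \in \dom(\phi)$ with $x' \preceq x$ and $\phi(x') = \Q.\phi(x)$. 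The up-map property of $\phi$ gives $\phi(x') \preceq \phi(x)$, so $\Q.\phi(x) \preceq \phi(x)$, contradicting $\Q.\phi(x) \succ \phi(x)$.

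For (ii), fix an arrow $a \in A_{\Q \setminus \P}$; by (i) we may assume $a = (x, \Q.\phi(x))$ with $x \notin \dom(\phi)$ (the symmetric case is analogous). Let $\Q'$ be the partial interleaving obtained by removing $x$ from $\dom(\Q.\phi)$ while leaving $\Q.\psi$ unchanged. Restriction preserves the up-map and partial interleaving properties, $\Q'$ still extends $\P$ because $x \notin \dom(\phi)$, and $\shift{\P}(\Q') \le \shift{\P}(\Q) \le \resdist{\P}$. The key step is to show that $\resdist{\Q'} < \resdist{\P}$ whenever either $\shift{}(a) < \resdist{\P}$ or $a$ lies in the fan of $\P$; this makes $\Q'$ a $\P$-augmentation that $\Q$ strictly extends, contradicting minimality.

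To verify this, let $\I$ be an optimal $\Q$-extension, so $\shift{\Q}(\I) = \resdist{\Q} < \resdist{\P}$, and note that $\I$ also extends $\Q'$. The fans of $\Q$ and $\Q'$ can differ only on arrows lying in the fan of $a$; any such arrow has the form $(x', \Q.\phi(x))$ with $x' \succeq x$, and hence shift at most $\shift{}(a)$. If $a$ lies in the fan of $\P$, witnessed by some $(x'', \phi(x'')) \in A_\P$ with $x'' \preceq x$ and $\phi(x'') = \Q.\phi(x)$, then the fan of $a$ is already contained in $F[A_\P] \subseteq F[A_{\Q'}]$, so the fans coincide and $\shift{\Q'}(\I) = \shift{\Q}(\I) < \resdist{\P}$. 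Otherwise $\shift{}(a) < \resdist{\P}$, and any newly exposed arrow has shift less than $\resdist{\P}$, so $\shift{\Q'}(\I) \le \max(\resdist{\Q}, \shift{}(a)) < \resdist{\P}$. Either way, minimality is contradicted. Consequently $a$ does not lie in the fan of $\P$, so its $\P$-residual shift equals $\shift{}(a) \le \resdist{\P}$; combined with $\shift{}(a) \ge \resdist{\P}$ from the removal argument, we conclude $\shift{}(a) = \resdist{\P}$. The main technical step is the fan bookkeeping in the two subcases above; the remaining checks are routine.
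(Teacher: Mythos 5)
Your proof is correct and follows essentially the same route as the paper: part (i) via dominance plus the fan/up-map contradiction, and part (ii) by deleting the arrow~$a$ from~$\Q$ and contradicting minimality. You are in fact slightly more careful than the paper in part (ii), explicitly ruling out the case that~$a$ lies in the fan of~$\P$ (the paper asserts~$\shift{}(a) \le \resdist{\P}$ ``by definition''); the only nitpick is that invoking an \emph{optimal} $\Q$-extension implicitly presupposes~$\Q$ is finite (Theorem~\ref{thm:existence-optimal-extension}), whereas a $\Q$-extension with $\Q$-shift strictly below~$\resdist{\P}$ --- which exists directly from~$\resdist{\Q} < \resdist{\P}$ --- suffices, as in the paper's argument.
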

\begin{proof}
We first show (i).
Suppose for a contradiction that an arrow~$a$ of~$\P$ is not used by~$\Q$.
Because~$\P$ is dominant, the shift of~$a$ is at least~$\resdist{\P}$.
Moreover, since~$\Q$ extends~$\P$, there is an arrow~$a'$ that extends~$a$.
Therefore, the arrow~$a'$ is not in the fan of~$\P$, so the~$\P$-shift of~$a'$ is equal to the shift of~$a'$.
We obtain~$\shift{\P}(\Q) \ge \shift{\P}(a') > \shift{}(a) \ge \resdist{\P}$.
However, $\Q$ is an augmentation: we have reached a contradiction.
Property (i) follows.

Next, we show (ii).
By definition, the shift of~$a$ is at most~$\resdist{\P}$, that is,~$\shift{}(a) \le \resdist{\P}$.
For a contradiction, assume that the shift of~$a$ is strictly less than~$\resdist{\P}$, that is,~$\shift{}(a) < \resdist{\P}$.
In particular, consider the partial interleaving~$\Q'$ obtained after removing~$a$ from~$\Q$.
Since the shift of~$a$ is less than~$\resdist{\P}$, it is not an arrow of~$\P$.
It follows that~$\Q'$ extends~$\P$.
Lastly, any (complete)~$\Q$-extension~$\I$ is trivially also a~$\Q'$-extension.
Since~$\shift{}(a) < \resdist{\P}$, we then know that~$\shift{\Q'}(\I) = \max(\shift{}(a), \shift{\Q}(\I)) \le \resdist{\P}$.
But then~$\Q$ is not a minimal augmentation.
\end{proof}

\begin{restatable}{lemma}{onlycriticalpairs}
\label{lem:only-critical-pairs}
Let~$\P$ be a dominant, finite interleaving with~$\resdist{\P} > 0$.
For any finite minimal~$\P$-augmentation~$\Q$, $\Q \setminus \P$ uses only realizing critical pairs.
\end{restatable}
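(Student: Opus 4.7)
The plan is to show that whenever $\Q\setminus\P$ uses a $\P$-critical pair $b$, the critical value associated with $b$ must equal $\resdist{\P}$, so in particular $b$ is realizing. Lemma~\ref{lem:optimal-uses} already does essentially all the work: part~(i) guarantees that $\Q$ uses $\P$, so the relative difference $\Q\setminus\P$ is simply the restriction of $\Q$ to the points of its domain that lie outside $\dom(\P.\phi)$ and $\dom(\P.\psi)$; and part~(ii) tells us that every arrow of $\Q\setminus\P$ has shift exactly $\resdist{\P}$. What remains is to unpack the two definitions of ``uses'' and to read off the appropriate critical value.

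First I would dispatch the arrow critical pair case. Suppose $b=(v,w)$ with $v\in C_1[\P]$ and $w\in C_2[\P]$; the case $b=(w,v)$ is symmetric. Unwinding the definition, $\Q\setminus\P$ uses $b$ precisely when $(v,w)\in A_{\Q\setminus\P}$. Applying Lemma~\ref{lem:optimal-uses}(ii) to this arrow yields $\shift{}(v,w)=\resdist{\P}$, which by definition is the critical value associated with $b$, so $b$ is realizing.

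Next I would handle the zigzag case. Suppose $b=(v_1,v_2)$ is a zigzag critical pair with $v_1,v_2\in V(T_1)$; the zigzag case in $T_2$ is symmetric. Then $\Q\setminus\P$ uses $b$ means there is a point $y\in T_2$ at height $f_1(v_1)+\tfrac{1}{2}(f_1(v_2)-f_1(v_1))$ such that $(v_1,y)$ and $(y,v_2)$ are both arrows of $\Q\setminus\P$. Applying Lemma~\ref{lem:optimal-uses}(ii) to $(v_1,y)$ forces $\shift{}(v_1,y)=\resdist{\P}$; but $\shift{}(v_1,y)=f_2(y)-f_1(v_1)=\tfrac{1}{2}(f_1(v_2)-f_1(v_1))$ is precisely the critical value attached to $b$. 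Hence $b$ is realizing.

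I do not foresee a serious obstacle: the heavy lifting—dominance plus minimality forcing $\Q$ to use $\P$ and every new arrow to have shift exactly $\resdist{\P}$—is already packaged into Lemma~\ref{lem:optimal-uses}. The only slightly delicate point is ensuring that the arrows witnessing ``$\Q\setminus\P$ uses $b$'' genuinely live in $A_{\Q\setminus\P}$ rather than being silently inherited from $\P$, which is exactly what part~(i) of Lemma~\ref{lem:optimal-uses} rules out; without dominance this could fail, which is why the hypothesis that $\P$ is dominant is essential.
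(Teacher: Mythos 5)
There is a genuine gap: you have proven a much weaker statement than the lemma asserts, because you read ``$\Q\setminus\P$ uses only realizing critical pairs'' as ``every critical pair that $\Q\setminus\P$ happens to use is realizing.'' The intended (and needed) meaning is stronger: \emph{every arrow} of the bottleneck $\Q\setminus\P$ corresponds to a realizing critical pair, i.e.\ no arrow of $\Q\setminus\P$ can join a non-critical configuration at all. The paper's proof makes this explicit by assuming, for contradiction, that $\Q\setminus\P$ contains an arrow $a=(x,y)$ that does \emph{not} correspond to any critical pair, and then showing that $\Q'\coloneqq\Q$ with $a$ removed is still a $\P$-augmentation, contradicting minimality. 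That step is the real content of the lemma and is far from a definition-unpacking exercise: one must take an optimal $\Q$-extension $\I$, establish a separation claim (every strict descendant of $x$ maps strictly and uniformly below $y$), and then perturb $\alpha$ and $\beta$ on small neighborhoods $X^\eps$ and $Y_\eps$ --- with a case analysis on whether $y$ is a critical point and whether $x$ is a leaf --- to produce a $\Q'$-extension whose $\Q'$-shift stays strictly below $\resdist{\P}$. Lemma~\ref{lem:optimal-uses} only gives you that $\Q$ uses $\P$ and that each bottleneck arrow has shift exactly $\resdist{\P}$; it says nothing about the endpoints of those arrows being critical points or forming a zigzag.

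The distinction matters downstream: Lemma~\ref{lem:specifies-subdivided} invokes this lemma precisely to conclude that bottleneck arrows have critical endpoints (so no new unspecified non-vertex critical points appear and at least one new vertex gets specified), which drives the termination of the construction in Theorem~\ref{thm:existence-locally-correct}. Under your reading the lemma would be an immediate corollary of Lemma~\ref{lem:optimal-uses}(ii) (the shift of each new arrow equals $\resdist{\P}$, hence any used pair has critical value $\resdist{\P}$) --- which is exactly the part of your argument that is correct --- but it would be too weak to support that termination argument. So the missing idea is the minimality-based removal argument: showing that an arrow whose endpoints do not form (part of) a critical pair can be discarded from $\Q$ while preserving the augmentation property.
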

\begin{proof}
    For a contradiction, assume that~$\Q \setminus \P$ uses an arrow~$a = (x, y)$ that does not correspond to a critical pair.
    By Lemma~\ref{lem:optimal-uses}, we know that the shift of~$a$ is exactly~$\resdist{\P}$.
    Let~$\Q'$ be the partial interleaving obtained by removing~$a$ from~$\Q$.
    By construction,~$\Q$ extends~$\Q'$.
    We argue that~$\Q'$ is an augmentation, thereby contradicting that~$\Q$ is minimal.
    Without loss of generality, assume~$x \in T_1$ and~$y \in T_2$.    
    Let~$\I = (\alpha, \beta)$ be an optimal~$\Q$-extension.
    As~$\Q$ is a~$\P$-augmentation, it follows that~$\I$ is also an optimal~$\P$-extension.

    If~$\I$ uses an arrow~$(x', y)$ for some descendant~$x'\preceq x$, then~$\shift{}(x', y) > \shift{}(x, y) = \resdist{\P}$.
    Because~$\I$ is an optimal~$\Q$-extension, the~$\Q$-shift of~$\I$ is strictly less than~$\resdist{\P}$, so the~$\Q$-shift of~$(x', y)$ must be~$0$.
    In other words, the arrow~$(x', y)$ lies in the fan of~$\Q.\phi$, that is, there exists a descendant~$x''$ of~$x'$ such that the arrow~$(x'', y)$ is used by~$\Q$.
    By construction, the arrow~$(x'', y)$ is also used by~$\Q'$.
    But then, the point~$x''$ is a descendant of~$x$, so the~$\Q'$-shift of~$(x, y)$ is also equal to~$0$.
    Hence,~$\I$ is a~$\Q'$-extension with~$\Q'$-shift strictly smaller than~$\resdist{\P}$.
    This implies that~$\Q'$ is a~$\P$-augmentation, and we reach a contradiction.
    For the remainder, we assume that for all strict descendants~$x'$ of~$x$, the point~$\alpha(x')$ is a strict descendant of~$y$.

    \begin{claim}\label{claim:A}
        For some~$\eps > 0$, each strict descendant~$x'$ of~$x$ satisfies~$f_2(\alpha(x')) < f_2(y) - \eps$.
    \end{claim}
    \begin{claimproof}
    As~$\Q$ is a~$\P$-augmentation, it holds that~$\resdist{\Q} < \resdist{\P}$.
    Moreover, as~$\Q$ is finite, there are only finitely many~$\Q$-critical values.
    By Corollary~\ref{cor:critical-values}, both~$\resdist{\P}$ and~$\resdist{\Q}$ are equal to~$\Q$-critical values.
    Hence, there exists an~$\eps' > 0$ such that~$\resdist{\Q} + \eps' < \resdist{\P}$.
    If there is no strict descendant~$x'$ of~$x$ with~$f_2(\alpha(x')) \ge f_2(y) - \eps'$, then our claim immediately holds if we take~$\eps = \eps'$.
    Otherwise, let~$y' = \alpha(x')$; we assumed that~$y' \neq y$.
    It follows that the arrow~$(x', y')$ satisfies~$\shift{}(x', y') > \resdist{\Q}$.
    Since~$\I$ is an optimal~$\Q$-extension, we must have~$\shift{\Q}(x', y') \le \resdist{\Q}$; we obtain~$\shift{\Q}(x', y') = 0$.
    So, the arrow~$(x', y')$ lies in the fan of~$\Q.\phi$.
    In other words, there is a descendant~$x''$ of~$x'$ such that the arrow~$(x'', y')$ is used by~$\Q$.
    Now, the points~$y'$ and~$y$ are two distinct~$\Q$-critical points.
    There are only finitely many~$\Q$-critical points, so there exists~$\eps''$ such that~$f_2(y') + \eps'' < f_2(y)$.
    \end{claimproof}

    \noindent
    A direct consequence of Claim~\ref{claim:A} is that if~$x$ is not a leaf, the point~$y$ must have a descendant.
    If~$x$ is a leaf, then~$y$ cannot be a leaf, since otherwise the pair~$(x, y)$ is an arrow critical pair.
    Either way, $y$ cannot be a leaf.
    We construct a pair~$\I' = (\alpha', \beta')$ that does not use~$a$.
    For~$\eps > 0$, we use~$X^\eps$ to denote the set of ancestors~$x'$ of~$x$ with~$f_1(x') - f_1(x) < \eps$, and we use~$Y_\eps$ to denote the set of descendants~$y'$ of~$y$ with~$f_2(y) - f_2(y') < \eps$.
    
    \proofsubparagraph{Case 1: $y$ is a critical point.}
    If~$y$ is a vertex, then for any critical point~$v$ of~$T_1$, the pair~$(v, y)$ corresponds to a critical pair.
    In particular, this means that~$x$ cannot be a critical point.
    By Claim~\ref{claim:A}, we know that there exists an~$\eps_1 > 0$ such that (i) all descendants~$x'$ of~$x$ satisfy~$f_2(\alpha(x')) < f_2(y) - \eps_1$.
    Since~$x$ is not a critical point, it cannot be a leaf.
    Hence, the point~$x$ has a strict descendant~$x'$, which means that there must be a strict descendant~$y^*$ of~$y$ at height~$f_2(y^*) = f_1(y) - \eps_1$ that satisfies~$y^* \succeq \alpha(x')$ for all~$x'$.
    Moreover, let~$\eps_2 > 0$ be such that (ii) there is no~$\Q$-critical point in~$X^{\eps_2}$.
    Lastly, let~$\eps_3 > 0$ such that (iii) $\resdist{\Q} + 2\eps_3 < \resdist{\P}$.
    Set~$\eps \coloneqq \min(\eps_1, \eps_2, \eps_3)$.
    For~$x' \in T_1$ and~$y' \in T_2$, we define the maps~$\alpha'$ and~$\beta'$ as follows:
    \[
        \alpha'(x') \coloneqq \begin{cases}
            \an{y^*}{\max(f_1(x'), f_2(y)-\eps)}, & \text{if } x' \in X^\eps, \\
            \alpha(x') & \text{otherwise,}
        \end{cases} \qquad
        \beta'(y') \coloneqq \begin{cases}
            \an{x}{f_1(x) + \eps}, & \text{if } \beta(y') \in X^\eps, \\
            \beta(y') & \text{otherwise.}
        \end{cases}
    \]
    Since~$f_2(y^*) = f_2(y) - \eps_1 \le f_2(y) - \eps$ and~$f_1(x) \le f_1(x) + \eps$, both~$\alpha'$ and~$\beta'$ are well-defined.

    
    We first argue that~$\alpha'$ is an up-map.
    By construction, we have~$f_2(\alpha(x')) \ge f_1(x')$ for all~$x'$ of~$T_1$.
    To see that~$\alpha'$ preserves ancestors, let~$x_1 \preceq x_2$ be two points of~$T_1$.
    If both~$x_1$ and~$x_2$ lie within, or if both lie outside of~$X^\eps$, then we immediately obtain~$\alpha'(x_1) \preceq \alpha'(x_2)$.
    If~$x_1$ lies within and~$x_2$ lies outside of~$X^\eps$, we get~$\alpha'(x_1) \preceq \alpha(x_1)$ and~$\alpha'(x_2) = \alpha(x_2)$.
    Since~$\alpha$ preserves ancestors, we obtain~$\alpha'(x_1) \preceq \alpha'(x_2)$.
    Otherwise, if~$x_1$ lies outside of~$X^\eps$ and~$x_2$ lies within~$X^\eps$, then, by choice of~$y^*$, we know that~$\alpha'(x_1)$ is a (strict) descendant of~$y^*$.
    It directly follows that~$\alpha'(x_1) = \alpha(x_1) \preceq \alpha'(x_2)$.
    By property (ii), there are no~$\Q$-critical points, and hence no~$\P$-critical points within~$X^\eps$.
    So,~$\alpha'$ trivially extends~$\P.\phi$.
    By construction, the~$\Q'.\phi$-shift of~$\alpha'$ is strictly less than~$\resdist{\P}$.

    Next, we argue that~$\beta'$ is an up-map.
    Let~$y' \in T_2$.
    Observe that, by construction, we have~$\beta'(y')$ is an ancestor of~$\beta(y')$.
    This directly implies that~$\beta'$ extends~$\P.\psi$ and that~$f_1(\beta'(y')) \ge f_2(y')$.
    To see that~$\beta'$ preserves ancestors, consider two points~$y_1 \preceq y_2$ of~$T_2$.
    The only non-trivial case is if~$\beta(y_1) \in X^\eps$ and~$\beta(y_2) \notin X^\eps$.
    Then, the point~$\beta'(y_2) = \beta(y_2)$ must be ancestor of every point of~$X^\eps$.
    In particular, this means that~$\beta'(y_2) \succeq \an{x}{f_1(x)+\eps} = \beta'(y_1)$.
    So~$\beta'$ is an up-map.
    Lastly, we show that the~$\P.\psi$-shift of~$\beta'$ is bounded by~$\resdist{\P}$.
    Let~$y' \in T_2$.
    If~$\beta(y') \notin X$, then we directly obtain~$\shift{\P.\psi}(y, \beta'(y)) = \shift{\P.\psi}(y, \beta(y)) \le \resdist{\P}$. 
    Otherwise, if~$\beta(y') \in X^\eps$, then we get~$f_1(\beta'(y')) = f_1(x) + \eps$.
    There is no critical point in~$X^\eps$, so we must have~$\shift{}(y', \beta(y')) < \resdist{\Q}$.
    So, by property (iii), it follows that~$\shift{}(y', \beta'(y')) \le \shift{}(y', \beta(y')) + \eps < \resdist{\Q} + \eps < \resdist{\P}$.
    As a result, the~$\Q'.\psi$-shift of~$\beta'$ is strictly less than~$\resdist{\P}$.

    Finally, we show that~$\I'$ is an interleaving.
    Let~$x' \in T_1$.
    If~$x' \notin X^\eps$, we get~$\beta'(\alpha'(x')) = \beta'(\alpha(x')) \succeq \beta(\alpha(x')) \succeq x'$.
    Otherwise, if~$x' \in X^\eps$, then let~$v$ be the lower endpoint of the edge that contains~$x'$.
    We know that~$v$ is not within~$X^\eps$, so~$\beta'(\alpha'(v))$ is an ancestor of~$v$.
    As~$\alpha'$ is an up-map, we get~$\alpha'(x')$ is an ancestor of~$\alpha'(v)$.
    Similarly, since~$\beta'$ is an up-map, we get~$\beta'(\alpha(x'))$ is an ancestor of~$\beta'(\alpha'(v))$.
    So,~$\beta'(\alpha'(x'))$ is an ancestor of~$v$; it follows that it is also an ancestor of~$x'$.
    Lastly, let~$y' \in T_2$.
    If~$\beta(y') \notin X^\eps$, we get~$\alpha'(\beta'(y')) = \alpha(\beta(y')) \succeq y'$.
    Otherwise, if~$\beta(y') \in X^\eps$, we get~$\beta'(y') = \an{x}{f_1(x) + \eps} \succeq \beta(y')$.
    The point~$\beta(y')$ is not within~$X^\eps$, so as~$\alpha'$ preserves ancestors,~$\alpha'(\beta(y'))$ is an ancestor of~$y'$.
    Therefore~$\alpha'(\beta'(y'))$ is an ancestor of~$y'$.
    So,~$\I'$ is an interleaving.
    Moreover, the~$\Q'$-shift of~$\I'$ is strictly less than~$\resdist{\P}$.

    \proofsubparagraph{Case 2: $y$ is not a critical point.}
    We choose~$\eps$ as follows.
    First, let~$y^*$ be the highest critical descendant of~$y$; this is a strict descendant of~$y$.
    Let~$\eps_1 > 0$ be such that (i) $f_2(y^*) \le f_2(y) - \eps_1$.
    Secondly, let~$\eps_2 > 0$ be such that (ii) there are no~$\Q$-critical points (beside potentially the point $x$ itself) in the set~$X^{\eps_2}$.
    Such a value exists, because there are only finitely many~$\Q$-critical points.
    Thirdly, let~$\eps_3 > 0$ be such that for all strict descendants~$x'$ of~$x$, it holds that~$f_2(\alpha(x')) < f_2(y) - \eps_3$.
    Such a value~$\eps_3$ exists by Claim~\ref{claim:A}.
    Note that~$x$ does not necessarily have strict descendants; in that case any~$\eps_3 > 0$ suffices.
    Lastly, let~$\eps_4 > 0$ be such that~$\resdist{\Q} + \eps < \resdist{\P}$.
    Let~$\eps \coloneqq \min(\eps_1, \eps_2, \eps_3, \eps_4)$.
    For~$x' \in T_1$ we define the map~$\alpha'$ as follows:
    \[
        \alpha'(x') \coloneqq \begin{cases}
            \an{y^*}{\max(f_1(x'), f_2(y)-\eps)}, & \text{if } x' \in X^\eps, \\
            \alpha(x') & \text{otherwise.}
        \end{cases}
    \]
    Observe that~$f_2(y^*) \le f_2(y) - \eps_1 \le f_2(y) - \eps$.
    Hence, the map~$\alpha'$ is well-defined.
    We now show that~$\alpha'$ is an up-map that extends~$\P.\phi$.

    Let~$x'$ be a point of~$T_1$.
    By construction, we directly have~$f_1(\alpha'(x')) \ge f_1(x')$.
    To see that~$\alpha'$ preserves ancestors, observe that~$\alpha'(x')$ always lies on the same edge as~$\alpha(x')$.
    Moreover, by choice of~$\eps_3$, the point~$y^*$ is an ancestor of all descendants of~$x$.
    It follows that~$\alpha'$ preserves ancestors.
    Next, to see that~$\alpha'$ extends~$\P.\phi$, suppose~$x' \in \dom(\P.\phi)$.
    If~$x' \notin X^\eps$, then we have~$\alpha'(x') = \alpha(x') \succeq \P.\phi(x')$.
    Otherwise, by choice of~$\eps$, we must have~$x' = x$.
    Since~$y^*$ is the highest descendant critical point of~$y$, the point~$\P.\phi(x)$ must be a descendant of~$y^*$.
    Hence, by construction, we indeed get~$\alpha'(x) \succeq y^* \succeq \P.\phi(x)$.
    Lastly, by construction, the~$\Q'.\phi$-shift of~$\alpha'$ is strictly less than~$\resdist{\P}$. 

    To define the map~$\beta'$, we consider two more cases.
    
    \proofsubparagraph{Case 2a: $x$ is not a leaf.}
    If $x$ is not a leaf, we let~$\beta' = \beta$.
    Trivially,~$\beta'$ is an up-map that extends~$\P.\psi$ and whose~$\Q'.\psi$-shift is strictly less than~$\resdist{\P}$.
    It remains to show that the pair~$(\alpha', \beta')$ is an interleaving.

    Let~$x' \in T_1$.
    If~$x' \notin X^\eps$, we have~$\beta'(\alpha'(x)) = \beta(\alpha(x))$, which directly implies that~$\beta'(\alpha'(x))$ is an ancestor of~$x$.
    Otherwise, we have~$x' \in X^\eps$.
    Since~$x$ is not a leaf, there must be a descendant~$x''$ of~$x$.
    In particular, by choice of~$\eps$, it holds that~$\alpha'(x'') = \alpha(x'')$ is a strict descendant of~$\alpha'(x')$.
    The point~$x''$ does not lie within~$X^\eps$, so we have~$\beta'(\alpha'(x'')$ is an ancestor of~$x''$.
    As~$\beta'$ preserves ancestors and~$\alpha'(x')$ is an ancestor of~$\alpha'(x'')$, we obtain that~$\beta'(\alpha'(x'))$ is an ancestor of~$x''$.
    It follows that~$\beta'(\alpha'(x'))$ is an ancestor of~$x'$.
    Lastly, let~$y' \in T_2$; we have~$\beta'(y') = \beta(y')$.
    Moreover, by construction, the points~$\alpha'(\beta(y'))$ and~$\alpha(\beta(y'))$ lie on the same edge.
    Since~$\I$ is an interleaving, it directly follows that~$\alpha'(\beta'(y'))$ is an ancestor of~$y'$.
    
    \proofsubparagraph{Case 2b: $x$ is a leaf.}
    If~$x$ is a leaf, we also modify~$\beta$.
    Specifically, we use~$Y_\eps$ to denote the set of descendants~$y'$ of~$y$ with height~$f_2(y) - f_2(y') < \eps$. 
    For~$y' \in T_2$, we define~$\beta'$ as follows:
    \[
        \beta'(y') \coloneqq \begin{cases}
            \beta(y), & \text{if } y' \in Y_\eps, \\
            \beta(y') & \text{otherwise.}
        \end{cases}
    \]

    We argue that~$\beta'$ is an up-map that extends~$\P.\psi$.
    Let~$y' \in T_2$.
    By construction, we get~$f_1(\beta'(y')) \ge f_1(\beta(y')) \ge f_2(y')$.
    Moreover, if~$y' \in Y_\eps$, its image~$\beta'(y')$ is equal to the image of its lowest ancestor not in~$Y_\eps$; it directly follows that~$\beta'$ preserves ancestors.
    Similarly, since~$\beta'(y')$ is an ancestor of~$\beta(y')$, it follows that if~$y' \in \dom(\P.\psi)$ we have~$\beta'(y')$ is an ancestor of~$\P.\psi(y')$.
    So,~$\beta'$ is an up-map that extends~$\P.\psi$.
    
    Next, we show that the~$\Q'.\psi$-shift of~$\beta'$ is strictly less than by~$\resdist{\P}$.
    Since~$\I$ is an optimal~$\Q$-extension, the~$\Q$-shift of each arrow of~$\I$ is bounded by~$\resdist{\Q}$.
    Let~$y' \in T_2$.
    By construction,~$\Q'.\psi$ is equal to~$\Q.\psi$.
    If~$y'$ does not lie within~$Y_\eps$, the~$\Q'.\psi$-shift of~$(y', \beta'(y'))$ is bounded by~$\resdist{\Q}$, which is strictly less than~$\resdist{\P}$.
    Otherwise, we have~$y' \in Y_\eps$.
    The~$\Q$-shift of the arrow~$(y, \beta(y))$ is bounded by~$\resdist{\Q}$.
    This means that either (a)~$\shift{}(y, \beta(y)) \le \resdist{\Q}$, or (b) the arrow~$(y, \beta(y))$ lies inside the fan of~$\Q.\psi$.
    In case (a), by choice of~$\eps_4$, it follows that~$\shift{}(y', \beta(y)) \le \resdist{\Q} + \eps < \resdist{\P}$.
    In case (b), there is a descendant~$y''$ of~$y$ such that~$y'' \in \dom(\Q.\psi)$ and~$\Q.\psi(y'') = \beta(y)$.
    The point~$y''$ is a~$\Q$-critical point, so it must be a descendant of~$y^*$.
    By choice of~$\eps_1$, it follows that~$y''$ is a descendant of~$y'$.
    Hence, the arrow~$(y', \beta'(y')) = (y', \beta(y))$ lies within the fan of~$\dom(\Q'.\phi)$.
    In both cases, the~$\Q'$-shift of~$(y', \beta'(y'))$ is strictly less than~$\resdist{\P}$.
    Hence, the~$\Q'$-shift of~$\beta'$ is strictly less than~$\resdist{\P}$.

    It remains to show that~$(\alpha',\beta')$ is an interleaving.
    Observe that for all~$y' \in T_2$, the point~$\beta'(y')$ is an ancestor of~$\beta(y')$. 
    Let~$x' \in T_1$.
    If~$\alpha'(x') = \alpha(x')$, then we obtain~$\beta'(\alpha'(x')) = \beta'(\alpha(x')) \succeq \beta(\alpha(x'))$.
    As~$\I$ is an interleaving, it follows that~$\beta'(\alpha'(x'))$ is an ancestor of~$x'$.
    Otherwise, if~$\alpha'(x') \neq \alpha(x')$, then, by construction, the point~$\alpha'(x')$ is an ancestor of~$y^*$.
    In particular, this implies that~$\beta'(\alpha'(x'))$ is an ancestor of~$\beta(y)$.
    Since~$\I$ is an optimal~$\Q$-extension, we know that~$\alpha(x) = y$.
    Moreover, it must be the case that~$\beta(y)$ is an ancestor of~$x$.
    It follows that~$\beta'(\alpha'(x'))$ is also an ancestor of~$x$, and thus also of~$x'$.    
    Finally, let~$y' \in T_2$.
    Observe that for all~$x' \in T_1$, the point~$\alpha'(x')$ lies on the same edge as~$\alpha(x')$.
    If~$\beta'(y') = \beta'(y)$, then the point~$\alpha'(\beta'(y')) = \alpha'(\beta(y'))$ lies on the same edge as~$\alpha(\beta(y')$.
    As~$\I$ is an interleaving, it follows that~$\alpha'(\beta'(y'))$ is an ancestor of~$y'$.
    Otherwise,~$\beta'(y')$ is an ancestor of~$\beta(y')$.
    Since~$\alpha'$ preserves ancestors, we get that~$\alpha'(\beta'(y'))$ is an ancestor of~$\alpha'(\beta(y'))$.
    The point~$\alpha'(\beta(y'))$ lies on the same edge as~$\alpha(\beta(y'))$, so again, as~$\I$ is an interleaving, it follows that~$\alpha'(\beta(y'))$ is an ancestor of~$y'$.
    So, the point~$\alpha'(\beta'(y'))$ is also an ancestor of~$y'$.


    \proofsubparagraph{Concluding the proof.}
    In all cases, we obtain an interleaving~$\I'$ that has~$\Q'$-shift strictly less than~$\resdist{\P}$.
    In other words,~$\Q'$ is a~$\P$-augmentation, and we reach a contradiction.
\end{proof}

\subsection{Constructing a Locally Correct Interleaving}\label{sec:locally-correct-partial-merge-tree-matchings}
We show that a locally correct interleaving always exists by incrementally constructing one.
Specifically, we iteratively replace~$\P$ with a finite minimal~$\P$-augmentation, and we show that after a finite number of iterations the~$\P$-residual interleaving distance becomes zero, at which point we stop iterating.
We conclude by showing that for the resulting partial interleaving~$\P$, any optimal~$\P$-extension is locally correct.
For this, we maintain that~$\P$ remains ``partially'' \emph{locally correct} throughout the entire construction.
Moreover, we show that after each step of the construction, $\P$ ``specifies'' at least one additional vertex of the input merge trees.
Since each vertex can be specified at most once, it follows that the construction terminates.

\subparagraph{Minimal augmentations.}
Formally, let~$\mathcal{P}$ be a partial interleaving and assume~$\resdist{\P} > 0$.
Let~$\Q$ be a finite minimal~$\P$-augmentation.
By definition,~$\resdist{\Q} < \resdist{\P}$.
If for all restrictions~$\R$ of~$\P$ it holds that~$\resdist{\R} \ge \shift{\R}(\P)$, we say~$\P$ is \emph{locally correct}.
Note that for a complete interleaving this definition is equivalent to Definition~\ref{def:locally-correct} (this follows from Observation~\ref{obs:locally-correct}).
\begin{lemma}\label{lem:dominant-locally-correct}
Let~$\P$ be a finite interleaving and suppose~$\resdist{\P} > 0$.
Let~$\Q$ be a finite minimal~$\P$-augmentation.
If~$\P$ is dominant and locally correct, then so is~$\Q$.
\end{lemma}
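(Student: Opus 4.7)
The plan is to dispatch dominance by a short appeal to Lemma~\ref{lem:optimal-uses}, and to obtain local correctness of $\Q$ by transferring local correctness of $\P$ through the natural restriction of $\R$ to the domain of $\P$, followed by a case analysis on which side of $\P$ the maximum $\R$-shift of $\Q$ is realized. Dominance itself is immediate: by Lemma~\ref{lem:optimal-uses}, $\Q$ uses $\P$ and every arrow of $\Q\setminus\P$ has shift exactly $\resdist{\P}$, while each arrow of $\P$ has shift strictly greater than $\resdist{\P}$ by dominance of $\P$; combined with $\resdist{\Q}<\resdist{\P}$ (which holds because $\Q$ is an augmentation), every arrow of $\Q$ has shift strictly greater than $\resdist{\Q}$.

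For local correctness, fix an arbitrary restriction $\R$ of $\Q$ and let $\R_P$ denote the further restriction of $\R$ to $\dom(\P.\phi)$ and $\dom(\P.\psi)$. Because $\Q$ uses $\P$, this $\R_P$ is simultaneously a restriction of $\R$ and a restriction of $\P$, so local correctness of $\P$ supplies $\resdist{\R_P}\ge\shift{\R_P}(\P)$. The technical hinge of the proof will be the following comparison: for any interleaving $\I$ and any arrow $b$ of $\I$, $\shift{\R_P}(b) \le \max\{\shift{\R}(b),\resdist{\P}\}$. Indeed, $\shift{\R_P}(b)>\shift{\R}(b)$ can occur only when $b\in F[A_\R]\setminus F[A_{\R_P}]$, meaning $b$ lies in the fan of some arrow of $\R\setminus\R_P\subseteq A_{\Q\setminus\P}$, which by Lemma~\ref{lem:optimal-uses} has shift exactly $\resdist{\P}$, forcing $\shift{}(b)\le\resdist{\P}$.

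Set $\delta_\R\coloneqq\shift{\R}(\Q)$. If $\delta_\R>\resdist{\P}$, the arrow $a$ of $\Q$ realizing $\delta_\R$ must lie in $\P$, since arrows of $\Q\setminus\P$ have shift only $\resdist{\P}$. Then $a\notin F[A_\R]\supseteq F[A_{\R_P}]$, so $\shift{\R_P}(a)=\shift{}(a)=\delta_\R$; hence $\resdist{\R_P}\ge\shift{\R_P}(\P)\ge\delta_\R>\resdist{\P}$. Every complete $\R$-extension $\I$ is also an $\R_P$-extension, so $\shift{\R_P}(\I)\ge\resdist{\R_P}>\resdist{\P}$; the comparison above then forces $\shift{\R}(\I)=\shift{\R_P}(\I)\ge\delta_\R$, giving $\resdist{\R}\ge\delta_\R$ as desired.

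The remaining case $\delta_\R=\resdist{\P}$, necessarily realized by some $a\in\Q\setminus\P$ outside $F[A_\R]$, is where I expect the real work. I plan to argue by contradiction: assuming a complete $\R$-extension $\I$ with $\shift{\R}(\I)<\resdist{\P}$ exists, I will exhibit a strictly smaller $\P$-augmentation $\Q'\subsetneq\Q$, contradicting the minimality of $\Q$. The natural candidate is $\Q'\coloneqq\Q\setminus\{a\}$ (together with the symmetric partner of $a$ if $a$ witnesses a zigzag critical pair); property~(i) of augmentations is inherited from $\Q$, so the task reduces to producing a complete $\Q'$-extension whose $\Q'$-shift is strictly less than $\resdist{\P}$. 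I plan to build such an extension by locally modifying $\I$ in a small neighborhood of the critical-pair points witnessing $a$, using the fact that $a$ uses a realizing critical pair of $\P$ (Lemma~\ref{lem:only-critical-pairs}) and mimicking the constructive surgery in the proofs of Lemmas~\ref{lem:extension-critical-values} and~\ref{lem:only-critical-pairs}. The main obstacle will be to simultaneously extend all remaining arrows of $\Q'$, preserve partial-interleaving compatibility, and keep the $\Q'$-shift below $\resdist{\P}$; I anticipate needing separate arguments for the arrow-critical-pair and zigzag-critical-pair subcases.
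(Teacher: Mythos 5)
Your dominance argument and your case $\delta_\R>\resdist{\P}$ are sound (for that case the paper instead compares $\resdist{\R}$ with $\resdist{\R'}$ via Corollary~\ref{cor:helper-lemma}, but your fan-comparison inequality works). The genuine gap is the case $\delta_\R=\resdist{\P}$, which you leave as a plan, and the plan is the wrong tool. You propose to take a hypothetical complete $\R$-extension $\I$ with $\shift{\R}(\I)<\resdist{\P}$ and, by surgery near $a$, turn it into a complete extension of $\Q'=\Q\setminus\{a\}$ with $\Q'$-shift below $\resdist{\P}$. But $\I$ is only required to extend $\R$, and $\Q'$ in general contains many arrows absent from $\R$ (arrows of $\P$ as well as other arrows of $\Q\setminus\P$); nothing forces $\I$ to extend those, and no modification confined to a neighborhood of the critical pair witnessing $a$ can impose these global constraints while controlling the $\Q'$-shift. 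Carrying this out would essentially mean redoing the constructions of Lemmas~\ref{lem:extension-critical-values} and~\ref{lem:only-critical-pairs} in a setting where no construction is needed.

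The missing idea is that minimality of $\Q$ applies to $\R$ directly. In your remaining case, every arrow of $\P$ that is missing from $\R$ must lie in $F[A_\R]$ (otherwise its $\R$-shift equals its shift, which exceeds $\resdist{\P}$ by dominance, putting you back in your first case); such an arrow is fanned by an arrow of $\R$ with the same target and lower source. Adjoin all missing arrows of $\P$ to $\R$ to obtain $\hat{\R}$: then $F[A_{\hat{\R}}]=F[A_\R]$, every complete $\R$-extension extends $\hat{\R}$ (ancestor preservation pushes the fanned constraints up), and all residual shifts agree, so $\resdist{\hat{\R}}=\resdist{\R}$. Moreover $\hat{\R}$ uses $\P$, all its arrows are arrows of $\Q$ and hence have $\P$-shift at most $\resdist{\P}$, $\Q$ extends $\hat{\R}$, and $\hat{\R}\neq\Q$ because $a\notin F[A_\R]$. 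So if $\resdist{\R}<\resdist{\P}$ held, $\hat{\R}$ would be a $\P$-augmentation properly extended by $\Q$, contradicting minimality; hence $\resdist{\R}\ge\resdist{\P}=\delta_\R$, as required. This is exactly how the paper's proof disposes of this situation (its subcase $\R'=\P$), with no constructive step at all, while it routes the other subcase through local correctness of $\P$ and Corollary~\ref{cor:helper-lemma}.
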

\begin{proof}
    We first argue that~$\Q$ is dominant.
    Let~$a$ be an arrow of~$\Q$.
    If~$a$ is also an arrow of~$\P$, then from the fact that~$\P$ is dominant we know that~$\shift{}(a) > \resdist{\P} > \resdist{\Q}$.
    Otherwise, $a$ must be an arrow of the bottleneck~$\Q \setminus \P$.
    By Lemma~\ref{lem:only-critical-pairs}, it then directly follows that~$\shift{}(a) = \resdist{\P} > \resdist{\Q}$.

    Next, we argue that~$\Q$ is locally correct.
    Consider a restriction~$\R$ of~$\Q$; we need to show that~$\resdist{\R} \ge \shift{\R}(\Q)$.
    If~$\R = \Q$, we immediately obtain~$\shift{\R}(\Q) = 0 \le \resdist{\R}$.
    Otherwise, let~$\R'$ be the restriction of~$\R$ to the domains of~$\P.\phi$ and~$\P.\psi$.
    Since~$\Q$ uses~$\P$, it follows that~$\R'$ uses~$\P$ and is hence a restriction of~$\P$.
    If~$\R' = \P$, then by definition of an augmentation we know that the~$\R'$-shift of any arrow of~$\Q$ is~$\resdist{\P}$.
    We obtain~$\shift{\R'}(\Q) = \resdist{\P}$.
    Moreover, since~$\R \neq \Q$, the~$\R$-shift of at least one arrow of~$\Q \setminus \P$ is non-zero.
    We obtain~$\shift{\R}(\Q) = \shift{\R'}(\Q) = \resdist{\P}$.
    Since~$\Q$ is a minimal augmentation and extends~$\R$, we know that~$\R$ cannot be an augmentation of~$\P$.
    It follows that~$\resdist{\R} \ge \resdist{\P}$.
    Putting it together, we obtain~$\resdist{\R} = \shift{\R}(\Q)$.
    
    Lastly, if~$\R' \neq \P$, then since~$\P$ is locally correct and~$\R'$ is a restriction of~$\P$, we have
    \begin{equation}\label{eq:helper}
        \resdist{\R'} \ge \shift{\R'}(\P).
    \end{equation}
    We first show that~$\shift{\R'}(\P) \ge \shift{\R}(\Q)$.
    First, none of the arrows of~$\R \setminus \R'$ are arrows of~$\P$, so~$\shift{\R'}(\P) = \shift{\R}(\P)$.
    Next, the~$\R$-residual shift of~$\Q$ is the maximum of the~$\R$-residual shifts of~$\P$ and~$\Q \setminus \P$.
    Since~$\P$ is dominant, we know that the shift of any arrow of~$\P$ is strictly greater than~$\resdist{\P}$.
    As~$\P \setminus \R$ is not empty, it follows that~$\shift{\R}(\P) > \resdist{\P}$.
    On the other hand, by definition of an augmentation, we know that the shift of any arrow of~$\Q \setminus \P$ is at most~$\resdist{\P}$.
    So, we get~$\shift{\R}(\Q) = \max(\shift{\R}(\P), \shift{\R}(\Q \setminus \P)) = \shift{\R}(\P)$.
    Combining, we obtain~$\shift{\R'}(\P) \ge \shift{\R}(\Q)$.

    Next, we argue that~$\resdist{\R} \ge \resdist{\R'}$.
    By construction,~$\R$ extends~$\R'$.
    As observed before, we have~$\shift{\R'}(\P) = \shift{\R}(\P) > \resdist{\P}$.
    Combining with Equation~\eqref{eq:helper}, we obtain~$\resdist{\R'} \ge \shift{\R'}(\P) > \resdist{\P}$.
    Let~$a$ be an arrow of~$\R \setminus \R'$.
    Then~$a$ is arrow of the bottleneck~$\Q \setminus \P$, so the shift of~$a$ is exactly~$\resdist{\P}$.
    In other words,~$\shift{\R'}(a) = \resdist{\P} < \resdist{\R'}$.
    This means that we can apply Corollary~\ref{cor:helper-lemma}, for the finite interleaving~$\R'$ and the~$\R'$-extension~$\R$, to obtain~$\resdist{\R} \ge \resdist{\R'}$.
    
    Thus, $
        \resdist{\R} \ge \resdist{\R'} \ge \shift{\R'}(\P) = \shift{\R}(\Q),
    $
    which concludes the proof.
\end{proof}

\noindent
Recall that for each arrow~$(x, y)$, with~$x \in T_1$ and~$y \in T_2$, of~$\P$, both~$x$ and~$y$ are critical points.
However, only the point~$x$ has a ``specified'' target in the other tree.
We say a point~$x \in T_1$ or~$y \in T_2$ is \emph{specified} by~$\P$, if~$x \in \dom(\P.\phi)$ or~$y \in \dom(\P.\psi)$ respectively (see Figure~\ref{fig:specified-points}).
We argue that if~$\P$ specifies all critical points that are not vertices, i.e.,~$C_1[\P] \setminus V(T_1) \subseteq \dom(\P.\phi)$ and~$C_2[\P] \setminus V(T_2) \subseteq \dom(\P.\psi)$, then so does any minimal augmentation.

\begin{figure}
    \centering
    \includegraphics{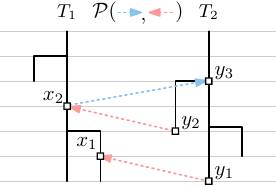}
    \caption{The points~$x_1$, $x_2$, $y_1$, $y_2$ and~$y_3$ are all critical, but only~$x_2$, $y_1$ and~$y_2$ are specified.}
    \label{fig:specified-points}
\end{figure}

\begin{lemma}\label{lem:specifies-subdivided}
    Let~$\P$ be a finite interleaving and suppose~$\resdist{\P} > 0$.
    Let~$\Q$ be a finite minimal~$\P$-augmentation.
    If~$\P$ is dominant and specifies all critical points that are not vertices, then so does~$\Q$.
    Moreover,~$\Q$ specifies at least one vertex that~$\P$ does not specify.
\end{lemma}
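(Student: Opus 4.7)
The plan is to combine the structural consequences of being a minimal augmentation (Lemmas~\ref{lem:only-critical-pairs} and~\ref{lem:optimal-uses}) with the guaranteed existence of a realizing critical pair in the bottleneck (Lemma~\ref{lem:bottleneck-critical-pair}), and then to do a short case analysis on the two types of critical pairs.

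For the first claim, I would start by observing that by Lemma~\ref{lem:optimal-uses} the interleaving $\Q$ uses $\P$, so every critical point $\P$ specifies remains specified by $\Q$. It therefore suffices to check that $\Q$ specifies every \emph{new} critical point, i.e., every point in $C_1[\Q] \setminus C_1[\P]$ or $C_2[\Q] \setminus C_2[\P]$. Such a point must occur as an endpoint of an arrow of $\Q \setminus \P$, and Lemma~\ref{lem:only-critical-pairs} says every arrow of $\Q \setminus \P$ comes from a realizing critical pair. Arrow critical pairs $(v, w)$ have both endpoints already in $C_1[\P] \cup C_2[\P]$, so they contribute nothing new. For a zigzag critical pair $(v_1, v_2) \subseteq V(T_1)$, both $v_1$ and $v_2$ are already critical (vertices), and the only potentially new critical point is the middle point $y \in T_2$ at height $f_1(v_1) + \tfrac{1}{2}(f_1(v_2) - f_1(v_1))$. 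But the definition of $\Q$ using such a pair puts $y \in \dom(\Q.\psi)$, so $\Q$ specifies $y$. The symmetric sub-case of a zigzag pair inside $V(T_2)$ is identical.

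For the ``moreover'' part, I would apply Lemma~\ref{lem:bottleneck-critical-pair} to obtain a realizing critical pair $b$ that $\Q \setminus \P$ uses. If $b = (v, w)$ is an arrow critical pair with $v \in C_1[\P]$, the arrow $(v, w)$ lies in $\Q.\phi$ but not in $\P.\phi$. Were $v \in \dom(\P.\phi)$, then ``$\Q$ uses $\P$'' would give $\P.\phi(v) = \Q.\phi(v) = w$, placing $(v, w)$ into $\P$ and contradicting $(v, w) \in \Q \setminus \P$. Thus $v \notin \dom(\P.\phi)$, and since $\P$ specifies every non-vertex critical point by hypothesis, $v$ must be a vertex of $T_1$; this vertex is now specified by $\Q$. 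In the zigzag case $b = (v_1, v_2) \subseteq V(T_1)$, the same reasoning applied to the arrow $(v_1, y) \in \Q.\phi \setminus \P.\phi$ forces $v_1 \notin \dom(\P.\phi)$ while $v_1 \in \dom(\Q.\phi)$ by construction; so $\Q$ specifies the vertex $v_1$ and $\P$ does not. The remaining symmetric sub-cases (arrow $w \mapsto v$ and zigzag inside $V(T_2)$) are analogous.

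The main subtlety I anticipate is correctly identifying which endpoint of the realizing critical pair yields the newly specified vertex; the uniform resolution is that ``$\Q$ uses $\P$'' rules out the source of any arrow of $\Q \setminus \P$ already lying in $\dom(\P.\phi)$ (or $\dom(\P.\psi)$), which together with the hypothesis that $\P$ specifies all non-vertex critical points forces that source to be an unspecified vertex. Beyond this observation, the remainder is direct bookkeeping on the taxonomy of critical pairs.
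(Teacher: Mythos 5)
Your proof is correct and follows essentially the same route as the paper: invoke Lemma~\ref{lem:optimal-uses} and Lemmas~\ref{lem:bottleneck-critical-pair}/\ref{lem:only-critical-pairs}, then case-split on arrow versus zigzag critical pairs used by the bottleneck, noting that arrow pairs create no new critical points while the middle point of a zigzag pair is immediately specified. The only cosmetic difference is that you derive ``the source vertex is unspecified by~$\P$'' directly from the definition of the relative difference together with ``$\Q$ uses~$\P$'', where the paper compares shifts via dominance; both arguments are valid and rest on the same lemmas.
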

\begin{proof}
	By Lemma~\ref{lem:only-critical-pairs}, we know that the bottleneck~$\Q \setminus \P$ uses only~$\P$-critical pairs.
	Consider such a pair~$b$.
	If~$b$ is an arrow critical pair, then without loss of generality assume~$b=(x, y)$ for a point~$x \in T_1$ and a point~$y \in T_2$; the case~$b=(y, x)$ is symmetric. 
    By definition of an arrow critical pair, both~$x$ and~$y$ are~$\P$-critical points.
    So the pair~$b$ does not create any additional~$\Q$-critical points.
    As~$\P$ already specifies all critical points that are not vertices and~$\Q$ uses~$\P$, we know that~$x$ must be a vertex.
    Since~$\P$ is dominant, the shift of any arrow of~$\P$ is strictly greater than~$\resdist{\P}$.
    Moreover, the shift of any arrow of~$\Q \setminus \P$ is exactly~$\resdist{\P}$.
    Hence, the vertex~$x$ cannot be specified by~$\P$: at least one new vertex is specified.
    
	If~$b$ is a zigzag critical pair, then without loss of generality assume~$b = (x_1, x_2)$ for two points~$x_1$ and~$x_2$ of~$T_1$; if it is a pair of points of~$T_2$ the argument is analogous.
	By definition of a zigzag critical pair, we know that both~$x_1 \in V(T_1)$ and~$x_2 \in V(T_1)$.
    Moreover, there is a point~$y \in T_2$ such that~$Q.\phi(x_1) = y$ and~$Q.\psi(y) =  x_2$.
    So,~$x_1$ is a vertex that is specified by~$\Q$ but not by~$\P$.
    Lastly,~$b$ creates a single~$\Q$-critical point~$y$, which is immediately specified.
\end{proof}

\subparagraph{Finishing the proof of Theorem~\ref{thm:existence-locally-correct}.}
Let~$\P$ be the empty interleaving.
Trivially,~$\P$ is dominant and locally correct.
Moreover, there are no (unspecified) critical points that are not vertices.
Now, we iteratively replace~$\P$ with a finite minimal~$\P$-augmentation until the~$\P$-residual interleaving distance becomes zero.
By applying Lemmas~\ref{lem:dominant-locally-correct} and~\ref{lem:specifies-subdivided} inductively, we maintain that~$\P$ is dominant and locally correct, and specifies all critical points that are not vertices.
Additionally, by Lemma~\ref{lem:specifies-subdivided}, each iteration specifies at least one additional vertex.
So after $i$ iterations, at least $i$ vertices are specified, and the total number of iterations is at most~$|V(T_1)| + |V(T_2)|$.
Afterwards, the residual interleaving distance is zero.
Lemma~\ref{lem:optimal-extension-locally-correct} guarantees that a locally correct interleaving exists; its proof is similar to that of Lemma~\ref{lem:dominant-locally-correct}.

\begin{restatable}{lemma}{thefinallemma}
\label{lem:optimal-extension-locally-correct}
    Let~$\P$ be a locally correct partial interleaving with~$\resdist{\P} = 0$.
    Any optimal~$\P$-extension is locally correct.
\end{restatable}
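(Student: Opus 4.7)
The strategy is to adapt the proof of Lemma~\ref{lem:dominant-locally-correct} to the setting where $\P$ is already ``fully resolved'', i.e., $\resdist{\P}=0$. The key preliminary observation is that any optimal $\P$-extension $\I=(\alpha,\beta)$ must actually \emph{use} $\P$, not merely extend it. Indeed, $\shift{\P}(\I)=0$ forces each arrow $(x,\alpha(x))$ with $x\in\dom(\P.\phi)$ to lie in $F[A_\P]$, so there is a descendant $x'\preceq x$ in $\dom(\P.\phi)$ with $\P.\phi(x')=\alpha(x)$. Combining $\alpha(x)\succeq \P.\phi(x)$ (since $\alpha$ extends $\P.\phi$) with $\P.\phi(x')\preceq \P.\phi(x)$ (ancestor preservation) pins $\alpha(x)=\P.\phi(x)$, and analogously $\beta(y)=\P.\psi(y)$ for $y\in\dom(\P.\psi)$.

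Given an arbitrary restriction $\R$ of $\I$, I would define $\R'$ as the restriction of $\R$ to $\dom(\P.\phi)$ and $\dom(\P.\psi)$. Because $\I$ uses $\P$, the arrows of $\R'$ are arrows of $\P$, so $\R'$ is simultaneously a restriction of $\P$. By Observation~\ref{obs:locally-correct} it suffices to show $\resdist{\R}\ge\shift{\R}(\I)$, and I would split on whether $\R'=\P$. If $\R'=\P$, then $A_\P\subseteq A_\R$ so $F[A_\P]\subseteq F[A_\R]$; since every arrow of $\I$ lies in $F[A_\P]$, every arrow of $\I$ has $\R$-shift zero, giving $\shift{\R}(\I)=0$ and hence $\resdist{\R}=0=\shift{\R}(\I)$.

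If $\R'\neq\P$, I would prove the chain $\resdist{\R}\ge\resdist{\R'}\ge\shift{\R'}(\P)\ge\shift{\R}(\I)$. The middle inequality is local correctness of $\P$ applied to $\R'$. The right inequality is proved arrow-by-arrow: for $a\in A_\P$, the containment $F[A_{\R'}]\subseteq F[A_\R]$ gives $\shift{\R}(a)\le\shift{\R'}(a)\le\shift{\R'}(\P)$; for $a=(x,\alpha(x))\in A_\I\setminus A_\P$ with $\shift{\R}(a)>0$, I would pick the witness $a^*=(x^*,\alpha(x))\in A_\P$ supplied by the $\P$-fan and observe that $a^*\notin F[A_\R]$ (else $a$ would be too), so $\shift{\R'}(\P)\ge\shift{\R'}(a^*)=\shift{}(a^*)\ge\shift{}(a)=\shift{\R}(a)$. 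The left inequality is trivial when $\resdist{\R'}=0$, and otherwise follows from Corollary~\ref{cor:helper-lemma} applied with $\R'$ in the role of $\P$ and $\R$ in the role of $\Q$.

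The main obstacle is verifying the strict-inequality hypothesis of Corollary~\ref{cor:helper-lemma}, namely $\shift{\R'}(a)<\resdist{\R'}$ for every $a\in A_\R$. For $a\in A_{\R'}$ this is immediate since $\shift{\R'}(a)=0$. For $a=(x,\alpha(x))\in A_\R\setminus A_{\R'}$, the very fact that $a\notin A_{\R'}$ forces $x\notin\dom(\P.\phi)$, whereas the witness $x^*$ produced by the $\P$-fan lies in $\dom(\P.\phi)$; hence $x^*\prec x$ \emph{strictly}, so $f_1(x^*)<f_1(x)$ and $\shift{}(a^*)>\shift{}(a)$. This strict gap propagates through $\resdist{\R'}\ge\shift{\R'}(\P)\ge\shift{\R'}(a^*)=\shift{}(a^*)>\shift{}(a)\ge\shift{\R'}(a)$, closing the argument.
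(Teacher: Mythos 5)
Your proof is correct, and it follows the same skeleton as the paper's: restrict $\R$ to the domains of $\P$, invoke local correctness of $\P$ on $\R'$, bound $\shift{\R}(\I)$ by $\shift{\R'}(\P)$ via fan witnesses coming from $\shift{\P}(\I)=0$, and get $\resdist{\R}\ge\resdist{\R'}$ from Corollary~\ref{cor:helper-lemma}. There are two points where you genuinely diverge, both to your credit. First, you prove explicitly that any optimal $\P$-extension \emph{uses} $\P$ (so that $\R'$ really is a restriction of $\P$); the paper only asserts ``observe that $\R'$ is also a restriction of $\P$'', and your pinning argument ($\alpha(x)\succeq\P.\phi(x)$ together with the fan witness forcing $\alpha(x)\preceq\P.\phi(x)$) is exactly the missing justification. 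Second, the paper's proof invokes dominance of $\P$ (to get $\shift{\R'}(\P)>0$ and hence $\resdist{\R'}>0$ before applying Corollary~\ref{cor:helper-lemma}), even though dominance is not among the stated hypotheses of the lemma and is only guaranteed by the surrounding construction; your version sidesteps this by treating $\resdist{\R'}=0$ separately and extracting strictness purely from the fact that the fan witness $x^*$ lies in $\dom(\P.\phi)$ while $x$ does not, so $x^*\prec x$ strictly. Thus you prove the lemma exactly as stated, which is slightly stronger than what the paper's proof establishes. Two cosmetic remarks: not every arrow of $\I$ lies in $F[A_\P]$ (an arrow may simply have shift $0$), but such arrows have $\R$-shift $0$ anyway, so your first case is unaffected; and in the final chain the equality $\shift{\R'}(a^*)=\shift{}(a^*)$ can fail when $a^*\in F[A_{\R'}]$, but then $a\in F[A_{\R'}]$ as well, so $\shift{\R'}(a)=0<\resdist{\R'}$ and the hypothesis of Corollary~\ref{cor:helper-lemma} still holds---this is the same dichotomy the paper (and your previous paragraph) spells out, so it is a one-line patch rather than a gap.
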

\begin{proof}
    Fix a restriction~$\R$ of~$\I$.
    By Observation~\ref{obs:locally-correct} it suffices to show that~$\resdist{\R} \ge \shift{\R}(\I)$.
    If~$\R$ uses~$\P$, then we immediately have~$\resdist{\R} = 0 = \shift{\R}(\I)$.
    Otherwise, let~$\R'$ be the restriction of~$\R$ to the domains of~$\P.\phi$ and~$\P.\psi$.
    Observe that~$\R'$ is also a restriction of~$\P$.
    Since~$\P$ is locally correct, we get
    \begin{equation}\label{eq:lc-1}
    \resdist{\R'} \ge \shift{\R'}(\P).
    \end{equation}

    Since~$\R$ does not use~$\P$, neither does~$\R'$.
    This means that there is at least one arrow of~$\P$ that is not in~$\R'$.
    We know that~$\P$ is dominant, so the shift of any such arrow is strictly larger than~$0$.
    As a result, we have~$\shift{\R'}(\P) > 0$.
    
    Let~$a = (x, y)$ be an arrow of~$\R \setminus \R'$.
    Without loss of generality we assume~$x \in T_1$ and~$y \in T_2$; the symmetric case is analogous.
    Since~$\I$ is an optimal~$\P$-extension, we have~$\shift{\P}(a) = 0$.
    In other words, we have either~$\shift{}(a) = 0$ or~$a \in F[A_\P]$.
    We argue that in both cases we have~$\shift{\R'}(a) < \shift{\R'}(\P)$.
    If~$\shift{}(a) = 0$, then we directly obtain~$\shift{\R'}(a) = 0 < \shift{\R'}(\P)$.
    Otherwise, if~$a \in F[A_\P]$, then there exists an arrow~$a' = (x', y) \in A_\P$ such that~$x'$ is a strict descendant of~$x$.
    This means that the shift of~$a'$ is strictly greater than the shift of~$a$.
    We consider two cases.
    First, if~$a' \in A_{\R'}$ then~$a$ is in the fan of~$A_{\R'}$ and we obtain~$\shift{\R'}(a) = 0 < \shift{\R'}(\P)$.
    Otherwise, if~$a' \notin A_{\R'}$, then~$\shift{\R'}(a') = \shift{}(a')$.
    Hence, we obtain~$\shift{\R'}(\P) \ge \shift{}(a') > \shift{}(a) \ge \shift{\R'}(a)$.
    In all cases, we obtain~$\shift{R'}(a) < \shift{\R'}(\P)$.
    Combining with Equation~\eqref{eq:lc-1}, we get~$\shift{\R'}(a) < \resdist{\R'}$.
    By Corollary~\ref{cor:helper-lemma}, it then follows that~$\resdist{\R} \ge \resdist{\R'}$.

    It remains to show that~$\shift{\R'}(\P) \ge \shift{\R}(\I)$.
    Since any arrow of~$\R \setminus \R'$ is not an arrow of~$\P$, we have~$\shift{\R'}(\P) = \shift{\R}(\P)$.
    Consider an arrow~$a = (x, y)$ of~$\I$.
    Again, without loss of generality assume~$x \in T_1$ and~$y \in T_2$; the symmetric case is analogous.
    The~$\P$-shift of any arrow~$a$ is~$0$, so either~$\shift{}(a) = 0$, or $a \in F[A_\P]$.
    In the first case, we get~$\shift{\R}(a) = 0 < \shift{\R}(\P)$.
    In the second case, there must be an arrow~$a' = (x',y)$ of~$\P$ with~$x' \preceq x$.
    It follows that the shift of~$a'$ is at least as great as the shift of~$a$, and thus~$\shift{\R}(a) \le \shift{\R}(a') \le \shift{\R}(\P)$.
    Since this holds for all arrows of~$\I$, we get~$\shift{\R}(I) \le \shift{\R}(\P) \le \shift{\R'}(\P)$.
    This concludes the proof.
\end{proof}

\subsection{Computing a Locally Correct Interleaving}\label{sec:computing}
We sketch how the proof of Theorem~\ref{thm:existence-locally-correct} can be translated into an incremental algorithm for computing a locally correct interleaving.
To compute the residual interleaving distance, we modify the dynamic program by Touli and Wang~\cite{touli2022fpt} to also incorporate constraints induced by partial interleavings.
We use the output to extract a finite set of critical pairs, that corresponds to an augmentation.
We then iteratively reduce the set of critical pairs by repeatedly calling the modified dynamic program until we obtain a (finite) minimal augmentation.
This minimal augmentation, in turn, reduces the residual interleaving distance; we iterate until it has dropped to zero.
To obtain a complete locally correct interleaving, we select suitable entries from the table filled by the modified dynamic program.

\section{Conclusion}\label{sec:conclusion}
We introduced locally correct interleavings between merge trees, which are optimal interleavings that are ``tight''.
For this, we introduced the residual interleaving distance: a generalized version of the interleaving distance that can accommodate constraints on the interleavings.
We presented a constructive proof that there always exists a locally correct interleaving.

We plan to further investigate the complexity of computing a locally correct interleaving.
In particular, we will study the effect of partial interleavings on the running time of the modified dynamic program.
Furthermore, we will explore the effects of replacing the dynamic program with the approximation algorithm~\cite{agarwal2018computing} or one of the variants or heuristics.
Moreover, it would be interesting to study other (local) criteria of interleavings.
Since locally correct interleavings are not necessarily unique, we could minimize their \emph{total} shift, for a suitable definition of total.
Furthermore, we could also consider a \emph{lexicographic} optimization, inspired by lexicographic Fréchet matchings~\cite{rote2014lexicographic}.
Lastly, we want to explore in which way locally correct interleavings can be translated to meaningful matchings between terrains.

\bibliography{references.bib}

@string{JoCG = "Journal of Computational Geometry"}

@string{ToA = "ACM Transactions on Algorithms"}

@string{LM = "La Matematica"}

@string{CG = "Computational Geometry"}

@string{AIMSM = "AIMS Mathematics"}

@string{ToVaCG = "IEEE Transactions on Visualization and Computer Graphics"}

@string{JoAaCT = "Journal of Applied and Computational Topology"}

@string{CGF = "Computer Graphics Forum"}

@article{agarwal2018computing,
	title      ={Computing the {G}romov-{H}ausdorff distance for metric trees},
	author     = {Agarwal, P.K. and
                  Fox, K. and
                  Nath, A. and
                  Sidiropoulos, A. and
                  Wang, Y.},
	journal    = ToA,
	volume     = {14},
	number     = {2},
	pages      = {1--20},
	year       = {2018},
	doi        = {10.1145/3185466}
}

@article{buchin2019locally,
    title       = {Locally correct {F}réchet-Matchings},
    author      = {Buchin, K. and
                   Buchin, M. and
                   Meulemans, W. and
                   Speckmann, B.},
    journal     = CG,
    volume      = {76},
    number      = {1},
    pages       = {1-18},
    year        = {2019},
    doi         = {10.1016/j.comgeo.2018.09.002}
}

@article{curry2022decorated,
    title={Decorated merge trees for persistent topology},
    author={Curry, J. and Hang, H. and Mio, W. and Needham, T. and Okutan, {O.\,B.}},
    journal=JoAaCT,
    volume={6},
    number={3},
    pages={371--428},
    year={2022},
    month=feb,
    doi={10.1007/s41468-022-00089-3}
}

@article{gasparovic2024intrinsic,
    title       = {Intrinsic interleaving distance for merge trees},
    author      = {Gasparovic, E. and
                   Munch, E. and
                   Oudot, S. and
                   Turner, K. and
                   Wang, B. and
                   Wang, Y.},
    journal     = LM,
    volume      = {4},
    number      = {1},
    pages       = {40--65},
    year        = {2025},
    doi         = {10.1007/s44007-024-00143-9}
}

@article{pegoraro2025graph-matching,
	title		= {A Graph-Matching Formulation of the Interleaving Distance between Merge Trees},
	author		= {Pegoraro, M.},
	journal		= AIMSM,
	volume		= {10},
	number		= {6},
	pages		= {13025--13081},
	year		= {2025},
	doi			= {10.3934/math.2025586}
}

@article{pont2022wasserstein,
    title={Wasserstein Distances, Geodesics and Barycenters of Merge Trees}, 
    author={Pont, M. and Vidal, J. and Delon, J. and Tierny, J.},
    journal=ToVaCG,
    year={2022},
    volume={28},
    number={1},
    pages={291-301},
    month=jan,
    doi={10.1109/TVCG.2021.3114839},
}

@article{sridharamurthy2020edit,
    title={Edit Distance between Merge Trees}, 
    author={Sridharamurthy, R. and Masood, {T.\,B.} and Kamakshidasan, A. and Natarajan, V.},
    journal=ToVaCG, 
    year={2020},
    volume={26},
    number={3},
    pages={1518--1531},
    doi={10.1109/TVCG.2018.2873612},
}

@article{touli2022fpt,
	title      = {{FPT}-algorithms for computing the {G}romov-{H}ausdorff and interleaving distances between trees},
	author     = {Touli, E.F. and
                  Wang, Y.},
	journal    = JoCG,
	volume     = {13},
	pages      = {89--124},
	year       = {2022},
	doi        = {10.20382/jocg.v13i1a4}
}

@article{yan2020structural,
	title={A structural average of labeled merge trees for uncertainty visualisation},
	author={Yan, L. and Wang, Y. and Munch, E. and Gasparovich, E. and Wang, B.},
	journal=ToVaCG,
	year={2020},
	volume={26},
	number={1},
    pages={832--842},
	doi={10.1109/TVCG.2019.2934242}
}

@article{yan2021scalar,
    author = {Yan, L. and Masood, {T.\,B.} and Sridharamurthy, R. and Rasheed, F. and Natarajan, V. and Hotz, I. and Wang, B.},
    title = {Scalar Field Comparison with Topological Descriptors: Properties and Applications for Scientific Visualization},
    journal = CGF,
    volume = {40},
    number = {3},
    pages = {599--633},
    year = {2021},
    doi = {https://doi.org/10.1111/cgf.14331},
}

@article{yan2022geometry,
	title={Geometry Aware Merge Tree Comparisons for Time-Varying Data with Interleaving Distances},
	author={Yan, L. and Masood, {T.\,B.} and Rasheed, F. and Hotz, I. and Wang, B.},
	journal= ToVaCG,
	volume={29},
	number={8},
	pages={3489--3506},
	year={2022},
	doi={10.1109/TVCG.2022.3163349}
}

@inproceedings{rote2014lexicographic,
	title={Lexicographic {F}r{\'e}chet matchings},
	author={Rote, G.},
	booktitle= "Proc. 30th European Workshop on Computational Geometry (EuroCG'14)",
	year={2014},
	doi={http://dx.doi.org/10.17169/refubium-19845}
}

@incollection{munch2019the,
	title={The $\ell^\infty$-{C}ophenetic metric for phylogenetic trees as an interleaving distance},
	author={Munch, E. and Stefanou, A.},
    booktitle={Research in {D}ata {S}cience},
    series={Association for Women in Mathematics Series},
    publisher={Springer International Publishing},
	volume={17},
	pages={109--127},
	year={2019},
	doi={10.1007/978-3-030-11566-1_5}
}

@inproceedings{beurskens2025relating,
    author={Beurskens, T. and Ophelders, T. and Speckmann, B. and Verbeek, K.},
    title={Relating Interleaving and {F}réchet Distances via Ordered Merge Trees},
    booktitle="Proc.\ ACM-SIAM Symposium on Discrete Algorithms (SODA25)",
    year="2025",
    publisher={Society for Industrial and Applied Mathematics},
    pages = {5027-5050},
    doi = {10.1137/1.9781611978322.170},
}

@unpublished{morozov2013interleaving,
  title={Interleaving distance between merge trees},
  author={Morozov, D. and Beketayev, K. and Weber, G.},
  note={Manuscript (accessed on 06-03-2025)},
  url={https://mrzv.org/publications/interleaving-distance-merge-trees/manuscript/},
  year={2013}
}

@misc{gasparovic2019intrinsic,
	title={Intrinsic interleaving distance for merge trees},
	author = {Gasparovich, E. and Munch, E. and Oudot, S. and Turner, K. and Wang, B. and Wang, Y.},
	date={2022-02-02},
	version=2,
	eprinttype={arxiv},
	eprint={1908.00063},
	eprint-class={cs.CG}
}
\end{document}